\documentclass[12pt]{article}
\usepackage[letterpaper, margin=1in]{geometry}

%% Please use the following statements for
%% managing the text and math fonts for your papers:
\usepackage{times}
\usepackage{bm}
\usepackage{natbib}

\usepackage[plain,noend]{algorithm2e}
\usepackage{bibunits}

\makeatletter
\renewcommand{\algocf@captiontext}[2]{#1\algocf@typo. \AlCapFnt{}#2} % text of caption
% default definition
\def\@algocf@capt@plain{top}
\renewcommand{\algocf@makecaption}[2]{%
  \addtolength{\hsize}{\algomargin}%
  \sbox\@tempboxa{\algocf@captiontext{#1}{#2}}%
  \ifdim\wd\@tempboxa >\hsize%     % if caption is longer than a line
    \hskip .5\algomargin%
    \parbox[t]{\hsize}{\algocf@captiontext{#1}{#2}}% then caption is not centered
  \else%
    \global\@minipagefalse%
    \hbox to\hsize{\box\@tempboxa}% else caption is centered
  \fi%
  \addtolength{\hsize}{-\algomargin}%
}
\makeatother

%%% my packages
\usepackage{amsmath}
\usepackage{url}
\usepackage{bbm,bm}
\usepackage{array,multicol,enumitem}
\usepackage{amsfonts} %, animate, subfig}
\usepackage{amssymb,mathtools}
\usepackage{amsmath,amsthm}
\usepackage{soul}
\usepackage{verbatim}
\usepackage{rotating}
\usepackage[xcolor]{changebar}   
\graphicspath{{./figures/} }

%%% User-defined macros should be placed here, but keep them to a minimum.

\def\v{{\varepsilon}}

% \addtolength\topmargin{35pt}

%%% my commands

\renewcommand{\hat}[1]{\widehat{#1}}
\renewcommand{\tilde}[1]{\widetilde{#1}}
\newcommand{\E}{E}
\renewcommand{\P}{pr}
\newcommand{\R}{\mathbb{R}}

\newcommand{\bTheta}{{\Theta}}
\newcommand{\bbeta}{\beta}

\newcommand{\y}{y}

\newcommand{\X}{X}
\newcommand{\bbxi}{\Xi}

\newcommand{\bOmega}{\Omega}
\newcommand{\bPsi}{\Psi}

\newcommand{\bPhi}{\Phi}
\newcommand{\bepsilon}{\varepsilon}
\newcommand{\var}{\text{var}}

\newcommand{\bs}[1]{\boldsymbol{#1}}

\newcommand{\s}[1]{\mathcal{#1}}

\newcommand{\U}{{U}}

\renewcommand{\v}{{v}}

\newcommand{\indep}{\perp \!\!\! \perp}

\pdfminorversion=4

% \numberwithin{equation}{section}
% \theoremstyle{plain}
\newtheorem{theorem}{Theorem}
%[section]
\newtheorem{lemma}[theorem]{Lemma}
\newtheorem{proposition}[theorem]{Proposition}
\newtheorem{condition}[theorem]{Condition}

\newtheorem{corollary}[theorem]{Corollary}
\newtheorem{definition}[theorem]{Definition}

\newtheorem{remark}[theorem]{Remark}

%%%%

\title{Regression of exchangeable relational arrays}

\author{Frank W. Marrs$^1$, Bailey K. Fosdick$^2$, and Tyler H. McCormick$^3$ \\[1ex]
LA-UR-20-30061 \\[1ex] 
}
\date{%
    $^1$Statistical Sciences, Los Alamos National Laboratory,\\ 
    \textit{fmarrs3@lanl.gov}\\[1ex]%
    $^2$Department of Statistics, Colorado State University\\[1ex]%
    $^3$Departments of Statistics and Sociology, University of Washington\\[2ex]%
    \today
}

\begin{document}
%\begin{bibunit}[biometrika]

\maketitle

\begin{abstract}
Relational arrays represent 
measures of association between pairs of actors, often in varied contexts or over time.  Trade flows between countries, financial transactions between individuals, contact frequencies between school children in classrooms, and dynamic protein-protein interactions are all examples of relational arrays.
Elements of a relational array are often modeled as a linear function of observable covariates. 
Uncertainty estimates for regression coefficient estimators -- and ideally the coefficient estimators themselves --  must account for dependence between elements of the array (e.g. relations involving the same actor) and existing estimators of standard errors that recognize such relational dependence rely on estimating extremely complex, heterogeneous structure across actors. 
This paper develops
a new class of parsimonious coefficient  and standard error estimators for regressions of relational arrays. 
We leverage an exchangeability assumption to derive standard error estimators that pool information across actors and are substantially more accurate than existing estimators in a variety of settings. 
 This exchangeability assumption is pervasive in network and array models in the statistics literature, but not previously considered when adjusting for dependence in a regression setting with relational data. 
 We demonstrate improvements in inference theoretically, via a simulation study, and by analysis of a data set involving international trade. \\
 
 \textit{Keywords: } array data; weighted network; dependent data; generalized least squares.
\end{abstract}

\section{Introduction}

Entries in relational arrays quantify pairwise interactions between actors that may be of multiple types or observed over time. 
Examples 
include annual flows of migrants between countries \citep{aleskerov2016network} and  interactions between students over the course of a semester \citep{han2016using}.  In economics, 
relational arrays are used to describe monetary transfers between individuals as part of informal insurance markets ~\citep[see][]{bardham1984land, fafchamps2006development, foster2001imperfect,attanasio2012risk,banerjee2013diffusion}).
Other examples of data that are naturally represented as relational arrays include gene expressions \citep{zhang2005general} and international relations \citep{fagiolo2008topological}. 

A relational array 
$Y=\left(y_{ijr} \right)$ where $(i,j = 1,\ldots,n, i\not=j, r = 1, \ldots ,R)$,
is composed of a series of $R$ matrices of size $(n \times n)$, each of which describes the directed pairwise relationships among $n$ actors of type $r$, e.g. time period $r$ or relation context $r$. 
The diagonal elements of each matrix, for example $y_{iir}$, are assumed to be undefined, as we do not consider, e.g., international relations of a country with itself.  
The relationship from actor $i$ to actor $j$ may differ than that from $j$ to $i$, $y_{ijr} \neq y_{jir}$, however, the methods we propose extend to the symmetric relation case (see Section~\ref{app_undir}).

The primary goal in our setting is inference for linear regressions exploring the effects of exogenous covariates on the values in the relational array, expressed as
\begin{align}
y_{ijr} = {\beta}^T x_{ijr}+\xi_{ijr},  \quad  (i,j=1,\ldots, m; \ i \not= j; \ r = 1, \ldots ,R),
\label{eq:lm}
\end{align}  
where $y_{ijr}$ is a continuous directed measure of the $r$th relation from actor $i$ to actor $j$, $x_{ijr}$ is a $(p \times 1)$ vector of covariates, and  $\xi_{ijr}$ is an unobserved, scalar random error. 
For example, considering informal insurance markets, \cite{fafchamps2007formation} examine how covariates such as geographical proximity and kinship relate to risk sharing relations after economic shocks.

A core  challenge in making inference on $\beta$ arises from the innate dependencies among error relations involving the same actor. 
For example, dependence often exists between trade relations involving the same country or between economic transfers originating from the same individual. This dependence may arise due to variation unaccounted for in the covariates, for example, from differences in production levels between nations or
from individual differences in risk aversion. 
Standard regression techniques 
may lead to poor estimates of $\beta$ and/or incorrect conclusions regarding the significance of the estimate of $\beta$.
Approaches to account for error dependence in relational arrays have appeared in the statistics, biostatistics, and econometrics literatures and can be characterized into two broad classes. 

The first set of approaches impose a  parametric model on the errors.  
Specifically, they either use latent variables to model the array measurements as conditionally independent given the latent structure \citep[see][]{holland1983stochastic, wang1987stochastic, hoff2002latent, li2002unified, hoff2005bilinear} or model the error covariance structure directly subject to a set of simplifying assumptions ~\citep[see][]{hoff2011separable,fosdick2014separable,hoff2015multilinear}.  While these methods allow for (possibly) improved estimation of $\beta$ and appropriate standard error estimators in the presence of relational dependence,
the accuracy of inference on $\beta$ depends on the extent to which the true error structure is consistent with the specified parametric model. In addition,  many of these models are estimated in a Bayesian paradigm using Markov chain Monte Carlo approaches, which are commonly computationally expensive to estimate.

The second set of approaches to accounting for relational dependence relies heavily on empirical estimates of the error structure based on the regression residuals, 
first proposed by \cite{fafchamps2007formation} and based on the spatial dependence work of \cite{conley1999gmm}.  
This framework is model agnostic, making as few assumptions as possible about the data generating process. One empirical approach estimates the regression coefficients using ordinary least squares, and then utilizes a sandwich covariance estimator -- which is robust to a wide array of error structures --  for the standard errors of the regression coefficients \citep{fafchamps2007formation, aronow2015cluster}. 
In finite samples, this estimator 
is hindered by the need to estimate a large number of covariance parameters with limited
observations (\cite{king2014robust} for a discussion in other contexts), and is the reason why~\cite{wakefield2013bayesian} suggests such estimators be labeled empirical rather than robust. 
We observe that standard errors from this empirical framework are often highly variable and are anticonservative.

In this work, we introduce an empirical estimation approach for relational arrays that incorporates an exchangeability assumption. 
This assumption is implicit in many of the model-based approaches discussed previously and is a hallmark of Bayesian hierarchical models within, and outside, the relational context (\cite{orbanz2015bayesian}). 
Our key contribution is to define the covariance matrix (and a corresponding estimator) of the relational error array under exchangeability.
Use of our parsimonious estimator produces superior estimates of $\beta$ and its standard errors relative to  existing approaches, 
and our estimator is easier to compute than existing Bayesian model-based  and exchangeable bootstrapping \citep{menzel2017bootstrap,  green2017bootstrapping} approaches. 
Reproduction code is available at \url{https://github.com/fmarrs3/netreg_public} and methods are implemented in \url{R} package \url{netregR}.

\section {Inference in relational regression}
\subsection{Estimation of regression coefficients}

We employ a least squares framework to perform inference on ${\beta}$ in the relational regression model in \eqref{eq:lm} \cite{aitkin1935least}. 
An unbiased estimator for $\beta$ is the ordinary least squares estimator, 
$\hat{\beta} = \left( X^T X \right)^{-1}X^T y$, where $X$ is an $\big(R n(n-1) \times p\big)$ matrix of $(p \times 1)$ covariate vectors $( x_{ijr} )$ and ${y}$ is a vectorized representation of $( y_{ijr} )$.
The least squares estimator is the best linear unbiased estimator for $\beta$ when the covariance matrix $\Omega = \var(y~\mid~X)$ is proportional to the identity matrix. Dependence is expected in relational data, e.g. between relations $(i,j,r)$ and $(i,k,r)$ which share actor $i$.
If 
$\Omega$ were known, 
the best linear unbiased estimator for $\beta$ is the generalized least squares estimator of \cite{aitkin1935least}, 
\begin{align}
    \hat{\beta}_{GLS} = (X^T \Omega^{-1} X)^{-1} X^T \Omega^{-1} y.
    \label{eq:gls}
\end{align}
In practice, $\Omega$ is unknown and must be estimated. Given estimator  $\hat{\Omega}$, alternating  estimation of ${\Omega}$ with \eqref{eq:gls} (replacing $\Omega$ with $\hat{\Omega}$ at each iteration) is termed feasible generalized least squares. When $\hat{\Omega}$ is consistent, feasible generalized least squares is asymptotically efficient for $\beta$ \citep{greene2003econometric, hansen2015econometrics}.

Regardless of whether the ordinary least squares estimator or \eqref{eq:gls} is used to estimate $\beta$, uncertainty estimates are required for inference. A common approach is to approximate 
the distribution  of the $\beta$ estimate 
as a multivariate normal random variable and construct confidence intervals using an estimator of its variance: in the ordinary least squares setting,
\begin{align}
    \var \left( \hat{\beta} \mid X \right) 
    &= (X^T X)^{-1} X^T \Omega X (X^T X)^{-1},
     \label{eq:ols_var}
\end{align}
and in the generalized least squares setting, 
\begin{align}
    \var \left( \hat{\beta}_{GLS}\mid X\right) 
    &= (X^T \tilde{\Omega}^{-1} X)^{-1}  X^T \tilde{\Omega}^{-1} \Omega \tilde{\Omega}^{-1} X (X^T \tilde{\Omega}^{-1} X)^{-1},
    \label{eq:gls_var}
\end{align}
where 
$\widetilde{\Omega}$ is the final estimate of $\Omega$ from the generalized least squares procedure.
Variance estimators are often constructed by substituting an estimator for $\Omega$ in  \eqref{eq:ols_var} and \eqref{eq:gls_var}, and are commonly termed sandwich estimators \citep{huber1967behavior,white1980heteroskedasticity}. 
Thus, inference for $\beta$ requires an estimator for $\Omega$,
regardless of how $\beta$ is estimated,
and
properties of the estimator of $\var( \hat{\beta} \mid X)$ depend strongly on the estimator of $\Omega$.

\subsection{Dyadic clustering estimator} 
\label{sec:DCest}
\cite{fafchamps2007formation}, \cite{cameron2011robust},  \cite{aronow2015cluster}, and~\cite{tabord2017inference}
propose and describe the properties of a flexible standard error estimator for relational regression which makes the sole assumption that two relations $(i,j,r)$ and $(k,l,s)$ are independent if $(i,j)$ and $(k,l)$ do not share an actor.
This assumption implies that $\text{cov}(y_{ijr},y_{kls}\mid X) =\text{cov}(\xi_{ijr},\xi_{kls} \mid X)=0$ for non-overlapping relation pairs, but  places no restrictions on the covariance elements for pairs of relations that share an actor.  Let $\Omega_{DC}$ denote the covariance matrix of $\xi$, subject to this non-overlapping pair independence assumption.  \cite{fafchamps2007formation} propose estimating each nonzero entry of $\Omega_{DC}$ with a product of residuals, i.e. using $e_{ijr}e_{iks}$ to estimate $\text{cov}(\xi_{ijr},\xi_{iks})$, where $e_{ijr} = y_{ijr} - \hat{\beta}^T x_{ijr}$.
The estimator $\hat{\Omega}_{DC}$ can be seen as that which takes the empirical covariance of the residuals defined by ${e} {e}^T$, where $e$ is a vector of the set of residuals $( e_{ijr} )$, and  introduces zeros to enforce the non-overlapping pair independence assumption.  
\cite{fafchamps2007formation} propose a sandwich variance estimator for $\var ( \hat{\beta} \mid X )$ in \eqref{eq:ols_var} based on $\hat{\Omega}_{DC}$, %which is equal to
\begin{equation}
\hat{V}_{DC} =
(X^TX)^{-1} X^T \hat{\Omega}_{DC} X (X^TX)^{-1}.
\label{eq:varc}
\end{equation}
We refer to $\hat{V}_{DC}$ as the { dyadic clustering  estimator} as it owes its derivation to the extensive literature on cluster-robust standard error estimators.

The dyadic clustering estimator  in \eqref{eq:varc} 
has the attractive properties that it is asymptotically consistent under wide range of error dependence structures 
and is fast to compute. 
However,
$\hat{\Omega}_{DC}$ estimates ${O}(R^2 n^3)$ nonzero covariance elements separately based on ${O}(R^2 n^2)$ dependent observations, and thus,
$\hat{V}_{DC}$ is inherently quite variable. 
Only when there is extreme heterogeneity in the true covariance structure is the dyadic clustering method optimal and it will suffer a loss of efficiency otherwise.

\section{Standard errors under exchangeability} 
\label{sec:structured}

\subsection{Exchangeability in relational models} 

 A common modeling assumption for relational and array structured errors is exchangeability.  Defined by de Finetti for a univariate sequence of random variables, exchangeability was generalized to array data by \cite{hoover1979relations} and \cite{aldous1981representations}.
 The errors in a relational data model are jointly exchangeable if the probability distribution of  the error array $\Xi = ( \xi_{ijr} )$ is invariant under any simultaneous permutation of the rows and columns, and secondary permutation of the third dimension.  Mathematically, this means
 $\text{pr}(\Xi)=\text{pr}\{ \Pi(\Xi) \},$ where $\Pi(\Xi) = \{\xi_{\pi(i)\pi(j)\nu(r)}\}$ is the error array with its indices reordered according to permutation operators $\pi$ and $\nu$.
 Intuitively, exchangeability in the regression context 
 means the observed covariates are sufficiently informative such that the labels of the rows and columns in the error array are uninformative. Similarly the ordering of the third dimension of the error array is uninformative to its distribution. This assumption may be appropriate when the third dimension of the array represents different contexts of observations (such as economic trade sectors) that have no inherent ordering, or when the third dimension represents time periods, but the bulk of the temporal variation is accounted for in the covariates. 
 Many of the 
conditionally independent 
parametric latent variable models cited in the Introduction have this joint exchangeability property  \citep{hoff2008modeling, bickel2009nonparametric}. 

\begin{figure}
  \begin{center}
\begin{tabular}{cc}
\includegraphics[height=.18\textheight]{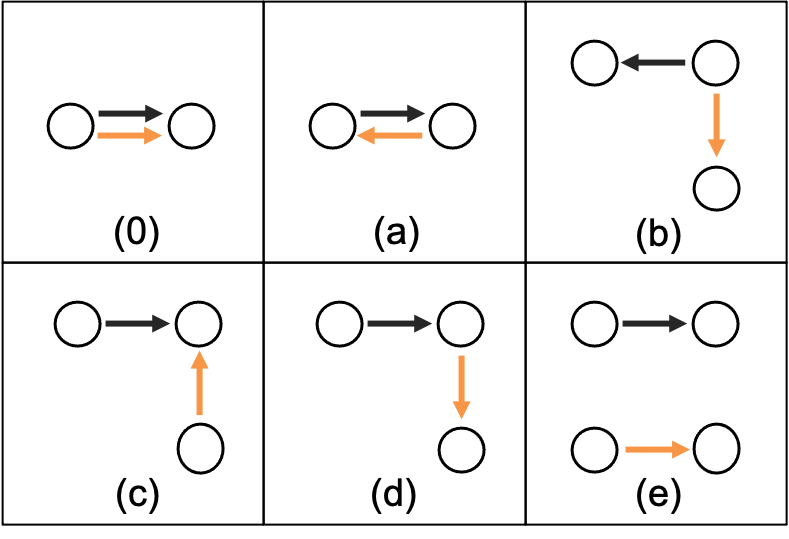}
&
\includegraphics[height=.2\textheight]{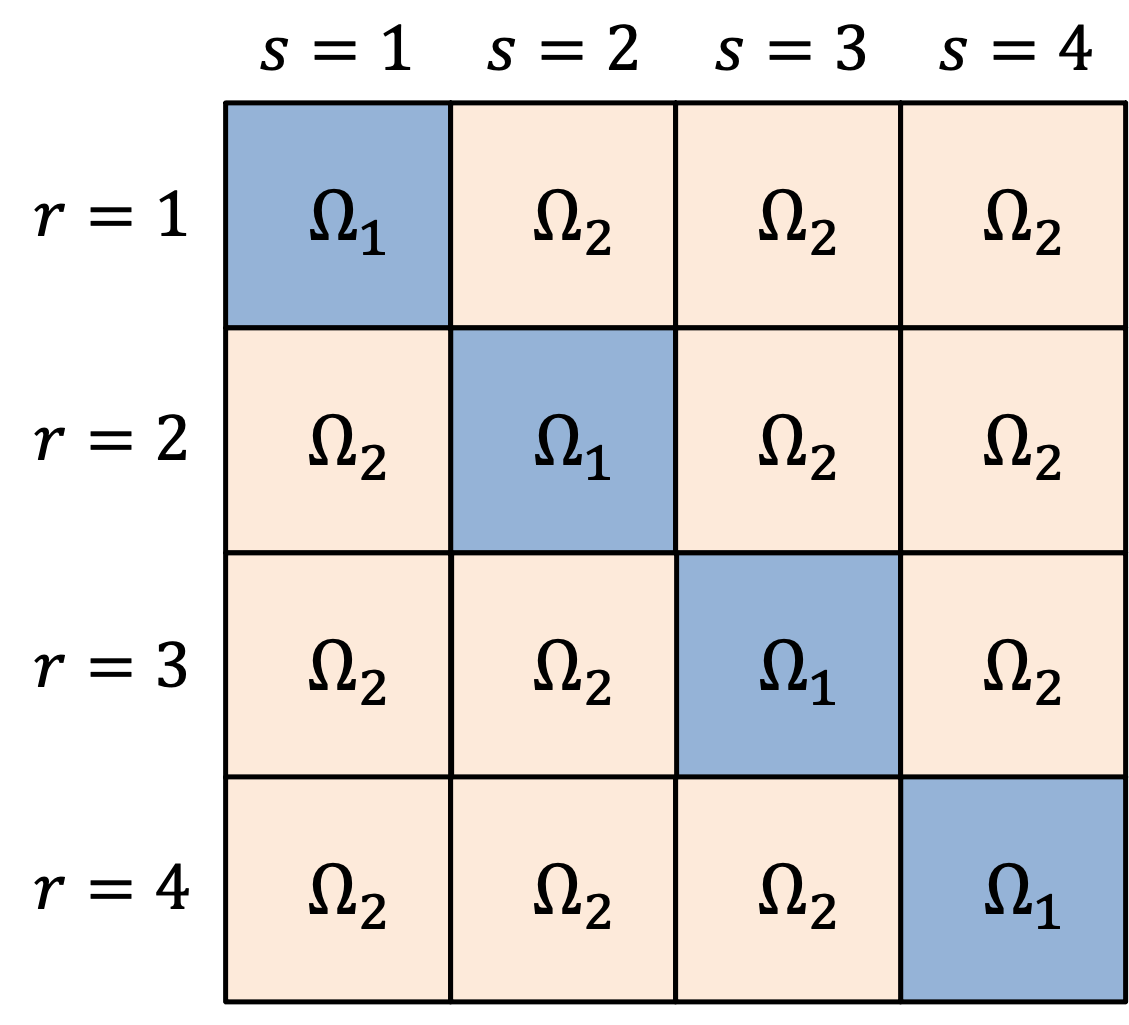}
\end{tabular}
  \end{center}
  \caption{Left: Six distinguishable configurations of relation pairs (black and orange arrows) in an exchangeable relational model with unlabeled actors, corresponding to twelve covariance values (six each for $s=r$ and $s\neq r$). 
  Right: Block structure in $\Omega_E$ for $R=4$, where blocks $\Omega_1$ and $\Omega_2$ correspond to $s=r$ and $s \neq r$, respectively.  } 
    \label{fig:directeddyad}
  \end{figure}

\subsection{Impact of exchangeability on covariance structure}
\cite{li2002unified} and \cite{hoff2005bilinear} describe several particular random effects models for $R=1$. The corresponding error covariance matrices have different entries depending on the model, yet all covariance matrices have at most six unique entries. A key contribution of this paper is to formalize and extend this observation, showing that {any} jointly exchangeable model for relational array $\Xi$ results in an $\Omega$ of the same form, with at most six unique terms when $R=1$ and at most twelve unique terms when $R>1$.

\begin{proposition}
If a probability model for a directed relational array  $\Xi$ is jointly exchangeable and has finite second moments, then the covariance matrix of $\Xi$ contains at most twelve unique values. 
\label{prop}
\end{proposition}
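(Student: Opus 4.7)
The plan is to exploit joint exchangeability to show that $\text{Cov}(\xi_{ij}, \xi_{kl})$ depends only on the orbit of the pair of ordered pairs $((i,j),(k,l))$ under the natural action of the symmetric group $S_n$ on actor labels, then to enumerate those orbits, and finally to merge pairs of orbits identified by the symmetry of the covariance functional.

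First, I would formalize the reduction to orbits. Joint exchangeability asserts $P(\Xi) = P(\Pi(\Xi))$ for every permutation $\pi \in S_n$, where $\Pi(\Xi) = \{\xi_{\pi(i)\pi(j)}\}$. In particular, the joint law of $(\xi_{ij}, \xi_{kl})$ coincides with that of $(\xi_{\pi(i)\pi(j)}, \xi_{\pi(k)\pi(l)})$, so under the finite-second-moment assumption
\[
\text{Cov}(\xi_{ij}, \xi_{kl}) = \text{Cov}(\xi_{\pi(i)\pi(j)}, \xi_{\pi(k)\pi(l)})
\]
for every $\pi$. Hence this covariance is determined entirely by the $S_n$-orbit of the ordered pair $((i,j),(k,l))$, and it suffices to count these orbits.

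Next, I would classify the orbits by the pattern of coincidences among the four actor indices, stratifying on $|\{i,j\} \cap \{k,l\}|$. When the intersection has size two there are two orbits: $(i,j)=(k,l)$ (the variance) and $(i,j)=(l,k)$ (reciprocation). When it has size one there are four orbits, corresponding to which roles coincide: common sender ($i=k$), common receiver ($j=l$), sender-then-receiver ($i=l$), and receiver-then-sender ($j=k$). When the intersection is empty there is a single orbit with all four indices distinct. This enumeration is disjoint and exhaustive, giving at most seven distinct covariance values.

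Finally, I would collapse the count from seven to six using the symmetry $\text{Cov}(\xi_{ij}, \xi_{kl}) = \text{Cov}(\xi_{kl}, \xi_{ij})$, which identifies the ``$i=l$'' and ``$j=k$'' orbits. Taking representatives $((a,b),(c,a))$ and $((a,b),(b,c))$ with $a,b,c$ distinct, a swap of the arguments in the first produces $((c,a),(a,b))$, and the relabeling $\pi:c\mapsto a,\ a\mapsto b,\ b\mapsto c$ sends this to $((a,b),(b,c))$. Thus the two orbits contribute the same value, and $\Omega$ has at most six unique entries: one variance together with the five covariance configurations shown in Figure~\ref{fig:directeddyad}. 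The only subtle step I anticipate is confirming that the orbit enumeration is both exhaustive and disjoint, but this reduces to a routine combinatorial verification rather than a real obstacle.
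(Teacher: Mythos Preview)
Your proposal is correct and follows essentially the same route as the paper's proof: use joint exchangeability to conclude that $\text{Cov}(\xi_{ij},\xi_{kl})$ depends only on the coincidence pattern among $i,j,k,l$, then enumerate those patterns. You are in fact slightly more careful than the paper, which lists the six cases directly without explicitly separating the ``$i=l$'' and ``$j=k$'' orbits and then merging them via the symmetry $\text{Cov}(\xi_{ij},\xi_{kl})=\text{Cov}(\xi_{kl},\xi_{ij})$ as you do.
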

The twelve (possibly) unique entries in $\Omega$ correspond to the
twelve distinguishable configurations of relation pairs $(i,j,r)$ and $(k,l,s)$ with unlabeled actors. The twelve configurations can be separated into two sets of six identical configurations of  relations $(i,j)$ and $(k,l)$ with unlabeled actors, as depicted in Figure~\ref{fig:directeddyad}, where each set corresponds to $r=s$ and $r\neq s$. A proof is provided in Section~\ref{app_prop_proof}.

\subsection{Covariance matrices of exchangeable relational arrays}
\label{sec:exch_cov_mat}
Similar to the dyadic clustering estimator, we assume non-overlapping relation pairs are independent, such that $\text{cov}(\xi_{kls} , \xi_{ijr}) = 0$, for any $s$ and $r$ when $(i,j,k,l)$ are distinct. This assumption sets two of the twelve parameters in $\Omega$ to zero.  We introduce a new class of covariance matrices which contain ten  possibly nonzero entries, $\phi_0^{(\eta)}, \phi_a^{(\eta)},\phi_b^{(\eta)}, \phi_c^{(\eta)},\phi_d^{(\eta)}, (\eta= 1,2)$, associated with the left panel in Figure~\ref{fig:directeddyad}.  The separation of covariances by $r=s$ and $r
\neq s$ implies that $\Omega$ consists of blocks of $\Omega_1$ and $\Omega_2$, each consisting of five nonzero terms for $r=s$ ($\eta=1$) and $r \neq s$ ($\eta=2$), respectively (Figure~\ref{fig:directeddyad}, right panel).
We define an {exchangeable covariance matrix} as any covariance matrix of this form and denote it $\Omega_{E}$.

\section{Exchangeable estimator definition and evaluation}
 \subsection{Exchangeable covariance estimator}
 \label{subsec:exch_estimator}
Consider relational regression models with $\Omega$  of the exchangeable form  $\Omega_E$.
The proposed {exchangeable estimator} of $\var(\hat{\beta} \mid X )$ is then
 \begin{equation}
\hat{V}_{E}= (X^TX)^{-1} X^T \hat{\Omega}_{E} X (X^TX)^{-1}, \hspace{.1in}    \hspace{.1in} 
\hat{\Omega}_{E} = 
\sum_{\eta=1}^2 \sum_{u=0}^d \hat{\phi}^{(\eta)}_u \mathcal{S}^{(\eta)}_u,
\label{eq:exch.est}
 \end{equation}
where $\mathcal{S}^{(\eta)}_u$ denotes the $\{ R n(n-1) \times R n(n-1) \}$ binary matrix with $1$'s in the entries corresponding to relation pairs of type $(u =0,a,b,c,d; \eta = 1,2)$ as defined in Figure \ref{fig:directeddyad}. 
We propose estimating the ten parameters in $\Omega$ by averaging the residual products that share the same index configurations, corresponding to (0)-(d) in Figure \ref{fig:directeddyad}.  For example, the estimate of ${\text{cov}}(\xi_{kls},\xi_{ijr})$, corresponding to $u=b$ and $\eta=2$, is
\begin{align}
\hat{\phi}_b^{(2)} =\binom{R}{2}^{-1}\frac{1} {n(n-1)(n-2)} \sum\limits_{r \neq s} \sum\limits_{i} \sum\limits_{j \neq i} e_{ijr}  \Big( \sum\limits_{k \neq j} e_{iks} - e_{ijs} \Big). \label{eq:resid_avg}
\end{align}
The remaining nine estimators for $(s=0,a,\ldots,e; \eta=1,2)$, are defined analogously,
and $\hat{\Omega}_{E} $ may be interpreted as the projection of $\hat{\Omega}_{DC}$ into the vector space over symmetric matrices of the form of $\Omega_E$. 
We provide theoretical details for the proposed exchangeable estimator, including asymtotic normality of least squares under exchangeability and consistency of $\hat{V}_E$ in Sections~\ref{app:theory}-~\ref{app:consistency}.

\subsection{Comparison of exchangeable estimator with dyadic clustering}
It is intuitive that the moment-based exchangeable estimator is consistent, and more efficient than the dyadic clustering estimator whenever the exchangeability assumption is satisfied. One might expect the highly parametrized dyadic clustering estimator to trade off high variance for reduced bias. However, we derive the result that the dyadic clustering estimator is biased downwards, and that this bias is larger than twice the bias of the exchangeable estimator.
One concludes that one tradeoff for the robustness of the dyadic clustering estimator is anticonservatism. The proof of Theorem~\ref{thm:bias} is provided in Section~\ref{app:bias}. We provide additional theoretical details for the proposed exchangeable estimator (comparison of mean-square error with dyadic clustering) in Section~\ref{app:mse}.

\begin{theorem}
\label{thm:bias}
Consider error vector $\xi$ and normally distributed covariate vector $x$ with mean zero, exchangeable covariance matrix, and bounded fourth moments, where
\[y_{ij} = \beta_1 + x_{ij} \beta_2 + \xi_{ij}.\]
Then, the dyadic clustering estimator for $\var( \hat{\beta}_2 )$ is biased downwards, 
\[ n^{2} \text{Bias}( \hat{V}_{DC} ) +  O(n^{-1/2}) \le -2n^{2}  |\text{Bias}( \hat{V}_{E} )| +  O(n^{-1/2}) \le 0, \]
noting that both $n^{2} \text{Bias}( \hat{V}_{DC} )$  and  $|n^{2} \text{Bias}( \hat{V}_{E} )|$ are $O(1)$.
\end{theorem}

We conducted a simulation study to compare the bias and 95\% confidence interval coverage when using the exchangeable and dyadic clustering estimators. We simulated from a model with three covariates (one each of binary, positive real, and real-valued) with exchangeable and non-exchangeable error models, for $R=1$ and $n=20,40,80,160,320$ (see Section~\ref{app_sim} for details). 
Figure~\ref{fig:sim_coverage} shows the estimated mean coverage and middle 95\% of coverages across various $X$ realizations. In all settings, the estimated mean coverage of the exchangeable estimator is closer to the nominal $0.95$ level than the dyadic clustering estimator.
This difference is most pronounced for the binary covariate, where there is reduced signal-to-noise relative to the other covariates.
The average bias of the dyadic clustering estimator is typically more than four times that of the exchangeable estimator under the exchangeable error model, confirming Theorem~\ref{thm:bias} and driving the poorer coverage performance 
(see Section~\ref{sec:sim_bias} for details).

 \begin{figure}[ht!]
\centering
  \begin{tabular}{rccc}
  \hspace{-.2in}&\hspace{-.2in}
    \hspace{1pt} binary

  \hspace{-.22in}&\hspace{-.22in}
  \hspace{1pt} positive real
  \hspace{-.22in}&\hspace{-.22in}
  \hspace{1pt} real-valued \\
  \begin{sideways}\hspace{.55in} Estimated coverage \end{sideways} 
  \hspace{-.2in}&\hspace{-.2in}
\includegraphics[width=.315\textwidth]{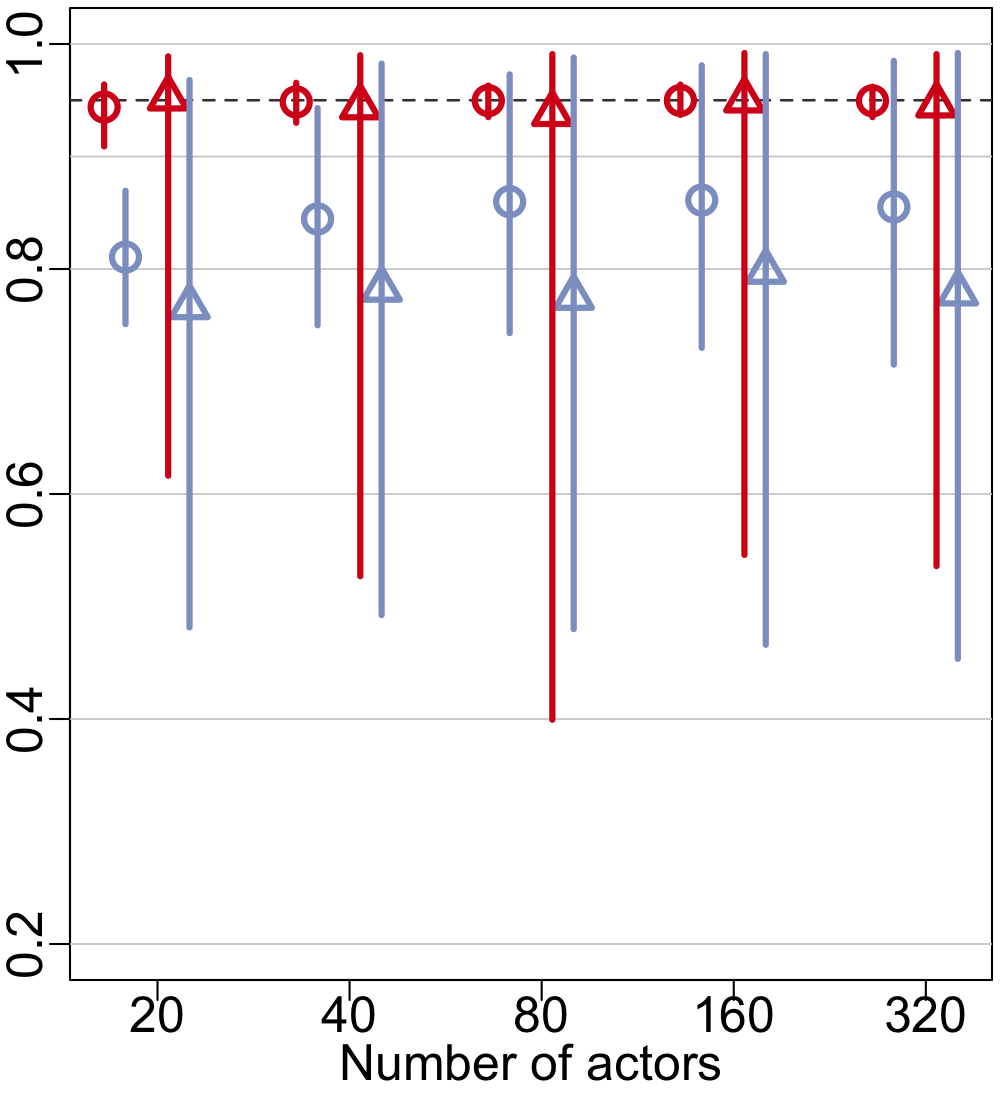} 
\hspace{-.22in}&\hspace{-.22in}
\includegraphics[width=.315\textwidth]{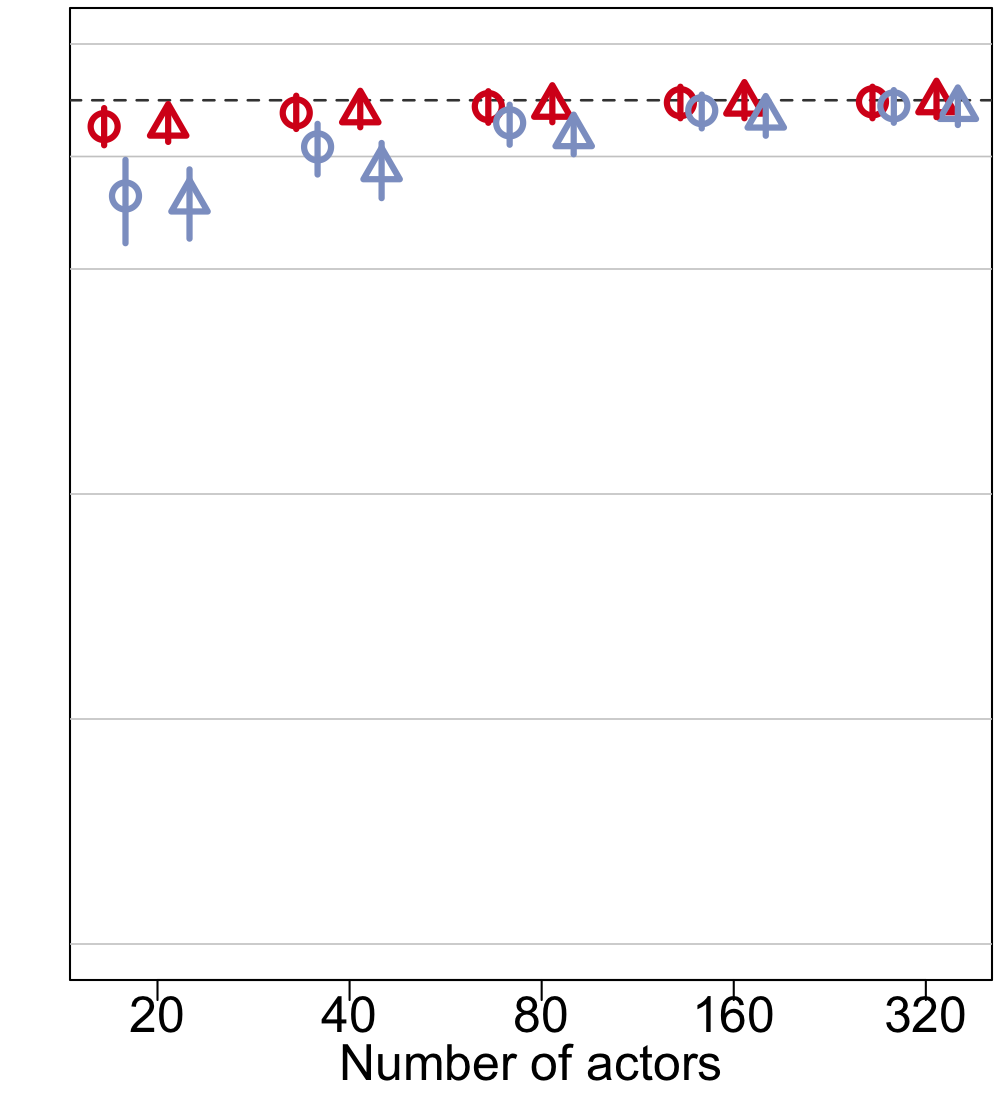} 
  \hspace{-.22in}&\hspace{-.22in}
\includegraphics[width=.315\textwidth]{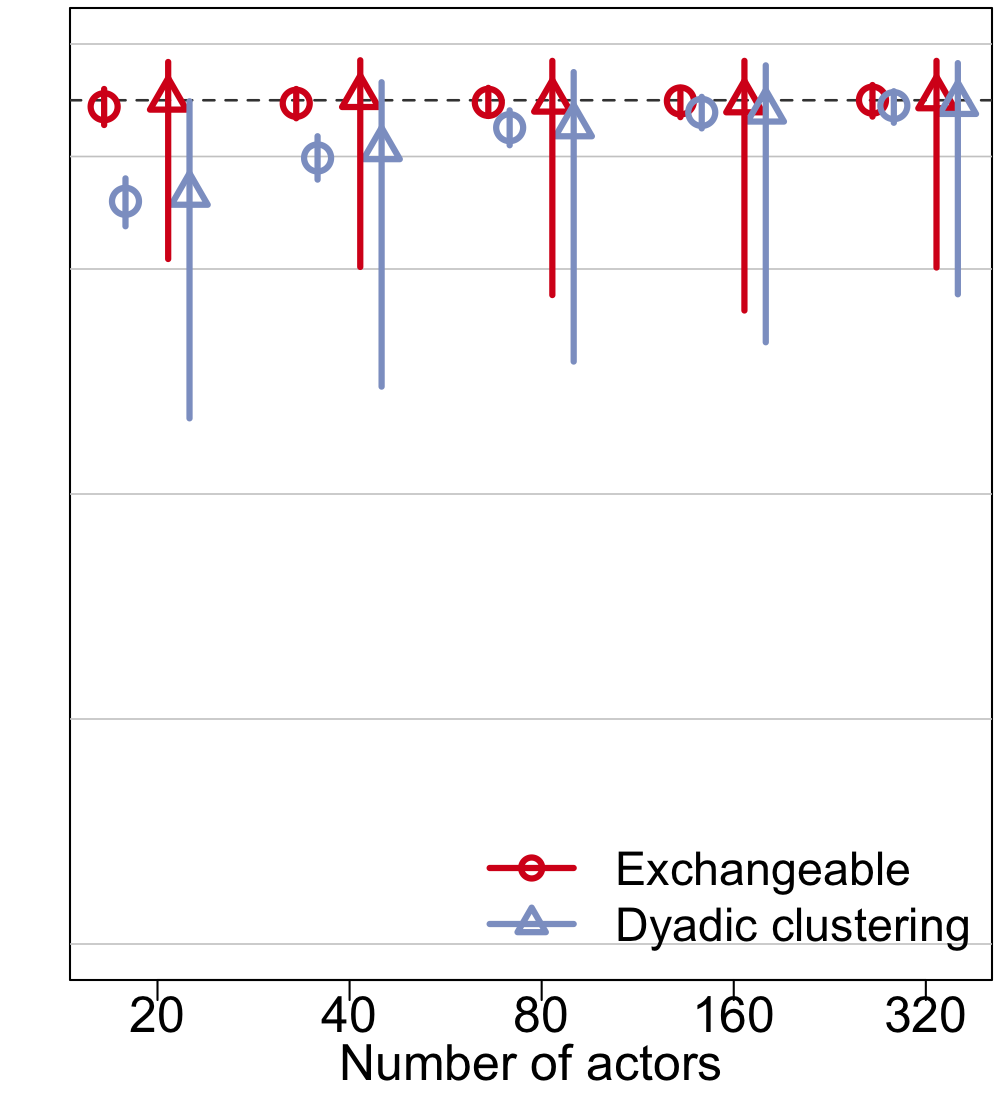} 
  \\
  \end{tabular}
  \caption{Estimated probability that true coefficient is in 95\% confidence interval for each of three covariates (binary, positive, and real-valued) when the errors are generated from exchangeable (circles) and non-exchangeable (triangles) models. Points denote mean estimated coverage and lines represent the middle 95\% of coverages for exchangeable (black) and dyadic clustering (gray)
  estimators. }
      \label{fig:sim_coverage}
\end{figure}

\section{Patterns in international trade}
\label{sec:trade}
We demonstrate the implications of using our exchangeable estimator in a study of international trade among $58$ countries over $R=T=20$ years.  These data were previously analyzed and made available by~\cite{westveld2011mixed}.  Following~\cite{westveld2011mixed},~\cite{ward2007persistent},  and~\cite{tinbergen1962shaping}, we use a modified gravity mean model to represent log yearly trade between each pair of countries as linear function of seven covariates in years 1981-2000.
\cite{westveld2011mixed}  propose a model -- which we refer to as the mixed effects model -- which explicitly decomposes the regression error term $\epsilon_{ijt}$ for each time period and pair of actors into time-dependent sender and receiver effects, resulting in 13 error covariance parameters which are estimated using a latent variable representation and Bayesian Markov chain Monte Carlo methodology. 
We propose estimating the gravity mean model using feasible generalized least squares, assuming the errors are jointly exchangeable. As noted in Section~\ref{subsec:exch_estimator}, the proposed approach estimates ten error covariance parameters.
We provide a method for efficient inversion of $\Omega_E$ in this setting in Section~\ref{app_Omega_inv}.

\begin{figure}[h]
    \centering
    \includegraphics[width=.8\textwidth]{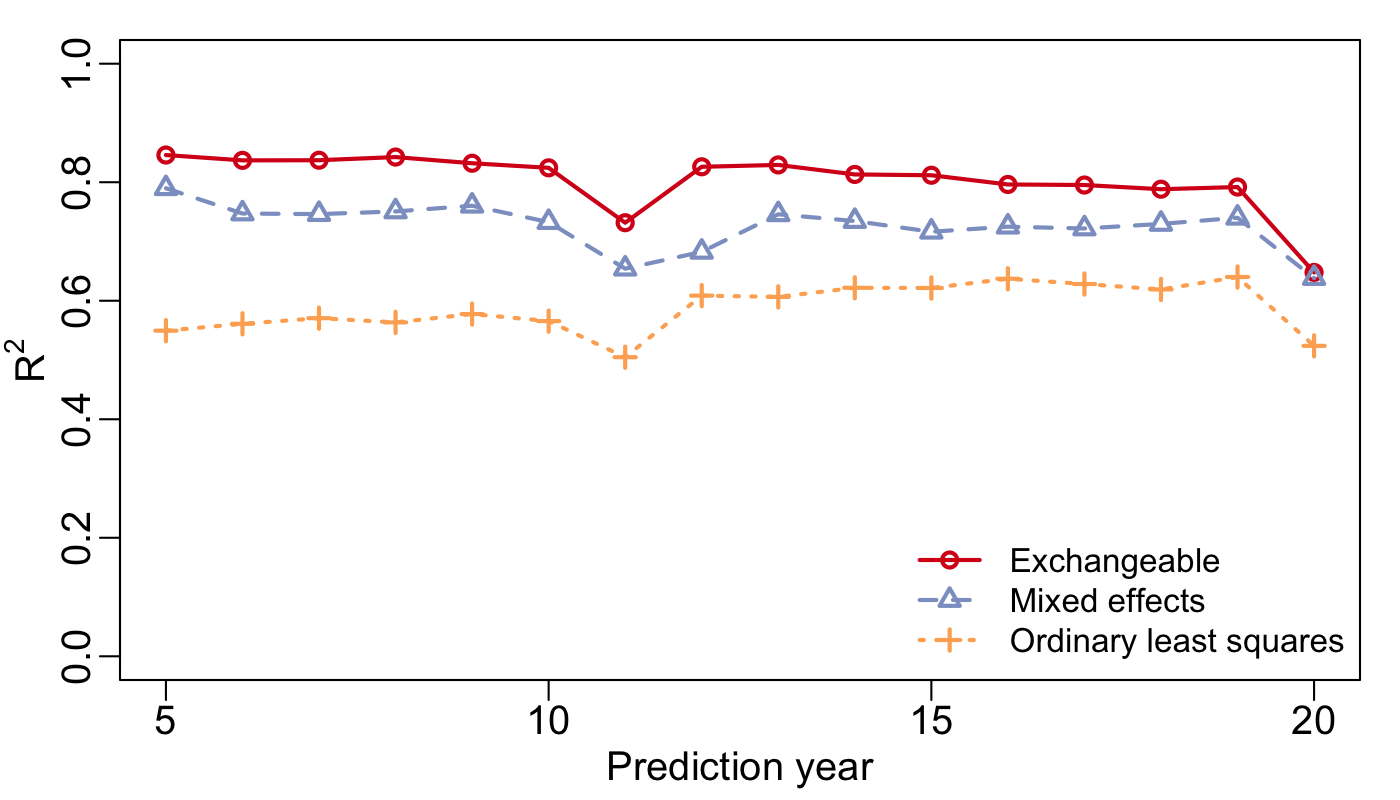} 
    \caption{$R^2$ when predicting one-year-ahead trade flows using exchangeable, mixed effects, and ordinary least squares approaches.}
    \label{fig:prediction}
\end{figure}

We compared the exchangeable and mixed effects approaches (and ordinary least squares as a baseline) in an out-of-sample prediction study. 
Here we estimated the regression coefficients using the first $K$ years of trade data for $K = 4,...,19$ and used the estimates to predict trade values in the following year.  Figure~\ref{fig:prediction} provides the coefficient of determination, $R^2$, for the three procedures when predicting trade flows in years 5 through 20.  There is a median increase in $R^2$ of about 10\% (30\%) when using the proposed exchangeable approach relative to the mixed effects approach (and ordinary least squares, respectively). The proposed approach performs better than the other approaches for all time periods, although the gap in performance decreases as $T$ increases. 
These results suggest the more parsimonious exchangeable approach represents the data better than the mixed effects model, and yet, the exchangeable approach runs in a small fraction of the time of the mixed effects approach. 
See Section~\ref{app_trade} for additional details.

We compared the coefficients of the ordinary least squares, mixed effects, and exchangeable approaches. The $R^2$ between the ordinary least squares and mixed effects coefficients is about $0.46$, while the exchangeable and mixed effects coefficients have an $R^2$ of $0.78$. About $40\%$ of the ordinary least squares coefficients (using dyadic clustering standard errors) were significantly different from the mixed effects coefficients, while only $1\%$ of the exchangeable coefficients were significantly different from the the mixed effects coefficients. Finally, the ordinary least squares coefficients have standard errors that are, on average, over $1.5$ times the standard errors of the exchangeable coefficients (with exchangeable standard errors). See Section~\ref{sec:trade_insample} for more details.
Together with the one-year-ahead prediction results, the coefficient and standard error comparisons suggest that the proposed approach
can revise ordinary least squares coefficients in the direction of a higher fidelity model, giving more precise estimates of the coefficients, while requiring few modeling decisions and with limited runtime penalty. Based on the success of the exchangeable approach in the trade data analysis and simulation study, we recommend that researchers use the feasible generalized least squares estimator of the coefficient vector $\beta$ as demonstrated here (unless an unbiased estimator of the coefficient vector is specifically desired).

\section{Discussion} \label{sec:concl}

The proposed exchangeable estimator leverages exchangeability for maximal symmetry (and thus maximal parsimony) in the covariance matrix of relational array $Y$.  The exchangeability assumption may not be appropriate when the true error covariances are substantially heterogeneous. 
We propose using a permutation test based on the dyadic clustering estimator for testing the hypothesis of exchangeable errors. The procedure consists of generating a null distribution of $\hat{V}_{DC}$ in \eqref{eq:varc} by randomly permuting the residual array in a manner consistent with exchangeability. If the observed estimator is extreme relative to the null distribution,  this suggests the errors are non-exchangeable. Details and simulations are available in Section~\ref{app_test}. 

\section*{Supplementary material}
\label{SM}
The supplementary material contains details about the estimator and analyses, proofs, further theoretical details, and a proposed test for exchangeability.

%\putbib[main]
%\end{bibunit}

%\bibilography{main}
\bibliographystyle{apalike}
\bibliography{main.bib}

\clearpage
% Shortened/succent supplement for Biometrika

\textbf{\Large \centering Supplementary material for ``Regression of exchangeable relational arrays''}

%\begin{bibunit}[biometrika]

% \startcontents[chapter]
% \printcontents[chapter]{l}{0}{\setcounter{tocdepth}{1}}
% \tableofcontents

% \hspace{.25in}

\newcommand{\x}{x}. % fix old bolding code

\setcounter{section}{0}
\setcounter{page}{1}
\setcounter{figure}{0}  
% \section{A nice appendix}
\setcounter{equation}{0}

\renewcommand\thefigure{S\arabic{figure}}  
\renewcommand{\theequation}{S\arabic
{equation}}
\renewcommand{\thesection}{S\arabic{section}}
\renewcommand{\thetheorem}{S\arabic{theorem}}
\renewcommand{\thetable}{S\arabic{table}}
\renewcommand{\theproposition}{S\arabic{proposition}}
\renewcommand{\thedefinition}{S\arabic{definition}}

% \addcontentsline{toc}{section}{Appendix} % Add the appendix text to the document TOC
% \part{Appendix} % Start the appendix part
% \parttoc % Insert the appendix TOC

\vspace{.25in}
\noindent{{\large Contents\hfill Page No.}\vspace{.1in} \par}
\contentsline {section}{\ref{sec:appx_patterns}. Patterns of exchangeable covariance matrices}{2}
\contentsline {section}{\ref{sec:appx_ estimator}. Exchangeable estimator details}{3} 
\contentsline {section}{\ref{app_undir}. Undirected arrays}{3} 
\contentsline {section}{\ref{app_prop_proof}. Proof of Proposition~\ref{prop}}{4} 
\contentsline {section}{\ref{app_Omega_eigen_parrent}. Eigenvalues of  exchangeable covariance matrix}{5} 
\contentsline {section}{\ref{app:bias}. Proof of Theorem~\ref{thm:bias}: bias of the exchangeable and dyadic clustering estimators}{6} 
\contentsline {section}{\ref{app:theory}. Theoretical properties of the exchangeable estimator}{15} 
\contentsline {section}{\ref{app:AN}. Proof of asymptotic normality of ordinary least squares}{17} 
\contentsline {section}{\ref{app:consistency}. Proof of consistency of the exchangeable estimator}{22}  
\contentsline {section}{\ref{app:mse}. Proof of Theorem~\ref{thm_mse}: mean-square error of the exchangeable and dyadic clustering estimators}{25} 
\contentsline {section}{\ref{app_sim}. Simulation study}{35} 
\contentsline {section}{\ref{app_DC_inv}. Dyadic clustering covariance matrix invertibility}{37} 
\contentsline {section}{\ref{app_Omega_inv}. Efficient inversion of the exchangeable covariance matrix}{38} 
\contentsline {section}{\ref{app_trade}. Trade data analysis}{40} 
\contentsline {section}{\ref{app_test}. A test for exchangeability}{43}

\clearpage
\section{Patterns of exchangeable covariance matrices}
\label{sec:appx_patterns}
Figure \ref{exchCOVMAT} shows the structure of $\Omega_{E}$ for a relational matrix with four actors $\{A,B,C,D\}$ and $R=1$. 

\begin{figure}[h]
\centering
\begin{tabular}{m{.35\textwidth} m{.55\textwidth} }
 \includegraphics[width=.28\textwidth]{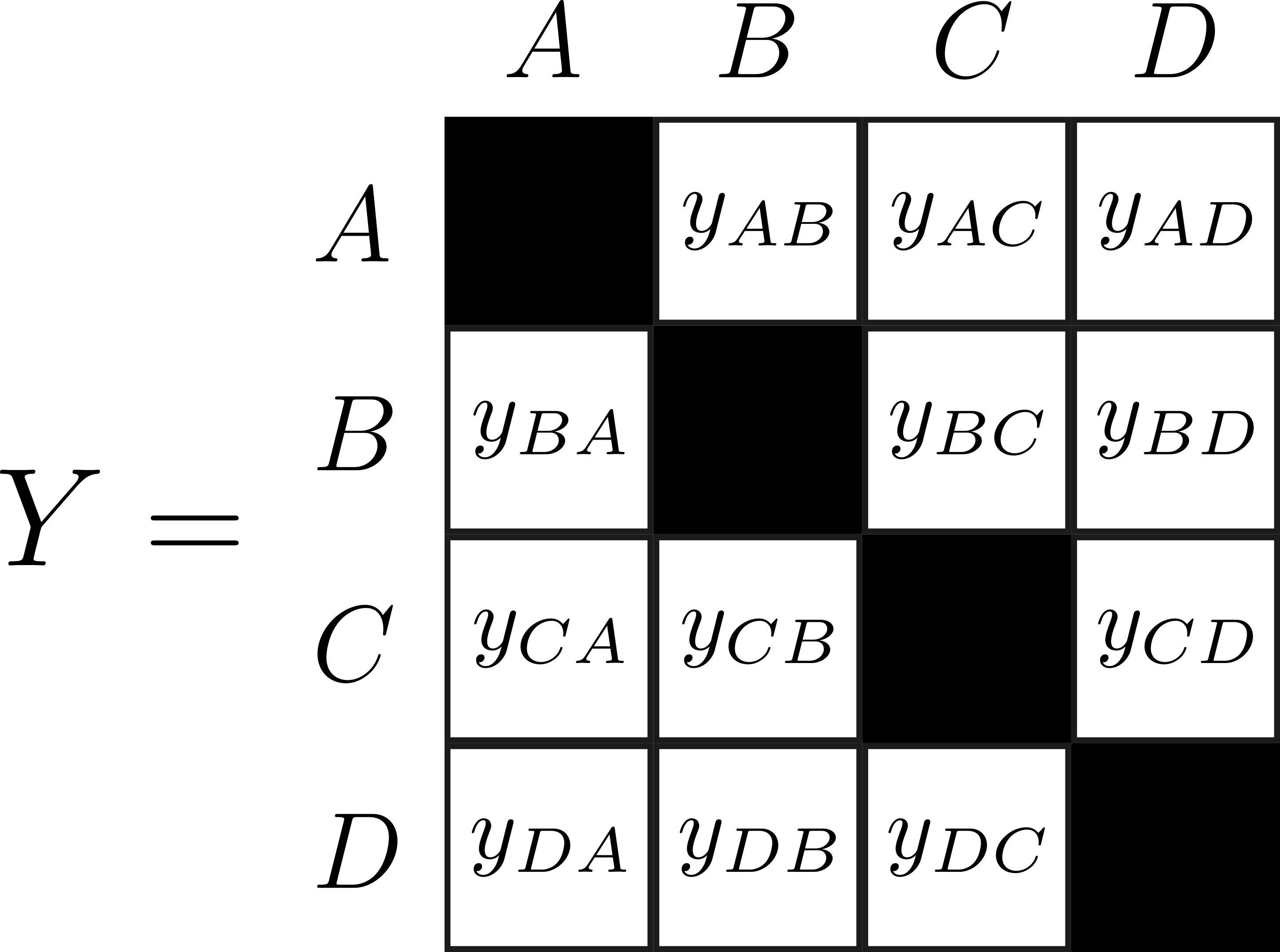} \hspace{3pt} & \hspace{3pt}
 \includegraphics[width=.53\textwidth]{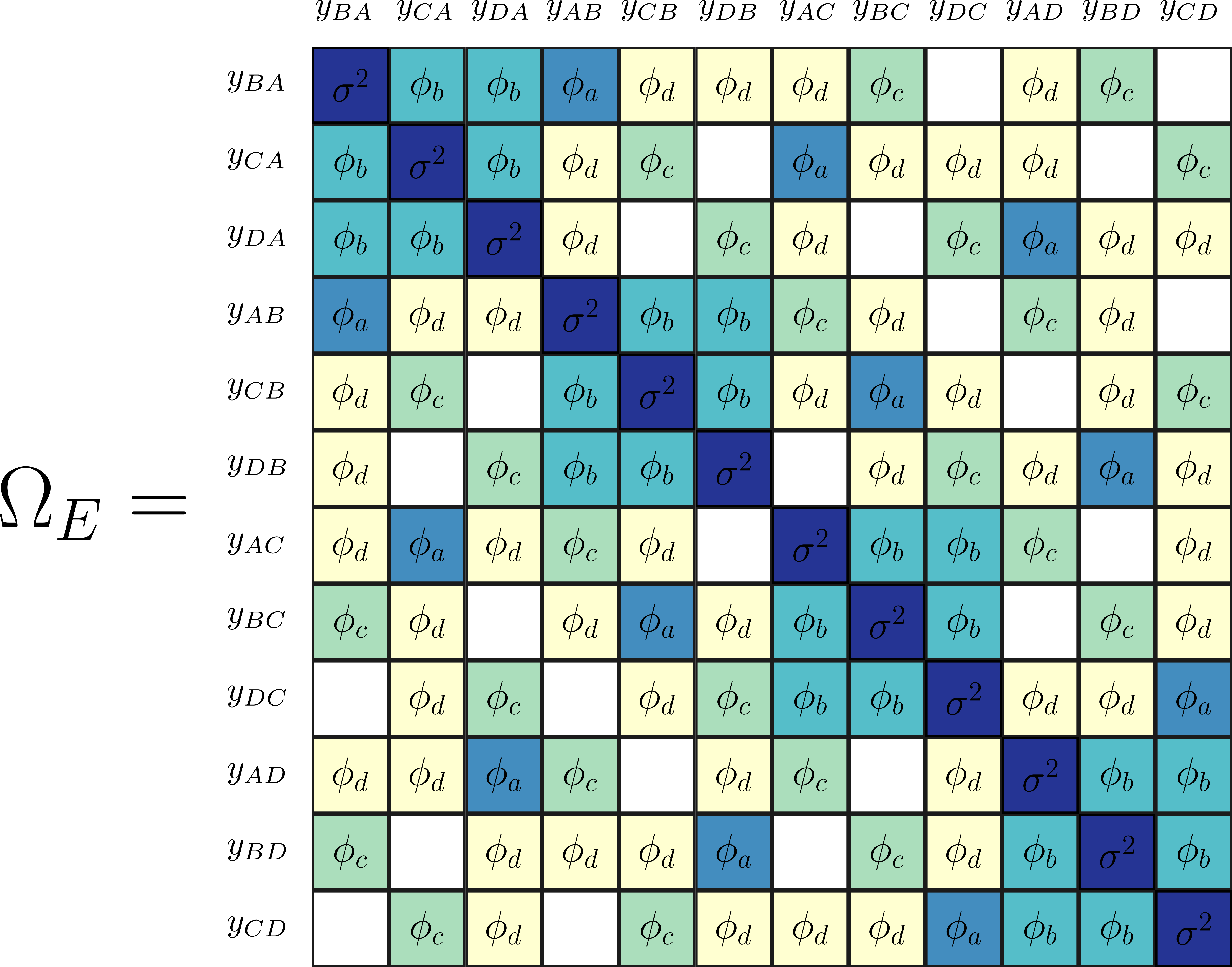}  \vspace{-1pt}
 \end{tabular}
\caption{Consider a matrix $Y$ containing the relations among four actors $\{A,B,C,D\}$ shown on the left.  Since the relation between an actor and itself is undefined, the diagonal entries (blacked out in the picture) are not regarded as part of $Y$.  Assuming joint exchangeability of the actors and that relations involving non-overlapping sets of actors are independent, the covariance matrix $\Omega_E$ contains five unique values. }
\label{exchCOVMAT}
\end{figure}

\section{Exchangeable estimator details}
\label{sec:appx_ estimator}
 The text gives an example for the estimator of $\phi_b^{(2)}$. We explicitly define the  empirical mean estimates used in the exchangeable estimator in \eqref{eq:exch.est} below:
{ \small
\begin{align*}
&\hat{\phi}_0^{(1)} = \frac{1} {Rn(n-1)} \sum_{r} \sum\limits_{i} \sum\limits_{j \neq i} e_{ijr}^2, \\
&\hat{\phi}_a^{(1)} = \frac{1} {Rn(n-1)} \sum_{r} \sum\limits_{i} \sum\limits_{j \neq i} e_{ijr} e_{jir}, \\
&\hat{\phi}_b^{(1)} = \frac{1} {Rn(n-1)(n-2)}\sum_{r} \sum\limits_{i} \sum\limits_{j \neq i} e_{ijr}  \Big( \sum\limits_{k \neq i} e_{ikr} - e_{ijr} \Big), \\
&\hat{\phi}_c^{(1)} = \frac{1} {Rn(n-1)(n-2)} \sum_{r} \sum\limits_{i} \sum\limits_{j \neq i} e_{ijr}  \Big( \sum\limits_{k \neq j} e_{kjr} - e_{ijr} \Big), \\
&\hat{\phi}_d^{(1)} =  \frac{1} {2Rn(n-1)(n-2)}\sum_{r} \sum\limits_{i} \sum\limits_{j \neq i} e_{ijr} \Big( \sum\limits_{k \neq i} e_{kir} + \sum\limits_{k \neq j} e_{jkr} - 2e_{jir} \Big), \\
&\hat{\phi}_0^{(2)} = \binom{R}{2}^{-1}\frac{1} {n(n-1)} \sum_{r\neq s} \sum\limits_{i} \sum\limits_{j \neq i} e_{ijr}e_{ijs}, \\
&\hat{\phi}_a^{(2)} = \binom{R}{2}^{-1} \frac{1} {n(n-1)} \sum_{r \neq s} \sum\limits_{i} \sum\limits_{j \neq i} e_{ijr} e_{jis}, \\
&\hat{\phi}_b^{(2)} = \binom{R}{2}^{-1}\frac{1} {n(n-1)(n-2)}\sum_{r\neq s} \sum\limits_{i} \sum\limits_{j \neq i} e_{ijr}  \Big( \sum\limits_{k \neq i} e_{iks} - e_{ijs} \Big), \\
&\hat{\phi}_c^{(2)} = \binom{R}{2}^{-1} \frac{1} {n(n-1)(n-2)} \sum_{r\neq s} \sum\limits_{i} \sum\limits_{j \neq i} e_{ijr}  \Big( \sum\limits_{k \neq j} e_{kjs} - e_{ijs} \Big), \\
&\hat{\phi}_d^{(2)} =  \binom{R}{2}^{-1} \frac{1} {2n(n-1)(n-2)} \sum_{r\neq s} \sum\limits_{i}  \sum\limits_{j \neq i} e_{ijr} \Big( \sum\limits_{k \neq i} e_{kis} + \sum\limits_{k \neq j} e_{jks} - 2e_{jis} \Big).
\end{align*}
}
Recall from the text that $\hat{\phi}_0^{(1)}$ is an estimator of ${\text{var}}(\xi_{ijr})$, $\hat{\phi}_a$ is an estimator of 
${\text{cov}}(\xi_{ijr}, \xi_{jir})$,
$\hat{\phi}_b^{(1)}$ is an estimator of 
${\text{cov}}(\xi_{ijr}, \xi_{kjr})$,
$\hat{\phi}_c^{(1)}$ is an estimator of 
${\text{cov}}(\xi_{ijr}, \xi_{ikr})$,
and $\hat{\phi}_d^{(1)}$ is an estimator of 
${\text{cov}}(\xi_{ijr}, \xi_{jkr})$ (and equivalently an estimator of ${\text{cov}}(\xi_{ijr}, \xi_{kir})$). The estimator $\hat{\phi}_i^{(2)}$, for $i = 0,a,b,c,d$, is the analogous estimator to $\hat{\phi}_i^{(1)}$, only when $r \neq s$.

\section{Undirected arrays}
\label{app_undir}
This section specializes the results presented in the manuscript to undirected  relational data. Consider the case when $R=1$ and suppose the relational data contains the relations among $n$ actors. The covariance of the errors $\Omega$ contains six possibly unique elements
\begin{align*} 
&\text{var}(\xi_{ijr})  = \theta_0^{(1)}, \ \ \ \ \ \text{cov}(\xi_{ijr},\xi_{kir})  = \theta_a^{(1)}, \ \ \ \ \ \text{cov}(\xi_{ijr},\xi_{klr})  = 0, \\
&\text{cov}(\xi_{ijr}, \xi_{ijs})  = \theta_0^{(2)}, \ \ \ \ \ \text{cov}(\xi_{ijr},\xi_{kis})  = \theta_a^{(2)}, \ \ \ \ \ \text{cov}(\xi_{ijr},\xi_{kls})  = 0, \ \ \ r \neq s.
\end{align*}
As in the directed case, we assume covariances corresponding to relations which share no common actor are zero.  We again propose estimating the remaining nonzero covariances using the corresponding means of residuals.

\section{Proof of Proposition~\ref{prop}}
\label{app_prop_proof}
\begin{proof} 
Consider a probability model for a directed relational array  $\Xi$ that satisfies the joint exchangeability and second moment criteria defined in the proposition.  
Now consider arbitrary permutations $\pi(\cdot)$ of the actor set $( 1,...,n  )$ and $\nu(\cdot)$ of the actor set $(1,...,R)$.  Exchangeability implies the that the probability distribution of $\xi_{ijr}$ is the same as $\xi_{\pi(i) \pi(j) \nu(r)}$. Thus, every entry in $\Xi$ is marginally identically distributed, and 
\[
\var(\xi_{ijr}) =  \var\{ \xi_{\pi(i) \pi(j) \nu(r)} \}.
\]
This is one of the entries in $\Omega$, the covariance matrix of $\Xi$.

We now move onto the covariances in $\Omega$; it remains to show that there are 11 unique entries beyond the variance. This result follows from two facts. The first fact is that exchangeability implies that the bivariate distribution of  the pair $(\xi_{ijr},\xi_{klr} )$ must be the same as bivariate distribution of $\{ \xi_{\pi(i)\pi(j) r}, \xi_{\pi(k)\pi(l) r} \}$, and thus, 
\[ 
\text{cov}(\xi_{ijr},\xi_{klr}) = \text{cov}\{ \xi_{\pi(i)\pi(j) r}, \xi_{\pi(k)\pi(l) r} \}.
\]
The second fact is that permutations preserve equality and inequality, meaning that if $i=j$, then $\pi(i) = \pi(j)$, and if $i \neq j$, then $\pi(i) \neq \pi(j)$, for any permutation $\pi$. We will enumerate the 11 patterns of indices that have the same bivariate distributions, as implied by the equality-inequality preservation of permutations, being mindful of the fact that the first two indices in $\xi_{ijr}$ refer to the same index set $( 1,\ldots,n )$.

First take the case when $i=k$ and $j=l$. When $r=s$, we have the variance previously discussed. When $r\neq s$, we obtain the second set of equivalent bivariate distributions (after the variance):
\[
(\xi_{ijr}, \xi_{ijs}),  \ \ (i = 1,\ldots n; j = 1, \ldots, n; i \neq j; r=1, \ldots, R; s=1, \ldots, R; r\neq s),
\]
where it is understood that $i \neq j$ due to the network setting. 
Moving on, consider the case when $i=k$ and $j \neq l$, noting that, by assumption, $i\neq j$ and $k \neq l$. Then, we obtain two sets with equivalent bivariate distributions, pertaining to $r=s$ and $r \neq s$:
\begin{align}
    (\xi_{ijr}, \xi_{ilr}),&  \ \ (i = 1,\ldots n; j = 1, \ldots, n; l=1,\ldots, n; l \neq j; r=1, \ldots, R), \nonumber \\
    (\xi_{ijr}, \xi_{ils}),&  \ \ (i = 1,\ldots n; j = 1, \ldots, n; l=1,\ldots, n; l \neq j; r=1, \ldots, R; s=1, \ldots, R; r\neq s). \nonumber 
\end{align}
Thus far, we have enumerated four sets of equivalent bivariate distributions under exchangeability. The remaining eight may be determined by cycling through combinations of indices in $(i,j)$ equal to, and not equal to, indices in $(k,l)$, for both $r=s$ and $r\neq s$. Table~\ref{tab:bivariate_indices} summarizes the twelve index patterns that result; the first and third row in the table have already been discussed. The second row, first column, states that pairs of entries in $\Xi$ with the index pattern $(\xi_{ijr}, \xi_{jir})$ form a set with the same bivariate distribution. 
\begin{table}[h]
    \caption{Twelve index patterns that form a set of equivalent bivariate distributions with the entry in $\Xi$ corresponding to indices $(i,j,r)$. The row labels correspond to those in Figure~\ref{fig:directeddyad} and the column labels indicate index sets that share the same third dimension ``slice'' of $\Xi$ with index set $(i,j,r)$.}
    \label{tab:bivariate_indices}
    \centering
    \begin{tabular}{ccc}
    & $s=r$ & $s \neq r$ \\
    
        (0) & $(i,j,r)$ & $(i,j,s)$  \\
        (a) &$(j,i,r)$ & $(j,i,s)$  \\
        (b) &$(i,l,r)$ & $(i,l,s)$  \\
        (c) &$(k,j,r)$ & $(k,j,s)$  \\
        (d) &$(k,i,r)$ & $(k,i,s)$  \\
        (e) &$(k,l,r)$ & $(k,l,s)$  \\
    \end{tabular}
\end{table}

Finally, one might at first think that we have missed the bivariate distribution of $(\xi_{ijr}, \xi_{jlr})$. However, we note that the bivariate distribution of random variables $(A,B)$ is equal to that of $(B,A)$, and thus, the bivariate distribution of $(\xi_{ijr}, \xi_{jlr})$ is captured in the first column of row $(d)$ of Table~\ref{tab:bivariate_indices}. Since we have enumerated all entries in $\Omega$, and there are twelve possibly unique sets with equivalent covariances (or variances), the result is shown for $R>1$. When $R=1$, it is sufficient to consider only the first column of Table~\ref{tab:bivariate_indices}, which establishes that there are at most six unique entries in $\Omega$ for $R=1$. 
\end{proof}

\section{Eigenvalues of exchangeable covariance matrix}
\label{app_Omega_eigen_parrent}

\subsection{Positive definite $\hat{\Omega}_E$}
\label{app_Omega_eigen}
Since the entries in the exchangeable covariance matrix estimator $\hat{\Omega}_E$ are empirical averages, it is possible the estimate $\hat{\Omega}_E$ is not positive definite.  Here we briefly investigate the constraints on the parameters that guarantee the resulting covariance matrix is positive definite when $R=1$.  Note that for computing the sandwich estimator variance of $\hat{\bbeta}$ and making inference on $\hat{\bbeta}$, positive definiteness of $\hat{\Omega}_E$ is not necessary.  However, if generalized least squares is employed to estimate $\beta$, the inverse of the covariance matrix estimator is required, and hence it is desirable that $\hat{\Omega}_E$ be positive definite. 

\subsection{Undirected  relational data}

We first focus on the undirected case, where the exchangeable covariance matrix contains two distinct nonzero entries: a variance $\sigma^2$ and a parameter $\phi$ in the off-diagonal representing the correlation between any pairs of relations that share an actor.  Below we consider the correlation matrix, rather than the covariance matrix, which contains only a single nonzero correlation value.  We denote this value by $a= \theta_a^{(1)}/\theta_0^{(1)}$. 

Based on a thorough empirical investigation, we conjecture that the exchangeable correlation matrix corresponding to an undirected set of relations among  $n$  actors, which has nonzero value $a$ in select off-diagonal entries, has exactly three eigenvalues as given below.

\begin{center}
\begin{tabular}{ rr } 
 Eigenvalue & Multiplicity \\  
 $1+2(n-2)a$ & $1$  \\ 
 $1-2a$ & $\frac{1}{2}n(n-3)$  \\
 $1+(n-4)a$ & $n-1 $ \\
\end{tabular}
\end{center}

The correlation matrix is positive definite if and only if all eigenvalues are positive.  Thus, if $a \in \left(1/({2(n-2)}), 1/2\right)$, the correlation matrix is positive definite.  Notice that the upper bound on $a$ does not vary with $n$.  Using the relation between $a$ and $( \theta_0^{(1)}, \theta_a^{(1)} )$, this constraint can be re-expressed as a constraint on the covariance parameters.

\subsection{Directed relational data}
Since matrices of the form of $\Omega_E$ form a five-dimensional linear subspace over square symmetric matrices, it is intuitive that the eigenspace of such matrices is limited.
We find empirically that square symmetric matrices of the form of $\Omega_E$ have five unique eigenvalues (when $R=1$). 
As in the undirected case, we again focus on the exchangeable correlation matrix which contains four nonzero off-diagonal elements $\{a,b,c,d\}$ corresponding, respectively, in placement to $\{\phi_a^{(1)},\phi_b^{(1)},\phi_c^{(1)},\phi_d^{(1)}\}$ in the exchange covariance matrix $\Omega_E$.  Note $a = \phi_a^{(1)}/\phi_0^{(1)}$, $b = \phi_b^{(1)}/\phi_0^{(1)}$, and so on.  Based on 
empirical studies, we conjecture the eigenvalues for the  exchangeable correlation matrix associated with a directed set of relations among   $n$  actors has exactly five eigenvalues as given below.

\begin{center}
\begin{tabular}{ rr} 
 
 Eigenvalue(s) & Multiplicity\\ 
 $1+a+(n-2)(b+c)+2(n-2)d$ & 1  \\ 
 $1+a-(b+c+2d)$ & $(n-1)(n-2)/2 - 1$\\
 $1-(a+b+c)+2d$ & $(n-1)(n-2)/2$\\
 $\{ (n-3)(b+c)-2d+2 \}/2 \pm {(\alpha + \beta)^{1/2}}/2$ & $n-1$\\
 
\end{tabular}
\end{center}
where  $\alpha = (c^2+b^2)(n^2-2n+1)+4d^2(n^2-6n+9)+2bc(1-n^2+2n)$ and $\beta = ad(8n-24)+(b+c)d(12-4n)+4a \{a-(b+c) \}$.  As in the undirected case, these constraints can be re-expressed as constraints on the original five covariance parameters.

\section{Proof of Theorem~\ref{thm:bias}: bias of the exchangeable estimator}
\label{app:bias}

\subsection{Notation and proof preliminaries}
In this section we prove that the dyadic clustering estimator is biased downwards, and that this bias is greater in absolute value than the bias of the exchangeable estimator, in the case when $R=1$.

First, we make some definitions. Recall that the dyadic clustering estimator is based on the outer product of residuals, with  entries corresponding to pair of relations that do not share an actor set to zero. This  operation may be viewed as a projection onto a vector space of square, symmetric matrices of dimension $n(n-1)$ with zeros in the appropriate entries.  The dimension of this vector space is $O(n^3)$. We define the vector space below:

\begin{definition}
\label{def:dc_space}
{Vector space $\s{D}$: } Let $M_{jk,lm}$ be the square matrix of dimension $n(n-1)$ with a `1' in the entry corresponding to relations $jk$ and $lm$, and zeros elsewhere. Then, 
\[\s{D} = \text{span}\left\{  M_{jk,lm} : (j,k) \cap (l, m) = \varnothing  \right\}. \]
We define the orthogonal projection of matrix $A$ onto $\s{D}$ as $P_\s{D} \left( A \right )$.
\end{definition}

A similar definition to Definition~\ref{def:dc_space} may be made concerning the space of covariance matrices corresponding to jointly exchangeable random variables, as defined below. 
\begin{definition}
\label{def:e_space}
{Vector space $\s{E}$: } Let $\s{S}_{i}$ be an indicator matrix of entries in square, symmetric matrices of dimension $n(n-1)$ corresponding to pairs of relations that share an actor in the $i \text{th}$ manner, for $i = 1,\ldots,5 $. Then, 
\[\s{E} = \text{span}\left( \s{S}_i : i=1, \ldots, 5 \right). \]
We define the orthogonal projection of matrix $A$ onto $\s{E}$ as $P_\s{E} \left( A \right)$.
\end{definition}

It will be useful to write the dyadic clustering and exchangeable estimators as functions of the projections in Definitions~\ref{def:dc_space} and \ref{def:e_space}, supposing that $X$ is orthogonal, that is, $X^TX = f(n) I$, where $f(n)$ is a function of the number of actors $n$. The dyadic clustering estimator of $\var( \hat{\beta}_k )$ is then
{\small
\begin{align}
( \hat{V}_{DC} )_{kk} &= f(n)^{-2} X_{k}^T P_{\s{D}} \left\{ (y - X\hat{\beta})  (y - X\hat{\beta})^T \right\} X_{k},  \label{eq:dc_trace} \\
&= f(n)^{-2} \text{tr} \left\{ (y - X\hat{\beta})  (y - X\hat{\beta})^T  P_{\s{D}} \left(  X_{k} X_k^T \right) \right\},  \nonumber 
\end{align}
}
where $X_k$ is column $k$ of covariate matrix $X$. The second equality in \eqref{eq:dc_trace} results from the application of a property of inner products, which we apply to the inner product over matrices, where, for any projection operator $P( . )$, we have that $\text{tr} \left\{ A P ( B ) \right\} = \text{tr} \left\{ P ( A ) B \right\} = \text{tr} \left\{ P ( A ) P( B ) \right\}$. An analogous definition to \eqref{eq:dc_trace} holds for the exchangeable estimator of $\var( \hat{\beta}_k )$ when replacing the projections onto $\s{D}$ in \eqref{eq:dc_trace} with projections onto $\s{E}$. 

Finally, to prove Theorem~\ref{thm:bias}, we need some relationships between the error covariance parameters $( \phi_3^{(1)}, \phi_4^{(1)}, \phi_5^{(1)} )$ when $\Omega$ is positive definite. We summarize these relationships in the inequalities in the following proposition. We drop the superscript `$(1)$' to lighten notation. 

\begin{proposition}
\label{prop:eval_relations}
Let $\Omega \in \s{E}$ be an exchangeable correlation matrix, with $\Omega = {I} + \sum_{i=2}^5 \phi_i \s{S}_i$. Then, 
\begin{align}
0 \le \phi_3 + \phi_4 \le 1, \nonumber \\
    \phi_3^2 + \phi_5^2 \le \phi_3, \nonumber \\
    \phi_4^2 + \phi_5^2 \le \phi_4. \nonumber \\
        \phi_3^2 + \phi_4^2 + 2 \phi_5^2 \le 1, \nonumber 
\end{align}
\end{proposition}

\begin{proof}
The inequalities result from the requirement that eigenvalues of $\Omega$ are positive for all $n$. Taking $n$ arbitrarily large, the eigenvalues of $\Omega$ are
\begin{align}
    n(\phi_3 + \phi_4 + 2\phi_5) & \ge 0, \nonumber \\
    1 + \phi_2 - \phi_3 - \phi_4 - 2\phi_5 & \ge 0, \nonumber \\
    1 - \phi_2 - \phi_3 - \phi_4 + 2\phi_5 & \ge 0, \nonumber \\
    n \left[ \phi_3 + \phi_4  \pm \left\{ (\phi_3 - \phi_4 )^2 + 4 \phi_5^2 \right\}^{1/2} \right] & \ge 0. \nonumber 
\end{align}
The second and third eigenvalues give that $\phi_3 + \phi_4  \le 1$. The fourth and fifth eigenvalues imply $\phi_3 + \phi_4  \ge 0$.

Using the fourth and fifth eigenvalue and rearranging,
\begin{align}
   4 \phi_5^2 + (\phi_3 -  \phi_4)^2&\le (\phi_3 +  \phi_4)^2 \label{eq:phi5_bound} \\
   4 \phi_5^2 & \le 4 \phi_3 \phi_4.  \nonumber 
\end{align}
Then, we have that
\begin{align}
    \phi_3^2 + \phi_5^2 &\le \phi_3^2 + \phi_3 \phi_4, \label{eq:last_bound} \\
    &\le \phi_3^2 + \phi_3(1-\phi_3)\nonumber \\
    &\le \phi_3, \nonumber
\end{align}
which establishes the second result. The third result follows from an analogous argument to \eqref{eq:last_bound} for $\phi_4^2 + \phi_5^2$. The final result follows from summing the previous two, such that
\begin{align}
    \phi_3^2 + \phi_4^2 + 2\phi_5^2 &\le \phi_3 + \phi_4 \le 1. \nonumber
\end{align}

\end{proof}

\subsection{Proof of Theorem~\ref{thm:bias}}
In this proof, we start by establishing the expression for the true variance of $\hat{\beta}_2$. Then, we write the bias of the dyadic clustering and exchangeable  estimators as a function of the variance parameters of $x$ and $\xi$. Then, we write the bias of the dyadic clustering estimator as a function of the bias of the exchangeable estimator, and then bound the absolute value of the bias of the exchangeable estimator, which gives the result. Throughout, we assume that the bias of the exchangeable and dyadic clustering estimators, scaled by $n^2$, are $O(1)$. We verify this assumption at the end of the proof.
Without loss of generality, we prove for the scaled problem where $E(xx^T) = \Psi = \sum_{i=1}^5 \psi_i \s{S}_i$ with $\psi_1 = 1$.

We begin by expressing the true variance of $\hat{\beta}_2$. We require that $n^{1/2}( \hat{\beta}_2 - \beta_{2})$ converges in distribution to a normal random variable with finite variance, which may be expected from Proposition 3.2 in \cite{tabord2017inference}. By assumption, $\hat{\beta}_2 = (x^T x)^{-1} x^T y$. The true variance of $\hat{\beta}_2$ is then
\begin{align}
    n \var( \hat{\beta}_2 ) &= n (x^T x)^{-2} x^T \Omega x, \nonumber \\
    &= O(n^{-3}) x^T \Omega x + O(n^{-1})= O(1), \nonumber
\end{align}
where we use that $x^T x= n^2 + O(n^{-1})$ by assumption. Using \eqref{eq:dc_trace}, it will be useful to rewrite the true variance
\begin{align}
    n \var( \hat{\beta}_2 ) &= n (x^T x)^{-2} \text{tr} \left\{  \Omega P_{\s{D}} \left(  x x^T \right) \right\}, \nonumber \\
    &= n (x^T x)^{-2} \text{tr} \left\{ \Omega P_{\s{E}} \left(  x x^T \right) \right\}, \nonumber
\end{align}
where we use the fact that $\Omega$ is a member of both spaces $\s{D}$ and $\s{E}$ by assumption. 

We now write the expectation of the dyadic clustering estimator, taking only the leading terms. 
First, the expectation
\[ \E\{ (y - X\hat{\beta})  (y - X\hat{\beta})^T \mid X\} = (I - P_X) \Omega (I - P_X) ,\] 
where $P_X$ is the projection onto the column space of the covariate matrix $X$. By assumption, we have that
\begin{align}
    P_X &= \{ n(n-1)\}^{-1} {1} {1}^T + (x^T x)^{-1} xx^T, \nonumber \\
    &= \{ n(n-1)\}^{-1} \left\{ {1} {1}^T + (1 + O(n^{-1})) xx^T\right\}, \nonumber \\
    &= \{ n(n-1)\}^{-1} \left( {1} {1}^T +  xx^T\right) + O(n^{-1}), \nonumber
\end{align}
 where ${1} $ is a vector of 1's of appropriate length (in this case, length $n(n-1)$). Using the expression of the dyadic clustering estimator in \eqref{eq:dc_trace}, its expectation is
{\small
\begin{align}
n^2\E \{ ( \hat{V}_{DC} )_{22} \}  
&=n^2(x^T x)^{-2}\text{tr} \Big\{ \Big( I - P_X \Big) \Omega \Big( I - P_X \Big) P_{\s{D}} \left(  xx^T \right)  \Big\} , \nonumber \\
&=n^{-2} \text{tr} \Big\{ \Big( I - P_X \Big) \Omega \Big( I - P_X \Big) P_{\s{D}} \left(  xx^T \right)  \Big\} + O(n^{-1}), \nonumber \\
&=n^2 \var ( \hat{\beta}_2 ) +  n^{-2} \text{tr} \Big\{  P_X \Omega P_X P_{\s{D}} \left(  xx^T \right)  \Big\} - 2 n^{-2} \text{tr} \Big\{ \Omega P_X P_{\s{D}} \left(  xx^T \right)  \Big\} + O(n^{-1}),  \nonumber \\
&=n^2\var( \hat{\beta}_2 )  
 + n^{-6} \text{tr} \Big\{ {1} {1}^T \Omega  xx^T  P_{\s{D}} \left(  xx^T \right)  \Big\}  - 2n^{-4} \text{tr} \Big\{ \Omega  {1} {1}^T   P_{\s{D}} \left(  xx^T \right)  \Big\}
\ldots   \label{eq:dc_bias_expanded} \\
& \hspace{.5in} + n^{-6} \text{tr} \Big\{  xx^T \Omega  xx^T  P_{\s{D}} \left(  xx^T \right)  \Big\} - 2n^{-4} \text{tr} \Big\{ \Omega  xx^T  P_{\s{D}} \left(  xx^T \right)  \Big\} \ldots \nonumber \\
& \hspace{.5in} + n^{-6} \text{tr} \Big\{ {1} {1}^T \Omega  11^T  P_{\s{D}} \left(  xx^T \right)  \Big\}
+ O(n^{-1}).  \nonumber
\end{align}
}
\noindent The second term in \eqref{eq:dc_bias_expanded} is zero in probability, as the row sums of $\Omega$ are all the same, and are $O(n)$. Thus, 
\[ {1}^T \Omega x = O(n) {1}^Tx. \]
Then, the second term is
\begin{align}
   n^{-6} \text{tr} \Big\{ {1} {1}^T \Omega  xx^T  P_{\s{D}} \left(  xx^T \right)  \Big\} &= (n^{-2} {1}^T x ) ( n^{-3} x^T  P_{\s{D}} \left(  xx^T \right) {1} ) = O(n^{-1}), \label{eq:dc_term22}
\end{align}
since $n^{-2}1^T x = O(n-1)$ by assumption,
which implies that the second term vanishes relative to the rest.

The third term in \eqref{eq:dc_bias_expanded} may be written
\begin{align}
- 2 n^{-4} \text{tr} \Big\{ \Omega  {1} {1}^T   P_{\s{D}} \left(  xx^T \right)  \Big\}
&= - 2  n^{-4} c_n {1}^T   P_{\s{D}} \left(  xx^T \right)  {1},\nonumber  
\end{align}
where $c_n = O(n)$ is the row sum of $\Omega$, which again is the same for all rows. 
Further, using the Cauchy-Scharz inequality, 
 \[ {1}^T   P_{\s{D}} \left(  xx^T \right)  {1} = \sum_{jk, lm \in \Theta_0} x_{jk}x_{lm} \le \left( \sum_{jk, lm \in \Theta_0} x^2_{jk}x^2_{lm}  \right)^{1/2} = || P_{\s{D}}\left( xx^T \right) ||_F. \]
Substituting, the third term is 
\begin{align}
- 2 n^{-3} \text{tr} \Big\{ \Omega  {1} {1}^T   P_{\s{D}} \left(  xx^T \right)  \Big\}
&= n^{-3} || P_{\s{D}}\left( xx^T \right) ||_F. \nonumber
\end{align}
We now bound the order of $|| P_{\s{D}}\left( xx^T \right) ||_F$. By assumption, $x^T x = n(n-1) + O(n^{-1})$, and thus
\[ || P_{\s{D}}\left( xx^T \right) ||^2_F = \sum_{jk, lm \in \Theta_0} x_{jk}^2 x_{lm}^2 = O(n^3). \]
Using this bound, the third term is 
\begin{align}
- 2 n^{-4} \text{tr} \Big\{ \Omega  {1} {1}^T   P_{\s{D}} \left(  xx^T \right)  \Big\}
&= O(n^{-3/2}), \label{eq:dc_term3}
\end{align}
and the third term vanishes relative to the other terms in \eqref{eq:dc_bias_expanded}. Further, the sixth term in \eqref{eq:dc_bias_expanded} is of the same order, 
\begin{align}
    n^{-6} \text{tr} \Big\{ {1} {1}^T \Omega  11^T  P_{\s{D}} \left(  xx^T \right)  \Big\} &= n^{4} c_n \text{tr} \Big\{ {1} 1^T  P_{\s{D}} \left(  xx^T \right) \Big\} = O(n^{-3/2}), \label{eq:dc_term6}
\end{align}
and may also be neglected.

Using \eqref{eq:dc_term22}, \eqref{eq:dc_term3}, and \eqref{eq:dc_term6}, the expression for the bias of the dyadic clustering estimator based on \eqref{eq:dc_bias_expanded} becomes
\begin{align}
n^2\text{Bias}\left\{ (\hat{V}_{DC} )_{22} \right\} 
&= 
 n^{-6} \text{tr} \Big\{  xx^T \Omega  xx^T  P_{\s{D}} \left(  xx^T \right)  \Big\} - 2 n^{-4} \text{tr} \Big\{ \Omega  xx^T  P_{\s{D}} \left(  xx^T \right)  \Big\}  \label{eq:dc_bias_expanded2} 
+ O(n^{-1}). 
\end{align}
Since, by the properties of projections, $|| P_{\s{E}} \left(  xx^T \right) ||_F \le || P_{\s{D}} \left(  xx^T \right) ||_F$ since $\s{E} \subset \s{D}$,  we may write the bias of the exchangeable estimator by analogy: 
\begin{align}
n^2\text{Bias}\left\{ (\hat{V}_{E} )_{22} \right\}  
&= 
 n^{-6} \text{tr} \Big\{  xx^T \Omega  xx^T  P_{\s{E}} \left(  xx^T \right)  \Big\} - 2 n^{-4} \text{tr} \Big\{ \Omega  xx^T  P_{\s{E}} \left(  xx^T \right)  \Big\}  \label{eq:e_bias_expanded2} 
+ O(n^{-1}). 
\end{align}

We now explicitly determine the value of the exchangeable bias in \eqref{eq:e_bias_expanded2}. By assumption,
\begin{align}
     |\Theta_i |^{-1} x^T \s{S}_i x = \psi_i + O(n^{-1/2}), \quad (i = 1,\ldots, 5 ), \nonumber
\end{align}
recalling that $E( xx^T ) = \Psi = \sum_{i = 1}^5 \psi_i \s{S}_i$. Then, the projection of $xx^T$ onto the exchangeable space is
\begin{align}
    P_{\s{E}} \left(  xx^T \right) =  \sum_{i = 1}^5 \psi_i \s{S}_i + O(n^{-1/2}), \label{eq:pe_a}
\end{align}
and the scaled true variance is
\begin{align}
    n\text{var}(\hat{\beta}_2) = n^{-3} x^T \Omega x &= \sum_{i=1}^5 \frac{| \Theta_i |}{n^3} \psi_i \phi_i + O(n^{-1/2}). \label{eq:v_true_a}
\end{align}
Using \eqref{eq:pe_a} and \eqref{eq:v_true_a}, the first term in the exchangeable bias in \eqref{eq:e_bias_expanded2} is
\begin{align}
 n^{-6} \text{tr} \Big\{  xx^T \Omega  xx^T  P_{\s{E}} \left(  xx^T \right)  \Big\} &= (n^{-3}x^T \Omega x)\left\{n^{-3}x^T  P_{\s{E}} \left(  xx^T \right)  x \right\}, \nonumber \\
 &= \left(  \sum_{i=1}^5 \frac{| \Theta_i |}{n^3} \psi_i \phi_i  \right) \left(  \sum_{j=1}^5 \frac{| \Theta_j |}{n^3} \psi_j^2  \right)  + O(n^{-1/2}). \nonumber 
\end{align}
The second term in the exchangeable bias in \eqref{eq:e_bias_expanded2} is 
\begin{align}
- 2 n^{-4} \text{tr} \Big\{ \Omega  xx^T  P_{\s{E}} \left(  xx^T \right)  \Big\} &= - 2 n^{-4} \sum_{i = 1}^5 \sum_{j=1}^5 \phi_i \psi_j x^T \s{S}_i \s{S}_j x. \nonumber 
\end{align}
Now we must determine $n^{-4} x^T \s{S}_i \s{S}_j x$ for each $i$ and $j$. Recalling that $\s{S}_{i}$ is a matrix that indicates pairs of relations $jk$ and $lm$ that share an actor in the $i^{th}$ manner, 
\begin{align}
    x^T  \s{S}_i \s{S}_j x &= \sum_{(rs, tu) \in \Theta_i} \sum_{(rs, ab) \in \Theta_j} x_{tu} x_{ab}. \label{eq:sumnew_sisj}
\end{align}
Then, the sums \eqref{eq:sumnew_sisj} may be determined by examining the relationships between the actors in relations $tu$ and $ab$. First, for any $i$ and $j$, $tu$ must share at least one actor with $ab$, as the expectation  $\E( x_{tu} x_{ab} )$ is zero otherwise. This result implies that every sum in \eqref{eq:sumnew_sisj} is at most $O(n^4)$.  
As an example, we consider $i=3$ and $j=3$. In this case, $tu$ is of the form $ru$ and $ab$ is of the form $rb$. Whenever $b \neq u$, the expectation $\E( x_{ru} x_{rb} ) = \psi_3$. When $b=u$, the expectation $\E( x_{ru} x_{rb} ) = \psi_1$, although there are an order of magnitude fewer of these terms. Finally, similar to the expectation argument, the variance of the sums in \eqref{eq:sumnew_sisj} consists of at most $O(n^6)$ finite terms, and thus tends to zero. Then, the sum tends to the limit of its expectation. Summarizing, we have that
\begin{align}
    n^{-4}\sum_{(rs, tu) \in \Theta_3} \sum_{(rs, ab) \in \Theta_3} x_{tu} x_{ab} &=
 n^{-4} \sum_{(rs, tu) \in \Theta_3} \sum_{(rs, ab) \in \Theta_3} \E( x_{tu} x_{ab} ) + O(n^{-1/2}), \nonumber \\
 &= \frac{n}{2} n^{-4} |\Theta_3|\psi_3 + n^{-3} |\Theta_1| \psi_1 +  O(n^{-1/2}), \nonumber \\
 &= \psi_3 + O(n^{-1/2}). \nonumber
 \end{align}
Similar counting operations give the following result, for $i \le j$,
\begin{align}
    n^{-4}x^T  \s{S}_i \s{S}_j x  
    &= \begin{cases} 
    \psi_3  + O(n^{-1/2}), &i=j=3 \\
    \psi_4  + O(n^{-1/2}), &i=j=4 \\
    \psi_3 + \psi_4  + O(n^{-1/2}), &i=j=5 \\
    \psi_5  + O(n^{-1/2}), &i=3, \ j=5 \\
    \psi_5  + O(n^{-1/2}), &i=4, \ j=5 \\
    O(n^{-1/2}) &\mbox{otherwise. } 
    \end{cases} \nonumber
\end{align}
Substituting, we have the following expression for the bias of the exchangeable estimator
\begin{align}
    n^2\text{Bias}\left\{ (\hat{V}_{E} )_{22} \right\}  
&= (\phi_3 \psi_3 + \phi_4 \psi_4 + 2\phi_5 \psi_5) (\psi_3^2 + \psi_4^2 + 2\psi_5^2) \ldots \label{eq:bias_e_asymptotic}  \\
 &\hspace{.5in} -2 \left\{ \phi_3(\psi_3^2 + \psi_5^2) +  \phi_4(\psi_4^2 + \psi_5^2) + 2\phi_5 \psi_5(\psi_3 + \psi_4)  \right\}
 + O(n^{-1/2}). \nonumber
\end{align}

We now return to the bias of the dyadic clustering estimator in \eqref{eq:dc_bias_expanded2}. The first term in the bias of the dyadic clustering estimator is 
{\small 
\begin{align}
 n^{-6} \text{tr} \Big\{  xx^T \Omega  xx^T  P_{\s{D}} \left(  xx^T \right)  \Big\} 
&= \left(  \sum_{i=1}^5 \frac{| \Theta_i |}{n^3} \psi_i \phi_i  \right) \left\{ n^{-3} x^T  P_{\s{D}} \left(  xx^T \right) x  \right\}  + O(n^{-1/2}), \nonumber \\
&= \left(  \sum_{i=1}^5 \frac{| \Theta_i |}{n^3} \psi_i \phi_i  \right) \left( n^{-3} \sum_{i=1}^5 \sum_{jk, lm \in \Theta_i}x_{jk}^2 x_{lm}^2 \right)  + O(n^{-1/2}). \label{eq:dc_sum}
\end{align}
}
As with the exchangeable estimator, the sum in \eqref{eq:dc_sum} converges to the limit of its expectation, as the variance tends to zero. Then, using Iserlis' theorem, 
\begin{align}
 n^{-3} \sum_{i=1}^5 \sum_{jk, lm \in \Theta_i}x_{jk}^2 x_{lm}^2  &= n^{-3} \sum_{i=1}^5 \sum_{jk, lm \in \Theta_i}\E( x_{jk}^2 x_{lm}^2 ) + O(n^{-1/2}), \nonumber \\
 &= n^{-3} \sum_{i=1}^5 |\Theta_i| (1 + 2 \psi_i^2) + O(n^{-1/2}), \nonumber
\end{align}
using the assumption that $\psi_1 = 1$.
Then, the first term in the bias of the dyadic clustering estimator in \eqref{eq:dc_bias_expanded2} is 
\begin{align}
    n^{-6}  \text{tr} \Big\{  xx^T \Omega  xx^T  P_{\s{D}} \left(  xx^T \right)  \Big\}
    &= \left(  \sum_{i=1}^5 \frac{| \Theta_i |}{n^3} \psi_i \phi_i  \right) \left( 4 + 2\sum_{j=1}^5 \frac{| \Theta_j |}{n^3} \psi_j^2  \right)  + O(n^{-1/2}). \nonumber
\end{align}
We now turn to the second term in the bias of the dyadic clustering estimator in \eqref{eq:dc_bias_expanded2}. 
\begin{align} 
- 2 n^{-4}  x^T  P_{\s{D}} \left(  xx^T \right) \Omega  x &= - 2 n^{-4} \sum_{i=1}^5 \sum_{j=1}^5\phi_i x^T \s{S}_i \left( xx^T \circ \s{S}_j \right) x, \label{eq:dc_term33}
\end{align}
where `$\circ$' represents element-wise multiplication. As with the exchangeable estimator, the sum in \eqref{eq:dc_term33} has variance that tends to zero, and converges to its expectation. Writing in sum form and using Isserlis' theorem,
\begin{align} 
n^{-4}  x^T \s{S}_i \left( xx^T \circ \s{S}_j \right) x &= n^{-4} \sum_{rs, tu \in \Theta_i} \sum_{rs, ab \in \Theta_j}  \E( x_{rs} x_{tu} x_{ab}^2 ) + O(n^{-1/2}), \nonumber \\
&= n^{-4} \sum_{rs, tu \in \Theta_i} \sum_{rs, ab \in \Theta_j}   \E( x_{rs} x_{tu} ) + 2\E( x_{ab} x_{rs}) \E( x_{ab} x_{tu} ) + O(n^{-1/2}) \nonumber, \\
&= n^{-3}|\Theta_j| \psi_i + 2n^{-4}  \sum_{rs, tu \in \Theta_i}  \sum_{rs, ab \in \Theta_j} \E( x_{ab} x_{rs} ) \E( x_{ab} x_{tu} ) + O(n^{-1/2}) \nonumber, \\
&= n^{-3}|\Theta_j| \psi_i + 2n^{-4}  \sum_{rs, tu \in \Theta_i}  \sum_{rs, ab \in \Theta_j} \psi_j \E( x_{ab} x_{tu} ) + O(n^{-1/2}) \nonumber, \\
&= n^{-3}|\Theta_j| \psi_i + 2n^{-4}  \psi_j x^T \s{S}_i \s{S}_j x + O(n^{-1/2}). \nonumber 
\end{align}
Substituting into \eqref{eq:dc_term33}, the second term in the bias of the dyadic clustering estimator is 
\begin{align}
   - 2 n^{-4}  x^T  P_{\s{D}} \left(  xx^T \right) \Omega  &= -8 \left(  \sum_{i=1}^5 \frac{| \Theta_i |}{n^3} \psi_i \phi_i  \right) 
   - 4 n^{-4} \sum_{i = 1}^5 \sum_{j=1}^5 \phi_i \psi_j x^T \s{S}_i \s{S}_j x
   + O(n^{-1/2}). \label{eq:dc_term3_final}
\end{align}
Then, combining \eqref{eq:dc_sum} and \eqref{eq:dc_term3_final}, we obtain the following expression for the bias of the dyadic clustering estimator,
{ \small
\begin{align}
n^2 \text{Bias}\left\{ (\hat{V}_{DC} )_{22} \right\} 
&=  \left(  \sum_{i=1}^5 \frac{| \Theta_i |}{n^3} \psi_i \phi_i  \right) \left( -4 + 2\sum_{j=1}^5 \frac{| \Theta_j |}{n^3} \psi_j^2\right) 
- 4 n^{-4} \sum_{i = 1}^5 \sum_{j=1}^5 \phi_i \psi_j x^T \s{S}_i \s{S}_j x
+ O(n^{-1/2}), \nonumber
\end{align}
}
wherein we recognize the bias of the exchangeable estimator, and
\begin{align}
n^2 \text{Bias}\left\{ (\hat{V}_{DC} )_{22} \right\} 
&=
2n^2 \text{Bias}\left\{ (\hat{V}_{E} )_{22} \right\} 
- 4 \left(  \sum_{i=1}^5 \frac{| \Theta_i |}{n^3} \psi_i \phi_i  \right) 
 + O(n^{-1/2}). \label{eq:bias_equality}
\end{align}
Since we have $\Omega$ positive definite and 
\begin{align}
    0 \ge -4 n^{-3} x^T \Omega x &= - 4 \left(  \sum_{i=1}^5 \frac{| \Theta_i |}{n^3} \psi_i \phi_i  \right) + O(n^{-1/2}). \nonumber
\end{align}
Thus, we have shown that the bias of the dyadic clustering estimator is less than twice the bias of the exchangeable estimator. To establish that this same property holds for the absolute values of the biases, it remains to show that $ n^2 \text{Bias}\{ (\hat{V}_{E} )_{22} \} \le x^T \Omega x + O(n^{-1/2})$, that is, that the right hand side of \eqref{eq:bias_equality} is negative and bounded away from zero. 
Combining terms in \eqref{eq:bias_e_asymptotic}, the bias of the exchangeable estimator is 
\begin{align}
    n^2 \text{Bias} \left\{ (\hat{V}_{E} )_{22} \right\} 
&= \phi_3 \psi_3 \{ \psi_3^2 + \psi_4^2 + 2\psi_5^2 - 2 \psi_3(1 + \psi_5^2 / \psi_3^2) \} \ldots \nonumber \\
& \hspace{.5in}\phi_4 \psi_4 \{ \psi_3^2 + \psi_4^2 + 2\psi_5^2 - 2 \psi_4(1 + \psi_5^2 / \psi_4^2) \} \ldots \nonumber \\
 &\hspace{.5in} 2 \phi_5 \psi_5 \{ \psi_3^2 + \psi_4^2 + 2\psi_5^2 - 2(\psi_3 + \psi_4) \}  + O(n^{-1/2}). \nonumber
\end{align}
Now, using Proposition~\ref{prop:eval_relations}, we have that 
\begin{align}
    \psi_3^2 + \psi_4^2 + 2\psi_5^2 - 2 \psi_3(1 + \psi_5^2 / \psi_3^2) \le \psi_3^2 + \psi_4^2 + 2\psi_5^2 \le 1, \label{eq:bias_e_explicit} \\
    \psi_3^2 + \psi_4^2 + 2\psi_5^2 - 2 \psi_4(1 + \psi_5^2 / \psi_4^2) \le \psi_3^2 + \psi_4^2 + 2\psi_5^2 \le 1, \nonumber \\
    \psi_3^2 + \psi_4^2 + 2\psi_5^2 - 2(\psi_3 + \psi_4) \le \psi_3^2 + \psi_4^2 + 2\psi_5^2 \le  1, \nonumber 
\end{align}
and thus, 
\begin{align}
   n^2 \text{Bias}\left\{ (\hat{V}_{E} )_{22} \right\} 
&\le  (\phi_3 \psi_3  + \phi_4 \psi_4 + 2 \phi_5 \psi_5 ) + O(n^{-1/2}), \nonumber \\
&\le  n^{-3} x^T \Omega x + O(n^{-1/2}). \label{eq:bias_e_bound}
\end{align}
To establish a lower bound on the bias of the exchangeable estimator, using \eqref{eq:phi5_bound} and \eqref{eq:last_bound} in Proposition~\ref{prop:eval_relations},
\begin{align}
    \psi_3\psi_4 &\ge \psi_5^2,\nonumber \\
    \psi_3 + \psi_4 &\ge\psi_3 + \psi_5^2/\psi_3,\nonumber \\
    -(\psi_3 + \psi_4) &\le -(\psi_3 + \psi_5^2/\psi_3), \label{eq:lb_bias1}
\end{align}
and a similar argument holds for $ - \psi_4(1 + \psi_5^2 / \psi_4^2)$.
Then, recalling from Proposition~\ref{prop:eval_relations} that 
\[2 \psi_5^2 \ge \frac{1}{2}\psi_3^2 + \frac{1}{2}\psi_4^2 + \psi_3 \psi_4,\]
and combining with \eqref{eq:lb_bias1}, we obtain the following bound on the key expressions in \eqref{eq:bias_e_explicit},
\begin{align}
     \psi_3^2 + \psi_4^2 + 2\psi_5^2 - 2 \psi_3(1 + \psi_5^2 / \psi_3^2) &\ge \psi_3^2 + \psi_4^2 + 2\psi_5^2 - 2 (\psi_3 + \psi_4), \label{eq:lb_bias_boounds} \\
     &\ge (\psi_3 + \psi_4)^2 - 2 (\psi_3 + \psi_4) + \frac{1}{2}\psi_3^2 + \frac{1}{2}\psi_4^2 - \psi_3 \psi_4, \nonumber \\
     & \ge -1. \nonumber
\end{align}
To obtain the final inequality in \eqref{eq:lb_bias_boounds}, we use the fact that $0 \le \phi_3 + \phi_4 \le 1$ to find that 
\[ (\psi_3 + \psi_4)^2 - 2 (\psi_3 + \psi_4) \ge -1, \]
and that
\[ \frac{1}{2}\psi_3^2 + \frac{1}{2}\psi_4^2 - \psi_3 \psi_4 \ge 0. \]
A similar argument to that in \eqref{eq:lb_bias_boounds} can be obtained starting with $\psi_3^2 + \psi_4^2 + 2\psi_5^2 - 2 \psi_4(1 + \psi_5^2 / \psi_4^2)$.
Then, combining \eqref{eq:lb_bias_boounds} with the complete expression for the bias in \eqref{eq:bias_e_explicit} and the upper bound in \eqref{eq:bias_e_bound},
\begin{align}
    n^2 \Big| \text{Bias} \left\{ (\hat{V}_{E} )_{22}  \right\} \Big|
&\le n^{-3} x^T \Omega x + O(n^{-1/2}). \label{eq:abs_bias_e_bound}
\end{align}

Substituting the bound on the absolute value of the exchangeable bias \eqref{eq:abs_bias_e_bound} into  \eqref{eq:bias_equality}, we have that 
\begin{align}
n^2 \text{Bias}\left\{ (\hat{V}_{DC} )_{22} \right\} 
&\le - 2 n^{-3} x^T \Omega x + O(n^{-1/2}) \le - 2n^2 \Big| \text{Bias} \left\{ (\hat{V}_{E} )_{22} \right\} \Big| + O(n^{-1/2}),
\end{align}
which establishes the result. To complete the proof, we note that by the explicit expression for $n^2 \text{Bias}\{ (\hat{V}_{E} )_{22} \} $ in \eqref{eq:bias_e_asymptotic}, the exchangeable bias scaled by $n^2$ is $O(1)$. Similarly,  $n^2 \text{Bias}\{ (\hat{V}_{DC} )_{22} \} $ is $O(1)$ by \eqref{eq:bias_equality}.
\qed

\subsection{A corollary to Theorem~\ref{thm:bias}}

In the following corollary we expand the conditions of the bias theorem slightly, such that the error array is not strictly in the exchangeable, similar to Corollary~\ref{cor:46}.
\begin{corollary}
\label{cor:extend_bias}
Theorem~\ref{thm:bias} holds under the relaxation exchangeability of $\xi$, where the first two entries in the covariance matrix of $\xi$ are allowed to be heterogenous, that is,  $\E( \xi_{jk}^2 ) = \sigma^2_{jk}$ and  $\E( \xi_{jk} \xi_{kj} ) = \theta_{jk}$, where $|\theta_{jk}| <\sigma^2_{jk}   < C < \infty$. Similarly, the conditions hold when $\E( x_{jk}x_{jk}^T) = A_{jk}$ and $\E( x_{jk}x_{kj}^T) = D_{jk}$, where the entries in matrices $A_{jk}$ and $D_{jk}$ are bounded.  
\end{corollary}
The result follows directly from noting that the proof of Theorem~\ref{thm:bias} depends only on  the values of $\E( x_{jk} x_{lm} )$ and $\E( \xi_{jk} \xi_{lm} )$ for $(jk, lm) \in \Theta_i$ for $i \in \{3,4,5 \}$, at least up to a vanishing constant. Thus, so long as the expectations  $\E( x_{jk} x_{lm} )$ and $\E( \xi_{jk} \xi_{lm} )$ for $(jk, lm) \in \Theta_i$ for $i \in \{1,2 \}$ are bounded, Theorem~\ref{thm:bias} holds.
Corollary~\ref{cor:extend_bias} states that even under heteroskedasticity in the covariance matrix of the error vector $\xi$, the exchangeable estimator still outperforms the dyadic clustering estimator in bias. Only when the entries in $\Omega$ corresponding to configurations (b), (c), and (d) are heterogeneous does the theory fail to hold.

\section{Theoretical properties of the exchangeable estimator}
\label{app:theory}
In this section, we provide additional theoretical results for the proposed exchangeable estimator. Namely, we show that the proposed exchangeable estimator is consistent and provides an improvement in mean-square error over the dyadic clustering estimator, when the assumed model is correct. First, we provide conditions under which the theory holds, and prove asymptotic normality of ordinary least squares (which is necessary for the results that follow). For brevity, we focus on the ordinary least squares coefficient estimator and the case of a singe time period or array observation, where $R=1$.

\subsection{Conditions for theory}
We define the conditions under which we establish the theory, starting with a formal definition of the class of exchangeable covariance matrices.
\begin{definition}
\label{def:exchmat}
An {exchangeable covariance matrix} is defined as $\Omega_{E} = \E[\xi \xi^T]$ arising from mean-zero random vector $\xi = {\rm vec}(\Xi)$, where $\Xi$ is a jointly exchangeable random matrix with entries $\xi_{ij}$ independent $\xi_{kl}$ whenever $\{i,j\}\cap\{k,l\} \not= \varnothing$. 
% $\Omega_{E}$ has five unique terms consisting of a variance and four covariances: $\{\sigma^2, \phi_a, \phi_b,\phi_c, \phi_d \}$. 
\end{definition}

For the theoretical assessment of the exchangeable estimator, we take $X$ random, but still evaluate $\hat{\beta}$, $\hat{V}_{DC}$, and $\hat{V}_E$ conditional on $X$. We assume the {rows} of the matrix $X$ are jointly exchangeable, meaning that a reordering of the rows $\{\x_{jk} \}_{j,k=1}^n$ to $\{\x_{\pi(j) \pi(k)} \}_{j,k=1}^n$ leaves the distribution of matrix $X$ invariant
for any permutation $\pi(.)$. As with the dependence in the errors, we assume that two {rows} of $X$ that correspond to relations which do not share an actor are independent, that is row $\x_{ij}^T$ is independent row $\x_{kl}^T$ whenever $\{i,j\}\cap\{k,l\} = \varnothing$. This dependence in the rows of matrix $X$ (along with some assumptions on the finiteness of its moments) implies the following:
\begin{align}
\E ( x_{jk} x_{lm}^T ) = M_i, \ \  \ (jk, lm) \in \Theta_i, \ \ \ (i  = 0,a,b,c,d ), \label{eq_xexch_limit}
\end{align}
where $ \Theta_i$ is the set of  {pairs} of relations $(jk,lm)$ that share a member in the $i$th manner and `$0$' refers to self-relation (i.e. variance). 

The theoretical setting is as follows:

\begin{condition}
\label{cond:1}
Define the following data generating process:
\begin{enumerate}[label=(A\arabic*)]
\item
The true data generating model is ${y} = X \beta + {\xi}$, where the errors ${\xi}$ are mean-zero with exchangeable covariance matrix as  in Definition~\ref{def:exchmat}.
\item
At least one of $\{\phi_b, \phi_c, \phi_d\}$ is nonzero. 
\end{enumerate}
In addition, consider the following regularity conditions:
\begin{enumerate}[label=(B\arabic*)]
\item The covariate matrix $X$ has rows that are jointly exchangeable with at least one of $\{M_i \}_{i \in \{b,c,d \}}$ in \eqref{eq_xexch_limit} nonzero, and where row $\x_{ij}^T$ is independent row $\x_{kl}^T$ whenever $\{i,j\}\cap\{k,l\} \not= \varnothing$.
\item The fourth moments of the errors and the eight moments of the covariates are bounded: $\E( | \xi_{jk}| ^{4} ) < C < \infty$ and $\max_{l \in \{1,2,\ldots,p \}}\E ( | x_{jk}^{(l)}|^8 ) < C' < \infty$ where $\x_{jk} = [x_{jk}^{(1)}, \ldots, x_{jk}^{(p)}]^T$.
\item 
The errors ${\Xi}$ and covariates $X$ are independent.
\item $X$ is full rank. 
\end{enumerate}
\end{condition}

\subsection{Additional theoretical results}

The following theorem establishes asymptotic normality of $\hat{\beta}$, and supports the normal approximation to $\hat{\beta}$ to produce confidence intervals.
\begin{theorem}
\label{thm:AN}
Under Conditions~\ref{cond:1},
\begin{gather*}
n^{1/2}(\hat{\beta} - \beta) \rightarrow_d {\rm N}(0, V_0\big), \\
V_0 = E(x_{jk} x_{jk}^T)^{-1} \big\{ \phi_b E(x_{jk} x_{jl}^T) + \phi_c E(x_{jk} x_{lk}^T) + 2 \phi_d E(x_{jk} x_{lk}^T)  \big\}
    E(x_{jk} x_{jk}^T)^{-1},  
\hspace{.1in} (j \neq k \neq l),
\end{gather*}
where `$\rightarrow_d$' denotes convergence in distribution.
\end{theorem}
Theorem~\ref{thm:AN} is similar to Proposition 3.2 in \citet{tabord2017inference}, although with unbounded errors.  It may be seen as an extension of central limit theorems for sums over infinitely exchangeable arrays \citep{fortini2012central} to regression scenarios, and belongs to a broader literature on distributions of  random processes that satisfy a symmetry property \citep{kallenberg2006probabilistic, austern2018limit}.

We establish consistency of the proposed estimator, and its improvement in performance over the dyadic clustering estimator, in the following three theorems. 
\begin{theorem}
\label{thrm:consistency} 
Under Conditions~\ref{cond:1}, 
the exchangeable covariance estimator is consistent in the sense that 
\begin{align}
n\hat{V}_E - n \var( \hat{\beta} ) \rightarrow_p 0, \hspace{.1in}  n \rightarrow \infty, \nonumber
\end{align}
where `$\rightarrow_p$' denotes convergence in probability.
\end{theorem}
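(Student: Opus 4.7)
The plan is to decompose $n\hat V_E - n\,\text{V}[\hat\bbeta]$ into five pieces, one for each unique parameter in the exchangeable covariance class, and show that each piece vanishes in probability. Since assumption (A1) places the true error covariance in the same class, both $\hat\Omega_E$ and $\Omega_E$ are linear combinations of the common basis $\{I_{n(n-1)},\mathcal{S}_a,\mathcal{S}_b,\mathcal{S}_c,\mathcal{S}_d\}$, so we may write
\begin{align*}
n\hat V_E - n\,\text{V}[\hat\bbeta]
= (\hat\sigma^2-\sigma^2)\,n(X^TX)^{-1}
+ \sum_{s\in\{a,b,c,d\}} (\hat\phi_s-\phi_s)\,n(X^TX)^{-1}X^T\mathcal{S}_s X(X^TX)^{-1},
\end{align*}
and it suffices to show each of the five summands is $o_p(1)$.

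My first step is to pin down the order of the $X$-dependent factors. By (B1)--(B2) and a weak law of large numbers applied to the $n(n-1)$ i.i.d.\ matrices $\x_{ij}\x_{ij}^T$, $X^TX/[n(n-1)] \rightarrow_p E_{XX}$; together with (B4) and continuous mapping, $n(n-1)(X^TX)^{-1} \rightarrow_p E_{XX}^{-1}$. Similarly, each entry of $|\mathcal{S}_s|^{-1}X^T\mathcal{S}_s X$ is an average over $|\mathcal{S}_s|$ bounded-mean products $\x_{ij}\x_{kl}^T$, and Markov's inequality under (B2) gives $X^T\mathcal{S}_s X = O_p(|\mathcal{S}_s|)$, where $|\mathcal{S}_s|\in\{n(n-1),\,n(n-1)(n-2),\,2n(n-1)(n-2)\}$. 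Collecting scales: $n(X^TX)^{-1}X^T\mathcal{S}_s X(X^TX)^{-1} = O_p(1)$ for $s\in\{b,c,d\}$ (where $|\mathcal{S}_s|\asymp n^3$) and $O_p(n^{-1})$ for $s=a$; likewise $n(X^TX)^{-1} = O_p(n^{-1})$.

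My second and most involved step is to show $\hat\sigma^2 \rightarrow_p \sigma^2$ and $\hat\phi_s \rightarrow_p \phi_s$ for each $s\in\{a,b,c,d\}$. I substitute $e_{ij}=\xi_{ij}-(\hat\bbeta-\bbeta)^T\x_{ij}$ into each empirical moment and expand; each estimator then splits into an oracle piece (quadratic in the true $\xi$'s) plus correction terms linear or quadratic in $\hat\bbeta-\bbeta$. Since $\hat\bbeta-\bbeta=O_p(n^{-1/2})$ by Theorem~\ref{thm:AN}, and (B2)--(B3) bound the relevant sample cross moments $|\mathcal{S}_s|^{-1}\sum\x\,\xi = O_p(n^{-1/2})$ and $|\mathcal{S}_s|^{-1}\sum\x\x^T = O_p(1)$, the correction terms are $O_p(n^{-1})$. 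The oracle piece, e.g.\ $\tilde\phi_b = [n(n-1)(n-2)]^{-1}\sum_{(i,j,k)\text{ distinct}}\xi_{ij}\xi_{kj}$, has mean exactly $\phi_b$ under (A1); I bound its variance by standard $U$-statistic-style enumeration, noting that pairs of triples with disjoint actor sets contribute exactly zero covariance (by the independence clause of Definition~\ref{def:exchmat}) while the remaining $O(n^5)$ overlapping pairs each contribute a quantity bounded via Cauchy--Schwarz by the fourth-moment assumption in (B2). Dividing by $[n(n-1)(n-2)]^2\asymp n^6$ gives $\text{Var}(\tilde\phi_b)=O(n^{-1})$, and Chebyshev yields $\tilde\phi_b\rightarrow_p\phi_b$; analogous arguments handle $\hat\sigma^2,\hat\phi_a,\hat\phi_c,\hat\phi_d$.

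Combining Steps 1 and 2 via Slutsky completes the proof: every summand in the decomposition is an $o_p(1)$ parameter-estimation error times an $O_p(1)$ (or $O_p(n^{-1})$) factor in $X$, hence $o_p(1)$. The main obstacle will be the variance bookkeeping in Step 2: for each of the four off-diagonal configurations one must explicitly enumerate overlap patterns of two index tuples and verify that the fully disjoint case contributes zero, so that the surviving overlap count is strictly one order of $n$ below the square of the normalization. A secondary technical point is checking that the residual cross-moments involving $\hat\bbeta-\bbeta$ do not inherit a slower rate than $n^{-1/2}$, which follows routinely from the i.i.d.\ covariate assumption (B1) together with independence of $X$ and $\Xi$ in (B3).
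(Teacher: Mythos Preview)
Your proposal is correct and follows essentially the same route as the paper's proof: both decompose $n\hat V_E - nV[\hat\bbeta]$ into a sum over the five exchangeable parameters, establish that the $X$-dependent sandwich factors are $O_p(1)$ (the paper shows the slightly stronger convergence of the normalized middle sums to $E_{XX}$ or $\E[\x_{jk}]\E[\x_{jk}]^T$), and then prove $\hat\phi_s\to_p\phi_s$ by splitting into an oracle average $\tilde\phi_s$ in the true errors plus residual corrections. The variance bound for $\tilde\phi_s$ via overlap counting of index tuples and the handling of the correction terms through $\hat\bbeta-\bbeta=o_p(1)$ match the paper's Steps~2 and~3 almost exactly.
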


\begin{theorem}
\label{thm_mse}
Under Conditions~\ref{cond:1}, the mean-square error of the exchangeable estimator is less than that of the dyadic clustering estimator with probability approaching 1, that is
\[
\text{pr} \left[  E \left\{  
\left( \hat{V}_{E} - \var(\hat{\beta} \mid X) \right)^2 \mid X \right\} \le E \left\{ \left( \hat{V}_{DC} - \var(\hat{\beta} \mid X) \right)^2 \mid X \right\} \right] \rightarrow 1.
\]
\end{theorem}

\subsection{Corollary to theory}

The proofs of Theorem~\ref{thm:AN} through Theorem~\ref{thm_mse} 
depend only on the uniformity of the covariance entries in $\Omega$ corresponding to $\phi_b$, $\phi_c$, and $\phi_d$. Then, we can relax the assumption of exchangeability of the error vector $\xi$ to allow some heterogeneity, namely heteroskedasticity, under which the theory still holds.

\begin{corollary}
\label{cor:46}
Theorem~\ref{thm:AN} through Theorem~\ref{thm_mse} hold under the relaxation of (A2) and (B1), where the first two entries in the covariance matrix of $\xi$ are allowed to be heterogenous, that is,  $\E( \xi_{jk}^2 ) = \sigma^2_{jk}$ and  $\E( \xi_{jk} \xi_{kj} ) = \theta_{jk}$, where $|\theta_{jk}| <\sigma^2_{jk}   < C < \infty$. Similarly, the conditions hold when $\E( x_{jk}x_{jk}^T) = A_{jk}$ and $\E( x_{jk}x_{kj}^T) = D_{jk}$, where the entries in matrices $A_{jk}$ and $D_{jk}$ are bounded.  
\end{corollary}

\section{Proof of asymptotic normality of ordinary least squares}
\label{app:AN}
\subsection{Notation and outline of proof}
For this proof, and throughout the remainder of the  supplementary material, we adopt slightly different notation to simplify the representation of the exchangeable covariance estimator. Recall that the exchangeable covariance estimator for the OLS estimating equations is 
\[\hat{V}_E = (X^T X)^{-1}X^T \hat{\Omega}_E X (X^T X)^{-1},\]
where $\hat{\Omega}_E$ is the exchangeable estimate of the error covariance matrix, consisting of five averages of residual products. Here we express $\hat{\Omega}_E$ as
\begin{align}
\hat{\Omega}_E &= \sum_{i = 1}^5 \hat{\phi}_i \mathcal{S}_i, \hspace{.15in}
\hat{\phi}_i = \frac{ \sum_{(jk,\ell m) \in \Theta_i} e_{jk} e_{\ell m}}{|\Theta_i|},  \hspace{.15in} (i  = 1,\ldots,5). \label{eq:notation_chge}
\end{align}
This amounts to mapping $\sigma^2 \mapsto \phi_1$, $\phi_a \mapsto \phi_2$, ..., $\phi_d \mapsto \phi_5$, and re-indexing the $\mathcal{S}$ and $M$ matrices accordingly. Additionally, when we consider sequences of jointly exchangeable random variables $\{ W_{ij} \}_{i,j=1}^n$, it is understood that the sequence arises from a relational array such that entries with $i=j$ are undefined. Thus, sums over the sequence are of $n(n-1)$ terms and we define $\sum_{ij} W_{ij}  = \sum_{i \neq j} W_{ij}$.

We work in the asymptotic regime where actors are added incrementally to the relational data set, i.e. $n$ is continually increasing. To establish asymptotic normality of $\hat{\bbeta}$, we wish to show
\begin{align}
&n^{1/2}(\hat{\bbeta} - \bbeta) \rightarrow_d {\rm N}(0, M_1^{-1} \big[ \phi_3 M_3 + \phi_4 M_4 + 2\phi_5 M_5 \big]
   M_1^{-1}  \big),   \nonumber 
\end{align}
where $\{M_i \}_{i \in \{1,3,4,5\}}$ are as in \eqref{eq_xexch_limit} and `$\rightarrow_d$' denotes element-wise convergence in distribution.

The motivation for the proof argument follows from the expression
\begin{align}
n^{1/2}(\hat{\bbeta} - \bbeta) = \left( \frac{\sum_{jk} \x_{jk} \x_{jk}^T}{n(n-1)} \right)^{-1}   \frac{n^{1/2} \sum_{jk} \x_{jk} \xi_{jk}}{n(n-1)}. \label{eq:beta_an1}
\end{align}
We state that $\left( \frac{\sum_{jk} \x_{jk} \x_{jk}}{n(n-1)} \right)^{-1}$ converges in probability to $M_1^{-1}$,  and then show asymptotic normality of the second multiplicative term in \eqref{eq:beta_an1}. 

In analyzing $\{\x_{ij} \xi_{ij} \}_{i,j=1}^n$, %we note that, 
by condition (B1), the joint exchangeability and  independence of non-overlapping pairs of the sequence $\{\xi_{ij} \}_{i,j=1}^n$ extends to the component sequences in the vectors $\{\x_{ij} \xi_{ij} \}_{i,j=1}^n$.
Thus, to prove asymptotic normality of $\hat{\bbeta}$, we first prove a theorem stating that the average of a mean-zero sequence of jointly exchangeable random variables is asymptotically normal. Specifically, for $\{ W_{ij} \}_{i,j=1}^n$ mean zero and jointly exchangeable, we show
\begin{align}
k_n \frac{\sum_{ij} W_{ij}}{\sigma} \rightarrow_d  {\rm N}(0,1) \label{eq:exch_asymp}
\end{align}
for some normalizing constant $\sigma$ and fixed sequence $k_n \rightarrow 0$ as $n\rightarrow \infty$.

To prove \eqref{eq:exch_asymp}, we rely on a result from \citet{bolthausen1982central}, as well as a supporting lemma which we present here.  Below we outline the significance of these results in the proof. 
\begin{itemize}
\item Lemma~\ref{lem:clt_characteristic} (\cite{bolthausen1982central}):  Provides a sufficient condition for asymptotic normality of a sequence of measures based on the standard normal characteristic function. 
\item {Lemma~\ref{lem:an_var_bound}: } Provides a bound for a variance that surfaces in the proof of asymptotic normality in \eqref{eq:exch_asymp}.
\end{itemize}
 From \eqref{eq:exch_asymp}, we immediately have the marginal asymptotic normality of the sample mean of each of the vector components in the sequence $\{\x_{ij} \xi_{ij} \}_{i,j=1}^n$. To establish joint asymptotic normality, we employ the Cram{\'e}r-Wold device \citep{cramer1936some}, where asymptotic normality of $\{ \v^T \x_{ij} \xi_{ij} \}_{i,j=1}^n$, for all $\v \in \R^p$ with $||\v|| = 1$, establishes joint normality. To achieve the asymptotic normality of this inner product, we simply recognize that this inner product is itself the mean of an exchangeable sequence of random variables. Joint asymptotic normality of the mean of the sequence of vectors $\{\x_{ij} \xi_{ij} \}_{i,j=1}^n$ establishes joint asymptotic normality of $\hat{\bbeta}$
 via \eqref{eq:beta_an1}.

\subsection{Lemmas and theorem in support of Theorem~\ref{thm:AN}}
The following is Lemma 2 in~\cite{bolthausen1982central} and provides a sufficient condition for asymptotic normality. We abuse notation slightly, letting $i$ be the imaginary unit where appropriate. 
\begin{lemma}[\citet{bolthausen1982central}] 
\label{lem:clt_characteristic}
Let $\nu_n$ be a sequence of probability distributions over $\R$ which satisfies
\begin{enumerate}
\item $\ \sup_n \int x^2 d\nu_n(x) < \infty $, and
\item for all $\lambda \in \R$,   $\ \lim_n \int (i \lambda - x){\rm e}^{i \lambda x}d\nu_n(x) = 0.$
\end{enumerate}
Then, $\nu_n \rightarrow_d {\rm N}(0,1).$
\end{lemma}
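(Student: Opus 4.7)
The plan is to recast the two conditions as statements about the characteristic functions $\phi_n(\lambda) := \int e^{i\lambda x}\, d\nu_n(x)$, identify the first-order ODE $\phi'(\lambda) + \lambda\, \phi(\lambda) = 0$ with initial value $\phi(0)=1$ as the constraint that uniquely pins down the Gaussian characteristic function $e^{-\lambda^2/2}$, and then invoke L\'evy's continuity theorem. This gives a clean characteristic-function argument that avoids explicit tightness and subsequence extraction.

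First, I would use condition 1 to control derivatives of $\phi_n$. By Cauchy--Schwarz, $M := \sup_n \int |x|\, d\nu_n(x) \le \sup_n \bigl(\int x^2\, d\nu_n\bigr)^{1/2} < \infty$. Since $|ix\, e^{i\lambda x}| = |x|$ is $\nu_n$-integrable with a uniform-in-$(\lambda,n)$ bound, differentiation under the integral is justified, giving $\phi_n'(\lambda) = i\int x\, e^{i\lambda x}\, d\nu_n(x)$ and $|\phi_n'(\lambda)| \le M$ for all $n$ and $\lambda$. A direct algebraic rewrite of condition 2, using $\int x\, e^{i\lambda x}\, d\nu_n = -i\phi_n'(\lambda)$, yields the pointwise ODE-type statement
\[
\phi_n'(\lambda) + \lambda\, \phi_n(\lambda) \to 0 \quad \text{for every } \lambda \in \R.
\]

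Next, I would integrate against the Gaussian integrating factor $e^{\lambda^2/2}$. Since $\frac{d}{d\lambda}\bigl[\phi_n(\lambda)\, e^{\lambda^2/2}\bigr] = \bigl[\phi_n'(\lambda) + \lambda\, \phi_n(\lambda)\bigr]\, e^{\lambda^2/2}$ and $\phi_n(0) = \nu_n(\R) = 1$, the fundamental theorem of calculus gives, for each fixed $\lambda$,
\[
\phi_n(\lambda)\, e^{\lambda^2/2} - 1 = \int_0^\lambda \bigl[\phi_n'(t) + t\, \phi_n(t)\bigr]\, e^{t^2/2}\, dt.
\]
On the compact interval between $0$ and $\lambda$, the integrand is bounded in absolute value by $(M + |\lambda|)\, e^{\lambda^2/2}$, uniformly in $n$, so dominated convergence against Lebesgue measure on $[0,\lambda]$ passes the pointwise limit through the integral. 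This yields $\phi_n(\lambda) \to e^{-\lambda^2/2}$ for every $\lambda \in \R$, and L\'evy's continuity theorem then delivers $\nu_n \to {\rm N}(0,1)$ weakly.

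The main obstacle is ensuring that merely \emph{pointwise} convergence of $\phi_n'(\lambda) + \lambda\, \phi_n(\lambda)$ to zero is enough to interchange limit and integral in the display above; this is precisely where condition 1 does the work, by producing the $n$-free envelope $M + |\lambda|$ on the compact integration interval. A secondary care point is that the argument also works for $\lambda<0$ (integrating from $\lambda$ to $0$ in the obvious way), and that continuity of the limit $e^{-\lambda^2/2}$ at the origin is what makes L\'evy's theorem applicable without any separate tightness verification.
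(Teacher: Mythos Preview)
The paper does not prove this lemma; it simply cites it as Lemma~2 of \cite{bolthausen1982central} and uses it as a black box. Your argument is correct and self-contained: rewriting condition~(2) as $\phi_n'(\lambda)+\lambda\phi_n(\lambda)\to 0$, applying the integrating factor $e^{\lambda^2/2}$, and passing to the limit under the integral via the uniform envelope $(M+|\lambda|)e^{\lambda^2/2}$ supplied by condition~(1) all go through as you describe, and L\'evy's continuity theorem finishes.

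For comparison, Bolthausen's original proof uses condition~(1) to obtain tightness of $(\nu_n)$, extracts an arbitrary weakly convergent subsequence, and shows that its limiting characteristic function $\phi$ must satisfy $\phi'(\lambda)+\lambda\phi(\lambda)=0$ with $\phi(0)=1$, forcing the limit to be ${\rm N}(0,1)$; since every subsequential limit agrees, the full sequence converges. Your route replaces the tightness--subsequence machinery by a direct dominated-convergence argument on the finite interval $[0,\lambda]$, which is arguably cleaner but rests on exactly the same ODE characterization of the Gaussian.
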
 \noindent To provide intuition for Lemma~\ref{lem:clt_characteristic}, the integral in condition (2) is identically zero when $\nu_n$ is the standard normal distribution.

The next lemma provides a sufficient condition on the dependence structure in $\{ W_{ij} \}_{i,j=1}^n$ necessary for the proof of asymptotic normality in \eqref{eq:exch_asymp}. Reecall that terms in $\{ W_{ij} \}_{i,j=1}^n$ with $i=j$ are undefined. 
\begin{lemma} 
\label{lem:an_var_bound}
Let $\{ W_{ij} \}_{i,j=1}^n$ be a sequence of jointly exchangeable random variables as in Definition~\ref{def:exchmat} with $||W_{ij}||_4 < L < \infty$, where $||W_{ij}||_p := \E\left( |W_{ij}|^p \right)^{1/p}$ for $p>0$.  
Then,
\begin{align} 
\frac{1}{n^6} \var \left( \sum_{ij} \sum_{kl \in \Theta_{ij}} W_{ij} W_{kl} \right) < \frac{C L^4 }{n} \rightarrow 0 \text{ as  } n\rightarrow \infty,  \nonumber
\end{align}
for some $C<\infty$, where $\Theta_{ij}$ is the set of ordered pairs $(k,l)$ that share an index with $(i,j)$.
\label{lem:var_bound}
\end{lemma}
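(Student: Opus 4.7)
}
My approach is to expand the variance as a sum of covariances and show that only $O(n^5)$ of the $O(n^6)$ summands are nonzero, while each is uniformly bounded in absolute value by $CL^4$. Writing $S = \sum_{ij}\sum_{(k,l)\in\Theta_{ij}} W_{ij}W_{kl}$, we have
\begin{equation*}
V[S] \;=\; \sum_{ij}\sum_{(k,l)\in\Theta_{ij}}\;\sum_{ab}\sum_{(c,d)\in\Theta_{ab}} \text{Cov}\bigl(W_{ij}W_{kl},\,W_{ab}W_{cd}\bigr).
\end{equation*}
Under joint exchangeability together with the non-overlap independence assumption of Definition~\ref{def:exchmat}, the collection $\{W_{ij},W_{kl}\}$ is independent of $\{W_{ab},W_{cd}\}$ whenever $\{i,j,k,l\}\cap\{a,b,c,d\}=\varnothing$, in which case the summand vanishes. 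Hence only quadruples with overlapping index sets can contribute.

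The next step is the counting argument, which I expect to be the main technical obstacle. I would reason in terms of degrees of freedom. The eight index slots $(i,j,k,l,a,b,c,d)$ give at most $n^8$ configurations, but the three constraints--(i) $(k,l)\in\Theta_{ij}$ (forcing a coincidence between $\{k,l\}$ and $\{i,j\}$), (ii) $(c,d)\in\Theta_{ab}$, and (iii) $\{i,j,k,l\}\cap\{a,b,c,d\}\neq\varnothing$--each eliminate a factor of order $n$. Concretely, the first constraint restricts $\{i,j,k,l\}$ to at most three distinct actor labels (yielding $O(n^3)$ choices), similarly for $\{a,b,c,d\}$, and requiring one shared label between these two groups of at most three distinct labels reduces the joint count from $n^3 \cdot n^3$ to $O(n^5)$. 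A short case split by which position(s) $i,j,k,l,a,b,c,d$ coincide, combined with the fact that each ``collapsed'' index slot saves a factor of $n$, makes this rigorous.

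For each nonvanishing summand, Hölder's inequality gives
\begin{equation*}
\bigl|\E[W_{ij}W_{kl}W_{ab}W_{cd}]\bigr| \;\le\; \|W_{ij}\|_4\|W_{kl}\|_4\|W_{ab}\|_4\|W_{cd}\|_4 \;\le\; L^4,
\end{equation*}
and similarly $|\E[W_{ij}W_{kl}]\,\E[W_{ab}W_{cd}]|\le L^4$ by Cauchy-Schwarz, so each covariance is bounded by $2L^4$.

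Combining, $V[S] \le C\,n^5\,L^4$ for an absolute constant $C$, and dividing by $n^6$ yields the claim. The essential ingredients are therefore (a) the non-overlap independence extending from pairs to products on disjoint index sets, (b) the degrees-of-freedom bookkeeping that caps the overlapping configurations at $O(n^5)$, and (c) a uniform fourth-moment bound via Hölder.
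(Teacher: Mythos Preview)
Your proposal is correct and follows essentially the same approach as the paper: expand the variance as a quadruple sum of covariances, use the non-overlap independence to zero out terms with disjoint index sets, count that the surviving configurations are $O(n^5)$ via the three coincidence constraints, and bound each surviving covariance by a constant multiple of $L^4$. The paper's version differs only cosmetically---it states the moment bound without naming H\"older explicitly and carries out the counting by enumerating the finitely many overlap patterns (it reports 96) rather than via your degrees-of-freedom bookkeeping.
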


\begin{proof}
By definition we write
{
\begin{align} 
\frac{1}{n^6} \var \left(  \sum_{ij} \sum_{kl \in \Theta_{ij}} W_{ij} W_{kl} \right) &= \frac{1}{n^6} \sum_{ij} \sum_{kl \in \Theta_{ij}} \sum_{rs} \sum_{tu \in \Theta_{rs}}{\rm cov}(W_{ij} W_{kl}, W_{rs} W_{tu}).  \label{eq:quad_sum}
\end{align}
}
Each covariance of \eqref{eq:quad_sum} is bounded by $L^4$. To bound the variance, we will show the number of nonzero entries in the sum is ${O}(n^5)$. For ${\rm cov}(W_{ij}W_{kl}, W_{rs}W_{tu}) \neq 0$, there must be overlap between the index sets $\{ i,j,k,l \}$ and $\{ r,s,t,u \}$. Further, the sum in \eqref{eq:quad_sum} is taken over index sets that themselves contain overlap, i.e. $ \{ i, j \} \cap \{ k, l \} \neq \varnothing$ and $ \{ r, s \} \cap \{ t, u \} \neq \varnothing$. For example, the index sets $\{ i,j,i,l \}$ and $\{ i,s,i,u \}$ have nonzero covariance in \eqref{eq:quad_sum}. Since there are 5 unique indices in the union of the sets $\{ i,j,i,l \}$ and  $\{ i,s,i,u \}$, there are ${O}(n^5)$ such index set pairs of this form in total. There are 96 pairs of index sets that result in nonzero covariance ${\rm cov}(W_{ij}W_{kl}, W_{rs}W_{tu})$. For example, another such pair of index sets is $\{i,j,i,l \}$ and $\{i,j,i,j \}$. Each of these 96  pairs of index sets is ${O}(n^5)$. Thus, the sum of covariances in \eqref{eq:quad_sum} is over ${O}(n^5)$ bounded elements.
\end{proof}

It is worth noting that we repeat the counting argument in the proof of Lemma~\ref{lem:an_var_bound} in many of the following proofs, including those in later sections. Now that we have Lemma~\ref{lem:clt_characteristic}~and~\ref{lem:an_var_bound}, we prove that a general sequence of mean-zero exchangeable random variables is asymptotically normal.

\begin{theorem} 
\label{thm:an_exchangeable}
Let $\{ W_{ij} \}_{i,j=1}^n$ be a mean-zero sequence of jointly exchangeable random variables 
with at least one of $\{\phi_3, \phi_4, \phi_5 \}$ nonzero.
If $|| W_{ij} ||_{4} < L < \infty$, then
\begin{align} 
\frac{n^{1/2} \sum_{ij} W_{ij}}{n(n-1)} \rightarrow_d {\rm N}(0, \phi_3 + \phi_4 + 2\phi_5) \text{ as  } n \rightarrow \infty.\label{eq:clt_condition}
\end{align}  

\end{theorem}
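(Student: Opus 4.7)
The plan is to apply Lemma~\ref{lem:clt_characteristic} to the distributions $\nu_n$ of the rescaled sum $\tilde{S}_n := \sqrt{n}\sum_{ij} W_{ij}/[n(n-1)\sigma]$, where $\sigma^2 := \phi_3+\phi_4+2\phi_5$, which is strictly positive by the assumption that at least one of $\phi_3,\phi_4,\phi_5$ is nonzero. The goal is to verify the two hypotheses of that lemma for $\nu_n$, from which $\tilde{S}_n \to_d \mathrm{N}(0,1)$ and hence the stated convergence follows by Slutsky's lemma.

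For Condition (1), I would directly compute $\E[\tilde{S}_n^2]$ using the exchangeable covariance structure. Expanding the variance of the sum and classifying index pairs $((i,j),(k,l))$ by the five configurations that give nonzero covariance,
\begin{equation*}
\mathrm{V}\Big[\sum_{ij} W_{ij}\Big] \;=\; n(n-1)\phi_1 + n(n-1)\phi_2 + n(n-1)(n-2)(\phi_3+\phi_4+2\phi_5),
\end{equation*}
so $\E[\tilde{S}_n^2] \to 1$ and Condition (1) holds. For Condition (2), I would follow the Bolthausen-style characteristic function argument. For each pair $(i,j)$, split $\tilde{S}_n = T_{ij}+U_{ij}$ where $T_{ij}$ is the normalized sum over $(k,l)\in \Theta_{ij}$ and $U_{ij}$ is the normalized sum over non-overlapping pairs; note $\tilde{W}_{ij}:=W_{ij}/[\sqrt{n}(n-1)\sigma]$ appears in $T_{ij}$. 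Writing $e^{i\lambda T_{ij}} = 1 + i\lambda T_{ij} + R_{ij}$ with $|R_{ij}|\le \tfrac12\lambda^2 T_{ij}^2$ gives
\begin{equation*}
\E[\tilde{S}_n e^{i\lambda \tilde{S}_n}] = \sum_{ij}\E[\tilde{W}_{ij} e^{i\lambda U_{ij}}] + i\lambda\sum_{ij}\E[\tilde{W}_{ij}T_{ij}e^{i\lambda U_{ij}}] + \sum_{ij}\E[\tilde{W}_{ij} R_{ij} e^{i\lambda U_{ij}}].
\end{equation*}
The first sum vanishes by the mean-zero property together with independence between $\tilde{W}_{ij}$ and $U_{ij}$; the second sum is shown to converge to $i\lambda\, \E[e^{i\lambda\tilde{S}_n}]$, whose combination with the $i\lambda\,\E[e^{i\lambda\tilde{S}_n}]$ term in $\E[(i\lambda - \tilde{S}_n)e^{i\lambda\tilde{S}_n}]$ produces the required cancellation because $\sum_{ij}\E[\tilde{W}_{ij}T_{ij}]\to 1$ by the same counting calculation as in Condition (1); and the remainder is controlled by Cauchy--Schwarz and Lemma~\ref{lem:an_var_bound} applied to $\sum_{ij}T_{ij}^2$, which is exactly a normalized version of $\sum_{ij}\sum_{kl\in\Theta_{ij}}W_{ij}W_{kl}$, together with the bounded fourth moment $L$.

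The main obstacle is justifying the independence statements and, in particular, the approximation
\begin{equation*}
\sum_{ij}\E[\tilde{W}_{ij}T_{ij}e^{i\lambda U_{ij}}] \;\approx\; \Big(\sum_{ij}\E[\tilde{W}_{ij}T_{ij}]\Big)\E[e^{i\lambda\tilde{S}_n}].
\end{equation*}
The hypothesis in Definition~\ref{def:exchmat} gives only \emph{pairwise} independence between $W_{ij}$ and $W_{kl}$ on non-overlapping pairs, yet $T_{ij}$ shares vertex randomness with $U_{ij}$ through indices outside $\{i,j\}$, so $T_{ij}$ and $U_{ij}$ are not jointly independent in general. I would resolve this using the Aldous--Hoover representation: joint exchangeability of $\Xi$ together with independence of $\xi_{ij},\xi_{kl}$ when $\{i,j\}\cap\{k,l\}=\varnothing$ forces a representation of the form $W_{ij}=g(\beta_i,\beta_j,\gamma_{(i,j)})$ with independent latent variables, which upgrades the pairwise independence to joint independence between $\tilde{W}_{ij}$ and the family $\{\tilde{W}_{kl}:(k,l)\notin\Theta_{ij}\}$ and reduces the cross term to a conditional argument on the shared vertex latents $\{\beta_k:k\neq i,j\}$. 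After this reduction, the remaining error is a covariance whose variance is bounded by the sixth-order sum handled in Lemma~\ref{lem:an_var_bound}, and the whole chain of approximations collapses to Condition (2) of Lemma~\ref{lem:clt_characteristic}.
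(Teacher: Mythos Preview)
Your overall strategy---applying Lemma~\ref{lem:clt_characteristic} after splitting $\tilde S_n=T_{ij}+U_{ij}$ and Taylor-expanding---is the same Bolthausen/Stein-type argument the paper uses.  The one substantive difference is in how the linear-in-$T_{ij}$ term is organized.  You expand $e^{i\lambda T_{ij}}$ after factoring out $e^{i\lambda U_{ij}}$, which leaves the $(i,j)$-dependent factor $e^{i\lambda U_{ij}}$ inside the ``linear'' sum and forces you to justify
\[
\sum_{ij}\E\big[\tilde W_{ij}T_{ij}\,e^{i\lambda U_{ij}}\big]\;\approx\;\Big(\sum_{ij}\E[\tilde W_{ij}T_{ij}]\Big)\,\E\big[e^{i\lambda \tilde S_n}\big].
\]
The paper instead expands $e^{-i\lambda \bar S_{ij,n}}$ after factoring out the \emph{common} term $e^{i\lambda\bar S_n}$, so its linear piece is $A_1=i\lambda\,e^{i\lambda\bar S_n}\big(1-\sigma_n^{-2}\sum_{ij}W_{ij}S_{ij,n}\big)$.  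Because the exponential factor no longer depends on $(i,j)$, the paper only has to show that $\sigma_n^{-2}\sum_{ij}W_{ij}S_{ij,n}\to 1$ in $L^2$, and that is exactly Lemma~\ref{lem:an_var_bound}; no decoupling of $T_{ij}$ from $U_{ij}$, no Aldous--Hoover machinery, is required for this term.  Your route works, but the conditional argument you sketch is doing real extra work that the paper's decomposition simply sidesteps.

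On the independence point you flag: you are right that Definition~\ref{def:exchmat} as written gives only pairwise independence of non-overlapping $W_{ij}$ and $W_{kl}$, whereas both your Term~1 and the paper's $A_3$ need $W_{ij}$ independent of the \emph{entire sum} over non-overlapping pairs.  The paper asserts this independence without further comment; your proposed justification via the Aldous--Hoover representation is a legitimate (and more careful) way to close that gap.
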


\begin{proof}
We first show that $\phi_3 + \phi_4 + 2 \phi_5$ is the correct limiting variance. Writing the variance of the expression on the left hand side of \eqref{eq:clt_condition} explicitly and recalling that entries such that $i=j$ are undefined, we see 
{
\begin{align}
\var \left(\frac{n^{1/2}}{n(n-1)} \sum_{ij} W_{ij} \right) &= \frac{n}{n^2(n-1)^2} \sum_{ij} \sum_{kl \in \Theta_{ij}} {\rm cov}\left( W_{ij}, W_{kl} \right) \nonumber \\ 
  &= \frac{ n^2(n-1) \left( \phi_1 + \phi_2 \right) + n^2(n-1)(n-2) \left( \phi_3 + \phi_4 + 2 \phi_5 \right) }{n^2(n-1)^2} \nonumber \\
  & \rightarrow \phi_3 + \phi_4 + 2 \phi_5 \text{ as  } n\rightarrow \infty, \nonumber
\end{align}
}
by the properties of joint exchangeability of $\{ W_{ij} \}_{i,j=1}^n$ as described in Section~\ref{sec:exch_cov_mat}. This variance is finite and nonzero by assumption.  To prove \eqref{eq:clt_condition}, it is sufficient to show
\begin{align} 
\bar{S}_n := \frac{ \sum_{ij} W_{ij}}{n^{3/2} ( \phi_3 + \phi_4 + 2\phi_5 )^{1/2}} \rightarrow_d N(0, 1). \label{eq:barSn}
\end{align}
Define the limiting variance as $\sigma_n^2 = n^{3} (\phi_3 + \phi_4 + 2\phi_5)$  and the sum $S_n = \sum_{ij} W_{ij}$.

To establish \eqref{eq:barSn}, we employ Lemma~\ref{lem:clt_characteristic}, where $\nu_n$ is the probability measure corresponding to $\bar{S}_n$ for all $n$. The first condition of Lemma~\ref{lem:clt_characteristic} is satisfied since 
\begin{align}
\E ( (\bar{S}_n)^2 ) = \frac{\var \left( \sum_{ij} W_{ij} \right) }{ n^3 (\phi_3 + \phi_4 + 2\phi_5)} < CL^2 \nonumber
\end{align}
for $C<\infty$ and all $n$. Thus, to prove \eqref{eq:barSn}, it is sufficient to show the second condition of Lemma~\ref{lem:clt_characteristic}: for all $\lambda \in \R$, as $n \rightarrow \infty$,
\begin{align}
\E \left( (i \lambda - \bar{S}_n){\rm e}^{i \lambda \bar{S}_n}\right) \rightarrow 0.
\label{eq:lemma2_rewrite}
\end{align}
We decompose the term in the expectation as in~\citet{guyon1995random} and~\citet{lumley2003asymptotics}:
\begin{align}
(i \lambda - \bar{S}_n){\rm e}^{i \lambda \bar{S}_n} = A_1 - A_2 - A_3,  \nonumber 
\end{align}
\vspace{-.3in}
\begin{align}
& A_1 = i \lambda{\rm e}^{i \lambda \bar{S}_n}\left( 1 - \frac{1}{\sigma_n^2}\sum_{i j}  W_{ij} S_{ij,n}\right), \nonumber\\
& A_2 = \frac{{\rm e}^{i \lambda \bar{S}_n}}{\sigma_n}\sum_{i j}  W_{ij}\left(1 - i \lambda \bar{S}_{ij,n} - {\rm e}^{-i \lambda \bar{S}_{ij,n}} \right), \nonumber \\
&A_3= \frac{1}{\sigma_n}\sum_{i  j}  W_{ij} {\rm e}^{i \lambda (\bar{S}_n - \bar{S}_{ij, n})}, \nonumber   \\
&S_{ij,n} = \sum_{kl \in \Theta_{ij}} W_{kl}, \nonumber
\end{align}
\noindent where $\bar{S}_{ij,n} = S_{ij,n}/\sigma_n$.
To satisfy \eqref{eq:lemma2_rewrite} it remains to be shown that $\lim_{n\rightarrow \infty} \E(A_m) = 0$ for each $m \in \{1,2,3\}$.

Beginning with ${A_1}$, $|{\rm e}^{i \lambda \bar{S}_n}| \le 1$. Using this fact and Lemma~\ref{lem:var_bound},
{\footnotesize
\begin{align}
0 \le \E( |A_1| )^2 \le \E( |A_1^2 | )  &\le \lambda^2 \E\left( \left| 1 -  \frac{1}{\sigma_n^2}\sum_{i j} W_{ij}  S_{ij,n} \right|^2 \right) \nonumber \\
  &= \frac{\lambda^2}{\sigma_n^4} \var \left( \sum_{ij} \sum_{kl \in \Theta_{ij}} W_{ij} W_{kl} \right) + \lambda^2 \left\{ 1 - \frac{\var \left( \sum_{i j} W_{ij} \right)}{\sigma_n^2} \right\}^2 \nonumber  \\
  &\le \lambda^2 \frac{CL^4}{n} + \lambda^2  \left(1 - \frac{\sigma^2_n + {O}(n^{-1})}{\sigma^2_n} \right)^2 \nonumber\\
  & = \lambda^2 \left( \frac{CL^4}{n} + \frac{ {O} \left( n^{-2} \right)}{\sigma^2_n} \right)\rightarrow 0 \nonumber
\end{align}
}
for all real-valued $\lambda$. $\E( |A_1| )^2$ limiting to zero implies $\E( |A_1| )$ limits to zero, and hence $\E( A_1 )$ limits to zero.

Now, for ${A_2}$,   
by Taylor expansion of ${\rm e}^{-i \lambda \bar{S}_{ij,n}}$, we can write
\begin{align}
\left|1 - i \lambda \bar{S}_{ij,n} - {\rm e}^{-i \lambda \bar{S}_{ij,n}} \right| \le c \lambda^2 \left( \bar{S}_{ij,n} \right)^2, \nonumber %\label{eq:sn2_bound}
\end{align}
for some $0 < c < \infty$ and all $n, \lambda$. Using this bound and the fact that $|\Theta_{ij}| = 4n - 6$, we evaluate $\E( |A_2| )$ directly below: 
\begin{align}
\E( |A_2| ) &\le \frac{1}{\sigma_n} \E \left( \sum_{i  j} \left| W_{ij}  \right|  \left| 1 - i \lambda \bar{S}_{ij,n} - {\rm e}^{-i \lambda \bar{S}_{ij,n}} \right| \right), \nonumber \\ %\label{eq:a2_final} \\
&\le  \frac{c \lambda^2}{\sigma_n^3}\sum_{i j}  \E \left(  \left|  W_{ij}  \right|  S_{ij,n}^2 \right), \nonumber \\
&\le \frac{c \lambda^2}{\sigma_n^3}n(n-1)(4n-6)^2 L^3 \rightarrow 0, \nonumber 
\end{align}
for all real $\lambda$. As $\E( |A_2| )$ limits to zero, so does $\E( A_2 )$.

Finally, for ${A_3}$, the expression ${S}_{ij,n}$ sums all terms in the sequence $\{ W_{ij} \}_{i,j=1}^n$ that depend upon $W_{ij}$, including $W_{ij} $ itself. Thus, $W_{ij} $ and $\bar{S}_{n} - \bar{S}_{ij,n}$ are independent. It follows immediately that 
\begin{align}
 \E \left( \frac{1}{\sigma_n}\sum_{i  j} W_{ij}  {\rm e}^{i \lambda (\bar{S}_n - \bar{S}_{ij, n})} \right) =\frac{1}{\sigma_n}\sum_{i  j} \E\big( W_{ij}  \big) \E \big( {\rm e}^{i \lambda (\bar{S}_n - \bar{S}_{ij, n})} \big) = 0, \nonumber
\end{align}
since $\E \left( W_{ij} \right) = 0$ for all ordered pairs $(i,j)$. 

Hence, $\lim_{n\rightarrow \infty} \E( A_m ) = 0$ for each $m \in \{1,2,3\}$ and we have the convergence in~\eqref{eq:lemma2_rewrite}, implying  
$\bar{S}_n \rightarrow_d {\rm N}(0,1)$ by Lemma~\ref{lem:clt_characteristic}, which gives the desired result in~\eqref{eq:clt_condition}.
\end{proof}

\subsection{Proof of Theorem~\ref{thm:AN}}
We begin by writing 
\begin{align}
n^{1/2}(\hat{\bbeta} - \bbeta) = \left( \frac{\sum_{jk} \x_{jk} \x_{jk}^T}{n(n-1)} \right)^{-1}   \frac{n^{1/2} \sum_{jk} \x_{jk} \xi_{jk}}{n(n-1)}, \label{eq:beta_hat_AN}
\end{align}
again emphasizing that entries in the sum with $j=k$ are undefined and omitted. Addressing the first multiplicative term in \eqref{eq:beta_hat_AN}, we recall that the inverse map is continuous. Then, by \eqref{eq_xexch_limit} and the continuous mapping theorem, we have
\begin{align}
\left( \frac{\sum_{jk} \x_{jk} \x_{jk}^T}{n(n-1)} \right)^{-1} \rightarrow_p M_1^{-1}. \label{eq:conv_in_prob_xx}
\end{align}

We now analyze the second multiplicative term in \eqref{eq:beta_hat_AN}. Showing asymptotic normality of this term is sufficient to show asymptotic normality of the expression on the left hand side of  \eqref{eq:beta_hat_AN}. Recall $\x_{jk}^T = [x_{jk}^{(1)}, x_{jk}^{(2)}, \ldots, x_{jk}^{(p)}]$. We wish to show that the sum of vectors
\begin{align}
\U_n := \frac{n^{1/2} }{n(n-1)} \sum_{jk} \x_{jk} \xi_{jk} \rightarrow_d  {\rm N}(0, \Sigma), \label{eq:vector_an}
\end{align}
for some limiting variance $\Sigma$. 
By the Cram{\'e}r-Wold device \citep{cramer1936some}, $\U_n$ is asymptotically normal with asymptotic variance $\Sigma$ if and only if $ \v^T \U_n$ is asymptotically normal with asymptotic variance $\v^T \Sigma \v$ for every vector $\v \in \R^p$ such that $||\v|| = 1$. Clearly, 
\begin{align}
\v^T \U_n := \frac{n^{1/2} }{n(n-1)} \sum_{jk} \tilde{x}_{jk} \xi_{jk}, \nonumber
\end{align}
where we define $\tilde{x}_{jk} = \v^T \x_{jk}$. We wish to apply 
Theorem~\ref{thm:an_exchangeable} to the sequence $\{ \tilde{x}_{jk} \xi_{jk} \}_{j,k=1}^n$. First, the condition of finite moments in (B2) of Theorem~\ref{thm:AN} and $||\v|| = 1$ implies that $|| \tilde{x}_{jk} \xi_{jk} ||_{4} < L$ for some finite $L < \infty$. 
Secondly, by the independence of $X$ and $\Xi$ in (B3) of Theorem~\ref{thm:AN}, the sequence $\{ \tilde{x}_{jk} \xi_{jk} \}_{j,k=1}^n$ is a mean-zero exchangeable sequence of scalar random variables. Taking the variance directly,
\begin{align}
\var \left( \sum_{jk} \tilde{x}_{jk} \xi_{jk} \right) = n^3\v^T \{\phi_1 M_1 {O}(n^{-1}) + \phi_2 M_2 {O}(n^{-1}) + \phi_3 M_3  +  \phi_4 M_4 + 2 \phi_5 M_5\} \v. \label{eq_var_Un}
\end{align}
Then, we apply Theorem~\ref{thm:an_exchangeable} with $\sigma_n^2 = \var( \sum_{jk} \tilde{x}_{jk} \xi_{jk})$ in \eqref{eq_var_Un}, which gives that
\begin{align}
\v^T \U_n \rightarrow_d  {\rm N}(0,  \v^T [ \phi_3 M_3  +  \phi_4 M_4 + 2 \phi_5 M_5 ] \v  ). \nonumber
\end{align}
Thus, by the Cram{\'e}r-Wold device, we get the desired joint asymptotic normality
\begin{align}
 \frac{n^{1/2} \sum_{jk} \x_{jk} \xi_{jk} }{ n(n-1)} 
 &\rightarrow_d  {\rm N} \big(0,  \phi_3 M_3  +  \phi_4 M_4 + 2 \phi_5 M_5 \big). \label{eq:an_normality_vector_xxi}
\end{align}
Combining the convergence in probability in \eqref{eq:conv_in_prob_xx} and the asymptotic normality of \eqref{eq:an_normality_vector_xxi}, we obtain the desired result. 
\qed

\section{Proof of consistency of the exchangeable estimator}
\label{app:consistency}

\subsection{Notation and outline of proof}
For the proof of the consistency of the exchangeable estimator $\hat{V}_E$, we adopt the same change in notation as in Section \ref{app:AN}, defined in \eqref{eq:notation_chge}. We deviate slightly in that we denote $\Theta_{i}$ to denote dyadic pairs $(j,k)$ and $(l,m)$ that share a member in the $i$th manner. For example, for $i=3$ we must have $j=l$ and $m \neq k$. We use the same assumptions as in Theorem~\ref{thm:AN}. 

This proof is outlined as follows. We initially prove that the exchangeable estimator $\hat{V}_E$ is consistent if the exchangeable parameter estimates $\{\hat{\phi}_i: i=1,\ldots,5\}$ are consistent for the true parameters. We then prove consistency of $\{ \hat{\phi}_i \}$ in two steps: (a) we show parameter estimates $\{\widetilde{\phi}_i\}$ based on the unobserved true errors $\Xi$ are consistent and then (b) we show that the parameter estimates $\{\hat{\phi}_i\}$ are asymptotically equivalent to $\{\widetilde{\phi}_i\}$. We require the consistency of $\hat{\bbeta}$ result (implied by Theorem~\ref{thm:AN}) for this last step.

\subsection{Proof of Theorem~\ref{thrm:consistency}: consistency of the exchangeable estimator}
First, from Theorem~\ref{thm:AN}, the order of convergence of  $\hat{\bbeta}$ is ${n}^{1/2}$.  
Thus, we choose the rate $n$ as our asymptotic regime for consistency of $\hat{V}_E$. We wish to show that 
\begin{align}
n\hat{V}_E - n \var( \hat{\beta} ) \rightarrow_p 0. \label{eq:V_consistency}
\end{align}

First we show that to prove consistency of $\hat{V}_E$, it is sufficient to prove the consistency of the parameter estimates $\{\hat{\phi}_i\}$ for the true parameters. We begin by writing the difference of variances $n\hat{V}_E - n \var( \hat{\beta} )$ in \eqref{eq:V_consistency},  as 
{ 
\begin{align}
& n(X^TX)^{-1}X^T \big(\hat{\Omega}_E - \Omega_E \big)  X (X^T X)^{-1} 
\nonumber \\
  &=\frac{n}{n^2(n-1)^2}\left\{ \frac{X^TX}{n(n-1)} \right\}^{-1} \left\{ \frac{X^T \sum_{i = 1}^5|\Theta_i| \left(\hat{\phi}_i - \phi_i \right) \mathcal{S}_i X}{|\Theta_i|} \right\} \left\{ \frac{X^TX}{n(n-1)} \right\}^{-1} \nonumber \\
  &= \sum_{i = 1}^5 \frac{|\Theta_i|  }{n(n-1)^2}  
  \left( \hat{\phi}_i - \phi_i \right)
  \left\{ \frac{X^TX}{n(n-1)}
  \right\}^{-1} 
  \left( \frac{  \sum_{(jk,\ell m) \in \Theta_i} \x_{jk} \x_{\ell m}^T  }{|\Theta_i|} \right) 
  \left\{ \frac{X^TX}{n(n-1)}
  \right\}^{-1}  \nonumber \\
  &= \sum_{i = 1}^5 c_{i,n} \left( \hat{\phi}_i - \phi_i \right) h_{i,n}\left( X \right), \label{eq:c_phi_h}
\end{align}
}
where $c_{i,n} = |\Theta_i|/n(n-1)^2$ and $h_{i,n}(X)$ contains the remaining terms which are functions of $X$. By the counting argument used to show Lemma~\ref{lem:an_var_bound}, each $|\Theta_i|$ is at most ${O}(n^3)$, so each $c_{i,n} \rightarrow d_i$ for some finite constant $d_i$. Namely, $d_i = 0$ for $i\in \{1,2\}$, $d_i = 1$ when $i \in \{3,4\}$, and $d_i=2$ for $i=5$.
To obtain the result in \eqref{eq:V_consistency}, it is sufficient then to show $\hat{\phi}_i - \phi_i \rightarrow_p 0$ and $h_{i,n}(X)$ converges in probability to some constant for all $i$.  The latter comes easily, that is, by assumption and Slutsky's theorem,
\begin{align}
h_{i,n} (X) \rightarrow_p M_1^{-1} M_i M_1^{-1}, \ \ i \in \{1,\ldots, 5 \}. \nonumber
\end{align}
The continuous mapping theorem allows us to take the probability limit of ${X^TX} / ({n(n-1))}$ before inversion, as the inversion map is continuous.

We now consider consistency of the parameter estimates $\hat{\phi}_i$. First, define error averages $\{\widetilde{\phi}_i: i = 1,\ldots,5\}$ analogous to the parameter estimates, such that for each $i$,
\begin{align}
\widetilde{\phi}_i  &=  \frac{1 }{|\Theta_i|} \sum_{(jk,\ell m) \in \Theta_i} \xi_{j k} \xi_{\ell m}. \nonumber
\end{align}
We will show $\widetilde{\phi}_i -\phi_i$ converges in probability to zero, and then do the same for $\hat{\phi}_i - \widetilde{\phi}_i$. This is sufficient for showing $\hat{\phi}_i - {\phi}_i \rightarrow_p 0$ as $\hat{\phi}_i - {\phi}_i  = (\hat{\phi}_i - \widetilde{\phi}_i) + (\widetilde{\phi}_i - \phi_i)$. \\

To show convergence in probability of $\widetilde{\phi}_i -\phi_i$ to zero, we use the argument that the bias and variance both tend to zero. By assumption (A1), $\E[\xi_{jk} \xi_{\ell m}] = \phi_i$ for every relation pair $(jk, \ell m) \in \Theta_i$. Thus, $\E[\widetilde{\phi}_i -\phi_i] = 0$ for all $n$ and $i \in \{1,\ldots,5\}$. We now turn to the variance:
\begin{align}
\var \left(  \widetilde{\phi}_i  \right) &= \frac{1}{|\Theta_i|^2} \sum_{(jk,\ell m) \in \Theta_i} \sum_{(rs,tu) \in \Theta_i}  {\rm cov} \big( \xi_{j k} \xi_{\ell m}, \xi_{r s} \xi_{t u} \big).  \nonumber % \label{eq:var_tilde_phi} 
\end{align}
We again make a counting argument similar to that in Lemma \eqref{lem:an_var_bound}. By condition (B2), each of the $|\Theta_i|^2$ covariances in the sum above are bounded. The covariance between $\xi_{jk}\xi_{\ell m}$ and $\xi_{rs} \xi_{tu}$ is nonzero only if there is overlap between their two index sets. This reduces the number of nonzero covariances from the maximum possible $|\Theta_i|^2$ by a factor of at least $n$. Again, consider the case of $i=3$ where $|\Theta_3| = {O}(n^3)$. Each pair of relations in $\Theta_3$ must be of the form $(jk,jm)$, and thus the second set of indices must be of the form $(js,ju)$, for example, for the covariance to be nonzero. The set of indices $\{j,k,j,m,j,s,j,u \}$ is of order ${O}(n^5) = |\Theta_3|^2n^{-1}$. There are other forms of relation pairs in the second sum that give rise to nonzero covariance, such as $(ks,ku)$ and so on. However, there are nine such forms, each of which is  ${O}(n^5)$. Thus, the number of nonzero covariances is ${O}(n^5)$, and hence, we have
\begin{align}
\var \left(  \widetilde{\phi}_i  \right) &= \frac{|\Theta_i|^2 {O}(n^{-1})}{|\Theta_i|^2} \rightarrow 0. \label{eq:tilde_phi_n}
\end{align}
This same argument holds for all $i$, and thus, we have the desired consistency: $
\widetilde{\phi}_i - \phi_i  \rightarrow_p 0$ for $i=1,\ldots,5$. \\

We now show that $\hat{\phi}_i - \widetilde{\phi}_i$ converges in probability to zero. We first write the expression in terms of the estimated coefficients $\hat{\bbeta}$:
{
\begin{align}
\hat{\phi}_i \; - \; \widetilde{\phi}_i   &= \frac{\sum_{(jk,\ell m) \in \Theta_i}e_{jk} e_{\ell m} - \xi_{jk} \xi_{\ell m} }{|\Theta_i|} \notag \\
&= \frac{1}{|\Theta_i|} \sum_{(jk,\ell m) \in \Theta_i} \left \{ (\beta - \hat{\beta})^T(\x_{jk} \x_{\ell m}^T ) (\beta - \hat{\beta})-(\beta - \hat{\beta})^T(\xi_{jk} \x_{\ell m} + \xi_{\ell m} \x_{j k}) \right \}. 
\label{eq:ResidDiff}
\end{align}
}
By Theorem~\ref{thm:AN}, $\hat{\bbeta} - \bbeta$ converges to zero in probability. By Slutsky's theorem, if the terms in \eqref{eq:ResidDiff} involving elements of $X$ and $\xi$ converge in probability to any constant, then $\hat{\phi}_i - \widetilde{\phi}_i$ converges in probability to zero.  By (B1) and \eqref{eq_xexch_limit} we have the convergence in probability of the term
involving $\x_{jk} \x_{\ell m}^T$. Furthermore, by condition (B3), we have that $\E( \xi_{jk} \x_{\ell m} ) = \E( \xi_{\ell m} \x_{jk} ) = 0$. It remains to be shown that the variance of the error-covariate averages tend to zero. Consider the variance 
of the first error-covariate averages:  
{
\begin{align}
\var \left( \frac{1}{|\Theta_i|} \sum_{(jk,\ell m) \in \Theta_i} \xi_{jk} \x_{\ell m} \right) &= \frac{1}{|\Theta_i|^2} \sum_{(jk,\ell m) \in \Theta_i} \sum_{(rs,tu) \in \Theta_i} {\rm cov}\left( \xi_{jk} \x_{\ell m}, \xi_{rs} \x_{tu}\right), \nonumber \\
&= \frac{1}{|\Theta_i|^2} \sum_{(jk,\ell m) \in \Theta_i} \sum_{(rs,tu) \in \Theta_i} \E \left( \x_{\ell m} \x_{tu}^T \right) {\rm cov}\left( \xi_{jk}, \xi_{rs} \right).
\label{eq:v_xijk_xlm}
\end{align}
}
In writing \eqref{eq:v_xijk_xlm}, we use condition (B3) and simplify by conditioning on $X$ and using the law of total variance. By the same counting arguments used to establish  \eqref{eq:tilde_phi_n}, there are $|\Theta_i|^2 {O}(n^{-1})$ nonzero bounded covariances in \eqref{eq:v_xijk_xlm}. Thus, we have  
\begin{align}
\var \left( \frac{1}{|\Theta_i|} \sum_{(jk,\ell m) \in \Theta_i} \xi_{jk} \x_{\ell m} \right) &= \frac{|\Theta_i|^2 {O}(n^{-1})}{|\Theta_i|^2} \rightarrow 0.  \nonumber
\end{align}
Since the expectation and variance both tend to zero, we have 
\begin{align}
\frac{1}{|\Theta_i|} \sum_{(jk,\ell m) \in \Theta_i} \xi_{jk} \x_{\ell m} \rightarrow_p 0.  \nonumber
\end{align}
The same argument applies to the second error-covariate term in \eqref{eq:ResidDiff}. Thus, we have shown that consistency of $\hat{\bbeta}$ implies  
\begin{align}
\hat{\phi}_i - \widetilde{\phi}_i \rightarrow_p 0. \nonumber
\end{align}
\qed

\section{Proof of Theorem~\ref{thm_mse}: mean-square error of the exchangeable and dyadic clustering estimators}
\label{app:mse}

In this section, we prove that the mean-square error of the estimator of $\var ( \hat{\bbeta} )$, conditional on $X$, is lower when using the exchangeable estimator than that when using the dyadic clustering estimator with high probability in $X$, assuming that the error structure is exchangeable. Before proving the theorem, we provide a lemma that states
that the mean-square error of the each estimator is asymptotically equivalent to the mean-square error of each estimator based on the true errors, which vastly simplifies the proof of the mean-square error theorem. Even so, we must consider higher order moments of $\xi$ than the covariances ${\rm cov} \left(\xi_{jk}, \xi_{lm} \right)$. So, we also provide a lemma in which we define the covariance of any pair of product of error relations ${\rm cov} \left(\xi_{jk} \xi_{lm}, \xi_{rs} \xi_{tu} \right)$ and define the limiting values of the covariance of the error averages, $n{\rm cov} ( \tilde{\phi}_v, \tilde{\phi}_w )$, for every pair $( v,w ) \in \{1,\ldots,5 \} \times \{1,\ldots,5 \} $. 

In this Section, we use the notation ${O}(n^{a})$ and $\bTheta(n^{a})$, for some $a \in \R$, to denote the convergence a sequence of numbers to a constant (possibly zero) and a nonzero constant, respectively, as $n$ grows to infinity. In other words, $X_n = {O}(n^{a})$ means that the sequence $n^{-a} X_n$ converges to a constant that may be zero. The notation $X_n = \bTheta(n^{a})$ means that the sequence $n^{-a} X_n$ converges to a nonzero constant.  
Lastly, it follows that $X_n =  {O}(n^{a - \epsilon})$ means that the sequence $n^{- a}X_n$ converges to zero. 

We use similar notation for convergence of sequences of random variables. The notation $X_n = {O}_p(n^a)$ for some $a \in \R$ means that the sequence $n^{-a}X_n$ converges in distribution to a random variable (possibly a constant). The notation $X_n = o_p(n^a)$ for some $a \in \R$ means that the sequence $n^{-a}X_n$ converges in probability to zero. Finally, we define $X_n = \bTheta_p(n^a)$ to mean that $n^{-a}X_n$ converges in distribution to a random variable with distribution that is not a point mass at zero, and thus possibly a nonzero constant (as will always be the case in this section).

\subsection{Lemmas in support of Theorem~\ref{thm_mse}}
The first lemma describes the covariances of parameter estimates based on the errors, which arise in the proof of the mean-square error theorem. Of interest are the covariances ${\rm cov} ( \tilde{\phi}_v, \tilde{\phi}_w )$ for $( v,w ) \in \{3,4,5 \} \times \{3,4,5 \} $, as there are $\bTheta(n)$ times as many of these covariances in $\hat{V}_E$ as those covariances where at least one of $ v$ or $w$ is in $\{ 1,2\}$. However, we provide limiting values of all covariances for completeness. The proof of this lemma follows from recognizing that $\tilde{\phi}_v$ is a sample average and from defining all possible covariances that make up ${\rm cov} ( \tilde{\phi}_v, \tilde{\phi}_w )$ and their multiplicities.

\begin{lemma}
\label{lem:covariance_rates}
If ${\xi}$ is a mean zero random vector with positive definite covariance matrix in the exchangeable class, $\Omega = \sum_{i=1}^5 \phi_i \mathcal{S}_i$, and $\E[\xi_{jk}^4] < L < \infty$, then the covariance $n{\rm cov}\left( \tilde{\phi}_v, \tilde{\phi}_w \right)$ for $( v,w ) \in \{1,\ldots,5 \} \times \{1,\ldots,5 \} $ converges to
\begin{align} 
n{\rm cov}\left( \tilde{\phi}_v, \tilde{\phi}_w \right) &\rightarrow 
\begin{cases}
\sum_{i = 1}^4 \alpha_i  \beta_v \beta_w C \left(v,w \right)_i & (v,w) \in \{3,4,5\} \times \{3,4,5\} \\
\sum_{j=1}^3 \gamma_j F (v,w)_j & (v,w) \in \{1,2\} \times \{1,2\}, \\
\sum_{k = 1}^4 \gamma_k D (v,w)_k  &o.w.,
\end{cases}  \label{eq_cov_limit_lemma}\\
\text{where } &\alpha_i := 1 + {1}_{[i > 1]} + {1}_{[i = 4]}, \quad (i = 1,\ldots,4 ), \nonumber \\
&\beta_i := 1 + {1}_{[i = 5]}, \quad i=(1,\ldots,5 ), \nonumber  \\
&\gamma_i := 1 + {1}_{[i > 2]}, \quad i=(1,\ldots,4), \nonumber  
\end{align}
and $C   \left(v,w \right)_i$,  $D   \left(v,w \right)_i$, and $F   \left(v,w \right)_i$ are unknown finite constants equal to ${\rm cov}\left( \xi_{jk} \xi_{lm}, \xi_{rs} \xi_{tu}\right)$ for various configurations of the sets $\{j,k,l,m\}$ and $\{r,s,t,u \}$.
\end{lemma}

\begin{proof}
By definition, 
\begin{align}
n{\rm cov}\left( \tilde{\phi}_v, \tilde{\phi}_w \right) = n|\Theta_v|^{-1} |\Theta_w|^{-1} \sum_{(jk,lm) \in \Theta_v} \sum_{(rs,tu) \in \Theta_w} {\rm cov}\left(\xi_{jk} \xi_{lm}, \xi_{rs} \xi_{tu} \right). \label{eq_cov_lemma_def}
\end{align}
The sum is over $|\Theta_v| |\Theta_w|$ terms. Whenever $\{ j,k,l,m \}  \cap \{ r,s,t,u \} =  \varnothing$, the covariance is zero. This removes a power of $n$ from the sum in \eqref{eq_cov_lemma_def}, such that the sum is over ${O}( |\Theta_v| |\Theta_w| n^{-1} )$ possibly nonzero covariances. The scaled sum 
in \eqref{eq_cov_lemma_def} converges -- provided that the number of values that ${\rm cov}\left(\xi_{jk} \xi_{lm}, \xi_{rs} \xi_{tu} \right)$ can take is finite -- as each covariance is finite by assumption and the sequence of covariances is homogeneous as $n$ grows by exchangeability. In the remainder of the proof, we enumerate and define the covariances  ${\rm cov}\left(\xi_{jk} \xi_{lm}, \xi_{rs} \xi_{tu} \right)$ in \eqref{eq_cov_lemma_def} for particular pairs $( v,w ) \in \{1,\ldots,5 \} \times \{1,\ldots,5 \}$, showing that the number of values  that ${\rm cov}\left(\xi_{jk} \xi_{lm}, \xi_{rs} \xi_{tu} \right)$ can take is finite.
This is sufficient to establish convergence. 

We begin by analyzing the case of interest, that is  when both $v$ and $w$ are members of $\{3,4,5 \}$. As an example, we focus on $v=3$ and $w = 4$, where the first product of error relations corresponds to the same-sender covariance (b) in Figure~\ref{fig:directeddyad} and the second corresponds to the same-receiver covariance (c) in Figure~\ref{fig:directeddyad}. In this case, both $|\Theta_3| = |\Theta_4| = \bTheta(n^3)$.
When $v=3$ and $w=4$,  the covariance in \eqref{eq_cov_lemma_def} becomes
\begin{align}
n{\rm cov}\left( \tilde{\phi}_3, \tilde{\phi}_4 \right) =_a n^{-5} \sum_{jk} \sum_{l \notin \{j,k \}} \sum_{rs} \sum_{t \notin \{r,s \}} {\rm cov}\left(\xi_{jk} \xi_{jl}, \xi_{sr} \xi_{tr} \right), \label{eq_cov_lemma_vw34}
\end{align}
where `$=_a$' denotes equality in the limit as $n$ grows to infinity.

Only pairs of relation products that share a single actor will remain in the limit, as there are an order of $n$ fewer covariances resulting from pairs of relation products that share two actors. One such pair of relation products that share a single actor correspond to the case when $s=j$, i.e. ${\rm cov}\left(\xi_{jk} \xi_{jl}, \xi_{jr} \xi_{tr} \right)$, of which there are $\bTheta(n^{5})$ in the sum in \eqref{eq_cov_lemma_vw34}. There are $\bTheta(n^4)$ covariances corresponding to the case when $s=j$ and $r=k$, i.e. ${\rm cov}\left(\xi_{jk} \xi_{jl}, \xi_{jk} \xi_{tk} \right)$.
The values of all covariances in \eqref{eq_cov_lemma_vw34} are finite by assumption and not equal in general.
However, by exchangeability, covariances resulting from pairs of relations that share an actor in the same way {are} equal. For example, the covariance corresponding to $s=j$ is the same regardless of the node labeling, that is ${\rm cov}\left(\xi_{jk} \xi_{jl}, \xi_{jr} \xi_{tr} \right) = {\rm cov}\left(\xi_{ab} \xi_{ac}, \xi_{ad} \xi_{ed} \right)$ for any set $\{a,b,c,d,e \} \subset \{1,\ldots,n \}$ with $|\{a,b,c,d,e \}| = 5$. 

There are nine ways that we may have $|\{j,k,l \} \cap \{r,s,t \}| = 1$ in \eqref{eq_cov_lemma_vw34}, i.e. there are nine ways that exactly one of $\{j,k,l \}$ equals exactly one of $\{r,s,t \}$. However, these reduce into four unique covariance values for each pair $( v,w ) \in \{3,4,5 \} \times \{3,4,5 \}$. As an example, when $t=j$ the covariance is the same as that when $s=j$, that is ${\rm cov}\left(\xi_{jk} \xi_{jl}, \xi_{jr} \xi_{tr} \right) = {\rm cov}\left(\xi_{jk} \xi_{jl}, \xi_{sr} \xi_{jr} \right)$.
Now we define these four covariance values and their multiplicities out of the nine possible ways that exactly one of $\{j,k,l \}$ equals exactly one of $\{r,s,t \}$:
\begin{itemize}
\item When $r=j$, we define the covariance $C (3,4)_1 = {\rm cov}\left(\xi_{jk} \xi_{jl}, \xi_{sj} \xi_{tj} \right)$, of which there is one out of nine possible;
\item When $s=j$, the covariance is the same as when $t=j$ (multiplicity two), and we define this covariance $C (3,4)_2 = {\rm cov}\left(\xi_{jk} \xi_{jl}, \xi_{jr} \xi_{tr} \right)$;
\item When $r=k$, the covariance is the same as when $r=l$ (multiplicity two), and we define this covariance  $C(3,4)_3 = {\rm cov}\left(\xi_{jk} \xi_{jl}, \xi_{sk} \xi_{tk} \right)$;
\item We define the covariance when $s=k$ to be $C (3,4)_4 = {\rm cov}\left(\xi_{jk} \xi_{jl}, \xi_{kr} \xi_{tr} \right)$, of which there are four, the remaining terms of which correspond to $t=k$, $s=l$, and $t=l$.
\end{itemize}
Now, noting that there are $n^5 + \bTheta(n^{4})$ covariances in the sum \eqref{eq_cov_lemma_vw34}
corresponding to each of the nine possible ways that exactly one of $\{j,k,l \}$ equals exactly one of $\{r,s,t \}$, we see that
\begin{align}
n{\rm cov}\left( \tilde{\phi}_3, \tilde{\phi}_4 \right)
&\rightarrow{\rm cov}\left(\xi_{jk} \xi_{jl}, \xi_{sj} \xi_{tj} \right) + 2{\rm cov}\left(\xi_{jk} \xi_{jl}, \xi_{jr} \xi_{tr} \right)  \nonumber \\
&\hspace{.3in}  + 2{\rm cov}\left(\xi_{jk} \xi_{jl}, \xi_{sk} \xi_{tk} \right) + 4{\rm cov}\left(\xi_{jk} \xi_{jl}, \xi_{kr} \xi_{tr} \right),\label{eq_cov_vw34} \\
& = C(3,4)_1 + 2C(3,4)_2 + 2C(3,4)_3 + 4C(3,4)_4, \nonumber
\end{align} 
where `$\rightarrow$' denotes convergence in the limit as $n$ goes to infinity.  Under appropriate definition of $C(v,w)_i$ for $i \in \{1,2,3,4 \}$, the same argument applies when both $v$ and $w$ are one of $\{3,4 \}$. When $w=5$ (relation products of the form $\{\xi_{jk} \xi_{kl} \}$ and $\{\xi_{jk} \xi_{lj} \}$) and $v=3$, however, we then must consider covariances ${\rm cov}\left(\xi_{jk} \xi_{jl}, \xi_{rs} \xi_{tr} \right)$ {and} ${\rm cov}\left(\xi_{jk} \xi_{jl}, \xi_{sr} \xi_{rt} \right)$ from $w=5$, which doubles the coefficients in \eqref{eq_cov_vw34}. This accounts for $\beta_w = 2$ when $w=5$ and $\beta_w=1$ otherwise  in \eqref{eq_cov_limit_lemma}.  The same argument applies when $v=5$.

We now analyze both $v$ and $w$ in $\{1,2 \}$, corresponding to variance and the reciprocal covariance (a) in Figure~\ref{fig:directeddyad}. In this case, both $|\Theta_v| = |\Theta_w| = n(n-1)$. Taking $v=1$ and $w=1$ as an example, in the limit, the covariance in \eqref{eq_cov_lemma_def} is
\begin{align}
n{\rm cov}\left( \tilde{\phi}_1, \tilde{\phi}_2 \right) =_a n^{-3} \sum_{jk} \sum_{rs} {\rm cov}\left(\xi_{jk}^2, \xi_{rs}^2 \right). \label{eq_cov_def_vw11}
\end{align}
Again, we only consider covariances corresponding to pairs of relation products that share a single actor as only these covariances survive in the limit. There are four possible ways that $\{j, k\}$ shares exactly one actor with $\{r,s \}$. We define the three unique covariances and their multiplicities corresponding to the four ways that $\{j, k\}$ shares exactly one actor with $\{r,s \}$ as follows:
\begin{itemize}
\item When $r=j$, we define the covariance $F(1,1)_1 ={\rm cov}\left(\xi_{jk}^2, \xi_{js}^2 \right)$, of which there is one out of the four possibilities;
\item When $s=k$, 
we define the covariance $F(1,1)_2 ={\rm cov}\left(\xi_{jk}^2, \xi_{rk}^2 \right)$, of which there is one;
\item When $s=j$, we define the covariance $F(1,2)_3 = {\rm cov}\left(\xi_{jk}^2, \xi_{rj}^2 \right)$, which is the same as when $r=k$, accounting for the remaining two possibilities. 
\end{itemize}
Now, the fact that there are $n^3 + \bTheta(n^{2})$ covariances in the sum \eqref{eq_cov_def_vw11}
corresponding to each of the four possible ways that $\{j, k\}$ shares exactly one actor with $\{r,s \}$ gives that 
\begin{align}
n{\rm cov}\left( \tilde{\phi}_1, \tilde{\phi}_1 \right) &\rightarrow {\rm cov}\left(\xi_{jk}^2, \xi_{js}^2 \right) + {\rm cov}\left(\xi_{jk}^2, \xi_{rk}^2 \right) + 2{\rm cov}\left(\xi_{jk}^2, \xi_{rj}^2 \right) \nonumber \\ 
& = F (1,1)_1 +  F (1,1)_2 + 2 F (1,1)_3. \nonumber
\end{align}
Of course, the same argument applies to any $v$ and $w$ both in $\{1,2 \}$. In the case where $v=1$ and $w=2$, by symmetry, $F(1,2)_1 = F(1,2)_2$. Similarly, for $v=w=2$, all $F(2,2)_1 = F(2,2)_2 = F(2,2)_3$.

Similar counting arguments to those in the previous paragraphs apply when one of $v,w$ is in $\{ 3,4,5\}$ and the other is in $\{1,2 \}$. As an example, consider $v=1$ and $w = 3$. Once again, only pairs of relations that share a single actor will remain in the limit. Then, in the limit, the covariance in \eqref{eq_cov_lemma_def} becomes
\begin{align}
n{\rm cov}\left( \tilde{\phi}_2, \tilde{\phi}_3 \right) =_a n^{-4} \sum_{jk} \sum_{rs} \sum_{t \notin \{r,s \}} {\rm cov}\left(\xi_{jk}^2 , \xi_{rs} \xi_{rt} \right). \label{eq_cov_lemma_vw23}
\end{align}
Now, there are six ways in which the first pair of relations share an actor with the second pair, i.e. all sets with exactly one actor from $\{j,k \}$ equal to exactly one other from $\{r,s,t \}$. We define the covariances corresponding to the six possibilities below:
\begin{itemize}
\item When $r=j$, we define the covariance $D(1,3)_1 = {\rm cov}\left(\xi_{jk}^2 , \xi_{js} \xi_{jt} \right)$, of which there is one out of the six possibilities;
\item When $r=k$, we define the covariance $D(1,3)_2 = {\rm cov}\left(\xi_{jk}^2 , \xi_{ks} \xi_{kt} \right)$, of which there is one;
\item
The overlaps where $s=j$ and $t=j$ result in the same covariance (multiplicity two), which we define $D (1,3)_3 ={\rm cov}\left(\xi_{jk}^2 , \xi_{rj} \xi_{rt} \right)$;
\item
The overlaps where $s=k$ and $t=k$ result in the same covariance (multiplicity two), which we define $D (1,3)_4 ={\rm cov}\left(\xi_{jk}^2 , \xi_{rk} \xi_{rt} \right)$.
\end{itemize}
Then, noting that there are $n^4 + \bTheta(n^{3})$ covariances in the sum \eqref{eq_cov_lemma_vw23}
corresponding to each of the six possible ways that exactly one actor from $\{j,k \}$ is equal to exactly one other from $\{r,s,t \}$, we have that $n{\rm cov}\left( \tilde{\phi}_2, \tilde{\phi}_3 \right) $ converges to
{ 
\begin{align}
 & {\rm cov}\left(\xi_{jk}^2 , \xi_{js} \xi_{jt}\right) + {\rm cov}\left(\xi_{jk}^2 , \xi_{ks} \xi_{kt}\right) + 2{\rm cov}\left(\xi_{jk}^2 , \xi_{rj} \xi_{rt}\right) + 2{\rm cov}\left(\xi_{jk}^2 , \xi_{rk} \xi_{rt}\right), \label{eq_cov_vw13} \\
& = D (1,3)_1 + D (1,3)_2 + 2D (1,3)_3 + 2D (1,3)_4. \nonumber
\end{align}
}
When $D(v,w)_k$ for $k\in\{1,2,3,4 \}$ is appropriately defined, the same argument applies for all settings where one of $v,w$ is in $\{3, 4 \}$ and the other is in $\{1,2\}$. When $w=5$ (relation products of the form $\{\xi_{jk} \xi_{kl} \}$ and $\{\xi_{jk} \xi_{lj} \}$), however, we then must consider covariances in \eqref{eq_cov_lemma_vw23} ${\rm cov} (\xi_{jk}^2, \xi_{rs} \xi_{tr} )$ {and} ${\rm cov} ( \xi_{jk}^2, \xi_{sr} \xi_{rs} )$, which doubles the coefficients in \eqref{eq_cov_vw13}. This
accounts for $\beta_w = 2$ when $w=5$ and $\beta_w=1$ otherwise  in \eqref{eq_cov_limit_lemma}. When $v=2$, for example, we have the simplification that $D(2,3)_1 = D(2,3)_2$ and $D(2,3)_3 = D(2,3)_4$. 
\end{proof}

The expressions for the estimators based on the errors are simpler to analyze than those based on the residuals. For example, when comparing the mean-square errors of the exchangeable and dyadic clustering estimators, it is desirable to analyze $MSE_\xi ( \tilde{V}_E \mid X )$ instead of $MSE_\xi ( \hat{V}_E  \mid X )$. The following lemma allows us to do just this. This lemma states that mean-square errors of the estimators based on the errors are asymptotically equivalent to the mean-square error of those based on the residuals. The proof consists of first evaluating the mean-square error conditional on $X$.  We then show that $n^3 MSE_\xi ( \tilde{V}_E \mid  X )$ converges in $X$-probability to a nonzero constant in general and that $n^3 MSE_\xi ( \tilde{V}_E \mid  X ) - n^3 MSE_\xi ( \hat{V}_E \mid  X )$ converges in $X$-probability to zero, implying that the  difference between $MSE_\xi ( \tilde{V}_E \mid  X )$ and $MSE_\xi ( \hat{V}_E \mid X )$ is asymptotically negligible. We repeat the procedure for $MSE_\xi ( \tilde{V}_{DC} \mid X )$ and $MSE_\xi  ( \hat{V}_{DC} \mid X )$. 

\begin{lemma}
\label{lem_sufferrors1}
Assuming $\E\left( |x_{jk}^{(l)}|^8 \right) < L^8 < \infty$ for all $(l = 1, \ldots,p)$ and under the assumptions of Theorem~\ref{thm:AN}, the mean-square error for both the exchangeable and dyadic clustering estimators based on the residuals is asymptotically equivalent to the mean-square error of each respective estimator based on the errors. That is, 
\begin{align}
n^3 MSE_\xi \left( \hat{V}_{E} \mid X \right) =  n^3 MSE_\xi \left( \tilde{V}_{E} \mid X \right) +  {O}_p(n^{-1/2}) = {O}_p(1), \label{eq_lem_suff_errors_def}
\end{align}
and analogously for dyadic clustering. 
\end{lemma}

\begin{proof}
We will focus on the exchangeable estimator first, and then the dyadic clustering estimator. Throughout, we drop the conditioning on $X$ in the mean-square error as it is understood, for example $MSE_\xi ( \hat{V}_{E} ) = MSE_\xi ( \hat{V}_{E} \mid X )$.

We begin with the exchangeable estimator.
By definition, the mean-square error of the exchangeable estimator is
{ \small
\begin{align}
MSE_\xi \left( \hat{V}_E \right) &= \E \left\{ \left( \hat{V}_E - V^*\right)^2 \ \mid \ X \right\}, \nonumber \\
&= MSE_\xi \left( \tilde{V}_E \right) + \E \left\{  \left( \hat{V}_E - \tilde{V}_E \right)^2 \ \mid \ X \right\} + 2 \E \left\{ \left( \hat{V}_E - \tilde{V}_E \right) \left( \tilde{V}_E - V^* \right) \ \mid \ X \right\}, \label{eq_breakup_MSE_E}
\end{align} 
}
where $V^* = \var( \hat{\beta} )$, the true variance of $\hat{\beta}$. By the Cauchy-Schwarz inequality, 
{\small
\begin{align}
\E \left\{ \left( \hat{V}_E - \tilde{V}_E \right) \left( \tilde{V}_E - V^* \right)  \ \mid \ X \right\}  
&\le \left[  MSE_\xi \left( \tilde{V}_E  \right)  \E \left\{  \left( \hat{V}_E - \tilde{V}_E \right)^2\ \big| \ X \ \right\} \right]^{1/2}. \label{eq_E_cauchy_schwarz}
\end{align}
} 
If we show that $n^3 MSE_\xi \left( \tilde{V}_E \right)$ converges in $X$-probability to a constant, i.e. $MSE_\xi \left( \tilde{V}_E \right) = {O}_p(n^{-3})$, and that $\E \left\{  \left( \hat{V}_E - \tilde{V}_E \right)^2 \ \mid \ X \right\} = {O}_p(n^{-4})$, then \eqref{eq_E_cauchy_schwarz}
implies that the third additive term of
\eqref{eq_breakup_MSE_E} is ${O}_p(n^{-7/2})$. This is sufficient to establish \eqref{eq_lem_suff_errors_def}. 
We begin with showing $n^3 MSE_\xi \left( \tilde{V}_E \right) = {O}_p(1)$. By definition, the scaled mean-square error is $n^{3}MSE_\xi \left( \tilde{V}_{E} \right) 
= n^{3} \E \left\{ {\rm tr} \left( \tilde{V}_{E}^2 \right) \ \big| \ X \right\}$, which is equal to
{\footnotesize
\begin{align}
&
 \sum_{v = 1}^5 \sum_{w = 1}^5  n{\rm cov}\left(\tilde{\phi}_v, \tilde{\phi}_w \right){\rm tr}\left\{ \left( \frac{X^T X}{n^2} \right)^{-1} \left( \frac{X^T \s{S}_v X}{n^{3}} \right)\left( \frac{X^T X}{n^2} \right)^{-2}  \left(  \frac{X^T \s{S}_w X}{n^{3}} \right) \left( \frac{X^T X}{n^2} \right)^{-1} \right\}. \label{eq_mse_xi_E_1}
\end{align} 
}
By Lemma~\ref{lem:covariance_rates}, $n{\rm cov} ( \tilde{\phi}_v, \tilde{\phi}_w  )$ converges to a finite constant for every $( v,w ) \in \{1,\ldots,5 \} \times \{1,\ldots,5 \} $. The convergence in probability of each multiplicative term in \eqref{eq_mse_xi_E_1} containing $X$ is defined by assumption (B1); only those with both $v$ and $w$ in $\{3,4,5 \}$ survive in the limit as these have $|\Theta_v| = \bTheta(n^3)$ whereas $|\Theta_v| = \bTheta(n^2)$ for $v \in \{1,2 \}$. Thus, we have that
\begin{align}
n^{3}MSE_\xi \left( \tilde{V}_{E} \right) 
&\rightarrow_{pr_X}  \sum_{v=3}^5 \sum_{w=3}^5 \sum_{i=1}^4 \alpha_i \beta_v \beta_w C(v,w)_i {\rm tr} \left( M_1^{-1} M_v M_1^{-2} M_w M_1^{-1} \right), \label{eq_exch_error_mse_convergence}
\end{align} 
which is finite., and where `$\rightarrow_{pr_X} $' denotes convergence in $X$-probability. 

It remains to show that $\E \left\{  \left( \hat{V}_E - \tilde{V}_E \right)^2 \ \mid \ X \right\} = {O}_p(n^{-4})$. To establish this fact, it is sufficient to show that $\hat{V}_E - \tilde{V}_E = {O}_p(n^{-2})$, and then, by the continuous mapping theorem, $\left( \hat{V}_E - \tilde{V}_E \right)^2 = {O}_p(n^{-4})$, which implies the desired result. Writing directly, 
\begin{align}
\hat{V}_E - \tilde{V}_E &= \frac{1}{n}\sum_{v = 1}^5  \left(\hat{\phi}_v  - \tilde{\phi}_v  \right) \left( \frac{X^T X}{n^2} \right)^{-1} \left(\frac{X^T \s{S}_v X}{n^{3}} \right)\left( \frac{X^T X}{n^2} \right)^{-1}.
\end{align}
By assumption (B1), the multiplicative terms involving $X$ converge in probability to constants. To establish $\hat{V}_E - \tilde{V}_E = {O}_p(n^{-2})$, it is sufficient show that $\hat{\phi}_v  - \tilde{\phi}_v = {O}_p(n^{-1})$ for all $v \in \{1,\ldots,5 \}$. Writing this expression directly, the difference $\hat{\phi}_v  - \tilde{\phi}_v$ is
{\small
\begin{align}
 -(\hat{\bbeta} - \bbeta)^T \left( \sum_{(jk,lm) \in \Theta_v}\frac{\x_{jk} \xi_{lm} + \x_{lm} \xi_{jk}}{| \Theta_v |}  \right) + (\hat{\bbeta} - \bbeta)^T \left( \sum_{(jk,lm) \in \Theta_v} \frac{\x_{jk} \x_{lm}^T}{|\Theta_v|}  \right)(\hat{\bbeta} - \bbeta). \label{eq_hat_tilde_phi}
\end{align}
}
By Theorem~\ref{thm:AN}, $\hat{\bbeta} - \bbeta = {O}_p(n^{-1/2})$. Also, by assumption (B1), the sum involving $X$ in the second term converges in probability to a constant; thus, the second additive term in \eqref{eq_hat_tilde_phi} is ${O}_p(n^{-1})$. Turning to the first additive term, we notice its expectation is zero, that is $\E( \x_{jk} \xi_{lm} ) = 0$ for all relations $jk$ and $lm$. The variance is 
{\small
\begin{align}
 \var \left( \sum_{(jk,lm) \in \Theta_v}\frac{\x_{jk} \xi_{lm}}{|\Theta_v|} \right)
 &= \frac{1}{|\Theta_v|^2}\sum_{(jk,lm) \in \Theta_v} \sum_{(rs,tu) \in \Theta_v} \E \left( \x_{jk} \x_{rs}^T \xi_{lm} \xi_{tu} \right) = {O}(n^{-1}), \nonumber
\end{align}
}
where we use the fact that $\E \left( \x_{jk} \x_{rs}^T \xi_{lm} \xi_{tu} \right)$ is only nonzero when relation $lm$ shares an actor with relation $tu$ since $\E \left( \xi_{lm} \xi_{tu} \right) = 0$ whenever $\{j, k\} \cap \{l,m \} = \varnothing$ and $X$ is independent ${\xi}$ by assumption (B3). This fact removes at least a factor of $n$ from the sum. Thus, we have that $\sum_{(jk,lm) \in \Theta_v} {\x_{jk} \xi_{lm} + \x_{lm} \xi_{jk}} / {| \Theta_v |} = {O}_p(n^{-1/2})$, which gives that $\hat{\phi}_v  - \tilde{\phi}_v = {O}_p(n^{-1})$ and 
\begin{align}
\E \left\{  \left( \hat{V}_E - \tilde{V}_E \right)^2 \ \mid \ X \right\} ={O}_p(n^{-4}), \nonumber %\label{eq_exch_hat_tilde_rate}
\end{align}
which establishes \eqref{eq_lem_suff_errors_def} for the exchangeable estimator. 

% \vspace{.25in}
% \noindent\textbf{Dyadic clustering estimator: \\  }
The same argument following \eqref{eq_breakup_MSE_E} applies to the dyadic clustering estimator. To establish \eqref{eq_lem_suff_errors_def} for the dyadic clustering estimator, it is thus sufficient to show that $n^3 MSE_\xi \left( \tilde{V}_{DC} \right)$ converges in $X$-probability to a constant and that $\E \left\{  \left( \hat{V}_{DC} - \tilde{V}_{DC} \right)^2 \ \mid \ X \right\} = {O}_p(n^{-4})$. We begin with the former. By definition, $n^{3} MSE_\xi \left( \tilde{V}_{DC} \right) $ is
{
\begin{align}
&\frac{1}{n^{5}} \sum_{(jk,lm) \in \Theta_0} \sum_{(rs,tu) \in \Theta_0} {\rm cov}\left(\xi_{jk} \xi_{lm}, \xi_{rs} \xi_{tu} \right) \times \ldots \nonumber \\
&\hspace{.2in}\ldots \times {\rm tr}\left\{ \left( \frac{X^T X}{n^2}\right)^{-1} \x_{jk} \x_{lm}^T \left( \frac{X^T X}{n^2}\right)^{-2} \x_{rs} \x_{tu}^T \left( \frac{X^T X}{n^2}\right)^{-1} \right\}, 
\end{align} \nonumber
}
where $\Theta_0$ is the set of relation pairs that share an actor in {any} manner. 
Then, substituting the asymptotic values for ${\rm cov}\left(\xi_{jk} \xi_{lm}, \xi_{rs} \xi_{tu} \right)$ from Lemma~\ref{lem:covariance_rates} and separating the sum by the five ways that two relations may share an actor, 
{
\begin{align}
n^{3} MSE_\xi \left( \tilde{V}_{DC} \right) 
&=_a \frac{1}{n^5}\sum_{v = 3}^5 \sum_{w = 3}^5 \sum_{i=1}^4 \sum_{T(v,w)_i}   C(v,w)_i \  \x_{lm}^T M_1^{-2} \x_{rs} \x_{tu}^T M_1^{-2} \x_{jk},\label{eq_mse_xi_DC_2}
\end{align} 
}
where `$=_a$' denotes equality in the limit and $T(v,w)_i$ is the set of relations $(jk,lm,rs,tu)$ such that  $(jk,lm) \in \Theta_v$ and $(rs, tu) \in \Theta_w$ and such that the pairs of relations $(jk,lm)$ and $(rs,tu)$ share a single actor as appropriate for $C(v,w)_i$  in Lemma~\ref{lem:covariance_rates}. In \eqref{eq_mse_xi_DC_2}, we substitute the limit of $\left( {X^T X} / {n^2}\right)^{-2}$ from assumption (B1). Also in \eqref{eq_mse_xi_DC_2}, only terms with $v$ and $w$ both in $\{3,4,5 \}$ survive in the limit as the set $|T(v,w)_i| = \bTheta(n^5)$ (as detailed in Lemma~\ref{lem:covariance_rates}), while the order is less for either $v$ or $w$ in $\{1,2 \}$, so these terms vanish in the limit. Evaluating the vector products, the expression on the right hand side of \eqref{eq_mse_xi_DC_2} equal to
{\small
\begin{align}
\sum_{v = 3}^5 \sum_{w = 3}^5 \sum_{i=1}^4 \sum_{a=1}^p \sum_{b=1}^p \sum_{c=1}^p \sum_{d=1}^p C(v,w)_i \ \left( m_1^{-2} \right)_{ab} \left( m_1^{-2} \right)_{cd} \left( \frac{1}{n^{5}} \sum_{T(v,w)_i}  x_{lm}^{(a)} x_{rs}^{(b)} x_{tu}^{(c)} x_{jk}^{(d)} \right), \label{eq_DC_singles}
\end{align}
}
where $\left( m_1^{-2} \right)_{ab}$ is the $(a,b)$ entry in $M_1^{-2},$ e.g., and $x_{jk}^{(a)}$ is the entry in $X$ pertaining to column $a$ and relation $jk$. Further, the variance $\var \left( {n^{-5}} \sum_{T(v,w)_i}  x_{lm}^{(a)} x_{rs}^{(b)} x_{tu}^{(c)} x_{jk}^{(d)} \right)$ is
{
\begin{align}
&\frac{1}{n^{10}}\sum_{T(v,w)_i} \sum_{U(v,w)_i} {\rm cov}\left( x_{lm}^{(a)} x_{rs}^{(b)} x_{tu}^{(c)} x_{jk}^{(d)}, x_{ef}^{(a)} x_{gh}^{(b)} x_{np}^{(c)} x_{yz}^{(d)}\right),  \label{eq_var_dc_singles} \\
&= \frac{\bTheta(n^9)}{n^{10}} \rightarrow 0, \nonumber
\end{align}
}
where $U(v,w)_i = T(v,w)_i$ and $(lm,rs,tu,jk)$ indexes the first sum and $(ef,gh,np,yz)$  indexes the second sum. The convergence is the result of the independence portion of assumption in (B1) and the bounded moment assumption on $X$. The variance in \eqref{eq_var_dc_singles} converges to zero for every set of covariates $\{a,b,c,d \}$, every relation type  $v$ and $w$ both in $\{3,4,5 \}$, and every covariance type $i \in \{1,2,3,4 \}$. Thus, provided that the expectation of $n^{-5} \sum_{T(v,w)_i}  x_{lm}^{(a)} x_{rs}^{(b)} x_{tu}^{(c)} x_{jk}^{(d)}$ converges to a constant, this expression converges in probability to that same constant. This expectation is
\begin{align}
\E\left( x_{lm}^{(a)} x_{rs}^{(b)} x_{tu}^{(c)} x_{jk}^{(d)} \right) &= {\rm cov}\left(x_{jk}^{(d)} x_{lm}^{(a)}, x_{rs}^{(b)} x_{tu}^{(c)}  \right) + \left( m_v \right)_{ad} \left( m_w \right)_{bc}, \label{eq_dc_covs_abcd}
\end{align}
where $\left( m_v \right)_{ij}$ is the $(i,j)$ entry in $M_v$ and we use the symmetry of $M_v$ for all $v \in  \{ 3,4,5\}$. Unlike ${\xi}$, for a given $i \in \{1,2,3,4 \}$, the covariances in \eqref{eq_dc_covs_abcd} may not be the same for two relation sets in $T(v,w)_i$. However, by assumption (B1) and taking $i=1$, $v=3$, and $w=4$ for example, we still have that 
\begin{align}
{\rm cov}\left(x_{jl}^{(d)} x_{sj}^{(a)}, x_{tj}^{(b)} x_{jk}^{(c)} \right) = {\rm cov}\left(x_{ef}^{(d)} x_{ge}^{(a)}, x_{he}^{(b)} x_{ep}^{(c)} \right), \label{eq_cov_X_ways}
\end{align}
for $ |\{j,k,l,s,t,e,f,g,h,p \}| = 10$. That is, covariances that share an actor in the same way are still equal. So, for fixed $i \in \{1,2,3,4 \}$ and pair of $v$ and $w$ both in $\{3,4,5 \}$, we may collect the $\alpha_i$ possible covariances and average them to attain the convergent value. We thus define the limit
{
\begin{align}
 \frac{1}{n^5}\sum_{T(v,w)_i}{\rm cov}\left(x_{jk}^{(d)} x_{lm}^{(a)}, x_{rs}^{(b)} x_{tu}^{(c)} \right) \rightarrow & \ \alpha_i \beta_v \beta_w \frac{1}{\alpha_i} \sum_{W(v,w)_i} {\rm cov}\left(x_{jk}^{(d)} x_{lm}^{(a)}, x_{rs}^{(b)} x_{tu}^{(c)} \right), \label{eq_cov_conv_Xa} \\
 &:=\alpha_i \beta_v \beta_w C_X^{(d,a,b,c)}(v,w)_i, \label{eq_cov_conv_X}
\end{align}
}
where $W(v,w)_i$ is the set of  $\alpha_i$ {ways} that $(jk, lm, rs, tu)$ correspond to $T(v,w)_i$.   
For example, when $i=4$, $v=3$, and $w=4$, $W(v,w)_i$ contains four index sets corresponding to the four multiplicities of $C(v,w)_4$ as defined in Lemma (3).   
The convergence of \eqref{eq_cov_conv_Xa} results from \eqref{eq_cov_X_ways}. As the average over $W(v,w)_i$ is over a finite number of terms since each $\alpha_i$ is bounded, there is no possibility of divergence. 
The covariances in \eqref{eq_cov_conv_Xa} are finite by assumption (B2) and $\beta_v \beta_w$ arises from the asymptotic limit of $n^{-5}|T(v,w)_i|$.
Taking \eqref{eq_cov_conv_Xa} together with \eqref{eq_dc_covs_abcd}, we have the convergence of the  expectation of $n^{-5} \sum_{T(v,w)_i}  x_{lm}^{(a)} x_{rs}^{(b)} x_{tu}^{(c)} x_{jk}^{(d)}$. Along with \eqref{eq_var_dc_singles}, convergence of the expectation of $n^{-5} \sum_{T(v,w)_i}  x_{lm}^{(a)} x_{rs}^{(b)} x_{tu}^{(c)} x_{jk}^{(d)}$  establishes the convergence of $n^{-5} \sum_{T(v,w)_i}  x_{lm}^{(a)} x_{rs}^{(b)} x_{tu}^{(c)} x_{jk}^{(d)}$ to the same limit.

Now, for a particular $v$ and $w$ both in $\{3,4,5 \}$ and $i \in \{ 1,2,3,4\}$, we collect the set of $\{ C_X^{(d,a,b,c)}(v,w)_i \}$ for every $a,b,c,$ and $d$ in $\{1,2,\ldots,p \}$ from \eqref{eq_cov_conv_X}  into a $p^2 \times p^2$ matrix defined $D_X(v,w)_i$. Substituting this definition into \eqref{eq_DC_singles} while noting each $\{ M_v\}_{v=1}^5$ is symmetric, the convergent value for the dyadic clustering estimator is 
{
\begin{align}
&n^{3} MSE_\xi \left( \tilde{V}_{DC} \right) 
\rightarrow_{pr_X} \sum_{v = 3}^5 \sum_{w = 3}^5 \sum_{i=1}^4  \alpha_i \beta_v \beta_w C(v,w)_i \times \ldots \label{eq_mse_xi_DC_convergence} \\ 
&\hspace{.2in}\ldots \times \left\{ {\rm vec}\left( M_1^{-2} \right)^T D_{X}(v,w)_i{\rm vec}\left( M_1^{-2} \right) + {\rm tr}\left(M_1^{-2} M_v M_1^{-2} M_w \right)\right\}. \nonumber
\end{align} 
}

Since the convergent value in \eqref{eq_mse_xi_DC_convergence} is a finite constant, it remains to show that \[\E \left\{  \left( \hat{V}_{DC} - \tilde{V}_{DC} \right)^2 \ \mid \ X \right\} = {O}_p(n^{-4}).\] As with the exchangeable estimator, it is sufficient to show that $\hat{V}_{DC} - \tilde{V}_{DC} = {O}_p(n^{-2})$. Using the residual definition $e_{jk} = \xi_{jk} + \x_{jk}^T (\bbeta - \hat{\bbeta})$, the expression for $\hat{V}_{DC} - \tilde{V}_{DC}$ is
{\footnotesize
\begin{align}
 &\ \frac{1}{n}\sum_{v = 1}^5 \sum_{(jk,lm) \in \Theta_v}  \left( \frac{X^T X}{n^2} \right)^{-1} \left\{ \frac{\x_{jk} \x_{lm}^T }{n^{3}} \left(e_{jk} e_{lm} - \xi_{jk} \xi_{lm} \right) \right\} \left( \frac{X^T X}{n^2} \right)^{-1}  \label{eq_dc_e_xi_diff} \\
& \hspace{.25in}  =_a \frac{1}{n}\sum_{v = 1}^5 \sum_{(jk,lm) \in \Theta_v}  M_1^{-1} \left[ \frac{\x_{jk} \x_{lm}^T }{n^{3}} \left\{  \x^T_{jk} \left( \hat{\bbeta} - \bbeta \right)  \left( \hat{\bbeta} - \bbeta \right)^T \x_{lm} \right\}  -  2\x^T_{jk} \left( \hat{\bbeta} - \bbeta \right) \xi_{lm}   \right] M_1^{-1}, \nonumber 
\end{align}
}
where we substitute the convergence of $\left( {X^T X} / {n^2} \right)^{-1}$ to $M_1^{-1}$ and have used the exchangeability property to get the factor of two on the second additive term in the center of \eqref{eq_dc_e_xi_diff}. Analyzing the first additive term in the center of \eqref{eq_dc_e_xi_diff}, 
{
\begin{align}
&\sum_{v = 1}^5\sum_{(jk,lm) \in \Theta_v} \frac{\x_{jk} \x_{lm}^T }{n^{3}} \left\{  \x^T_{jk} \left( \hat{\bbeta} - \bbeta \right)  \left( \hat{\bbeta} - \bbeta \right)^T \x_{lm} \right\} \nonumber \\
&\hspace{.2in} = \sum_{jk} \sum_{lm \in \Theta_{jk}}  \left(  \frac{\x_{jk} \x_{jk}^T}{n^2} \right)  \left( \hat{\bbeta} - \bbeta \right)  \left( \hat{\bbeta} - \bbeta \right)^T  \left( \frac{\x_{lm} \x_{lm}^T}{n} \right), \label{eq_dc_e_xi_diff_term1}\\
&\hspace{.23in} =\bTheta_p(1) {O}_p(n^{-1/2}) {O}_p(n^{-1/2}) \bTheta_p(1) = {O}_p(n^{-1}), \nonumber
\end{align}
}
 recalling the notation that $\Theta_{jk}$ is the set of all relations that share an actor with relation $jk$ and that $|\Theta_{jk}| = \bTheta(n)$. We attain the convergence rate by noting that the $X$-terms in \eqref{eq_dc_e_xi_diff_term1} converge in probability to constants by assumption (B1) and $\hat{\bbeta} - \bbeta = {O}_p(n^{-1/2})$ by Theorem~\ref{thm:AN}. The convergences in \eqref{eq_dc_e_xi_diff_term1} are for $p\times p$ matrices; these convergences are element-wise.

We now analyze the convergence rate of the second additive term in the center of \eqref{eq_dc_e_xi_diff}, 
{\small
\begin{align}
\sum_{v = 1}^5\sum_{(jk,lm) \in \Theta_v} \frac{\x_{jk} \x_{lm}^T }{n^{3}} \left\{  \x^T_{jk} \left( \hat{\bbeta} - \bbeta \right)  \xi_{lm}  \right\} 
&= \sum_{lm} \sum_{jk \in \Theta_{lm}} \left( \frac{\x_{jk} \x_{jk}^T}{n} \right) \left( \frac{\xi_{lm} \x_{lm}^T }{n^2} \right)  \left( \hat{\bbeta} - \bbeta \right), \nonumber \\
&=\bTheta_p(1) {O}_p(n^{-1/2}) {O}_p(n^{-1/2}) =  {O}_p(n^{-1}). \nonumber
\end{align}
}
Again, the convergence of the first multiplicative term is a result of assumption (B1) and  $\hat{\bbeta} - \bbeta = {O}_p(n^{-1/2})$ by Theorem~\ref{thm:AN}. The mean $n^{-2} \sum_{lm} \x_{lm}\xi_{lm}$ is expectation zero and ${O}_p(n^{-1/2})$ by previous arguments, for example in \eqref{eq:vector_an}. Thus, we have $\hat{V}_{DC} - \tilde{V}_{DC} = {O}_p(n^{-2})$, and the dyadic clustering estimator satisfies the relation in \eqref{eq_lem_suff_errors_def}.
\end{proof}

\subsection{Proof of Theorem~\ref{thm_mse}}
We now establish that the mean-square error of the exchangeable estimator is less than that of the dyadic clustering estimator with high probability. To do so, we show that the value to which the difference in mean-square errors converges is nonnegative. Throughout the proof, we drop the conditioning on $X$ in the mean-square error as it is understood, for example $MSE_\xi ( \hat{V}_{E} ) = MSE_\xi ( \hat{V}_{E} \mid X )$.

The asymptotic difference in mean-square errorss is as follows, where we substitute the expressions for the estimators based on the errors in \eqref{eq_exch_error_mse_convergence} and \eqref{eq_mse_xi_DC_convergence}, as justified by Lemma~\ref{lem_sufferrors1}:
{ 
\begin{align}
&n^{3} \left\{ MSE_\xi \left( \hat{V}_{DC} \right) - MSE_\xi \left(  \hat{V}_{E} \right)\right\}   \nonumber \\
&\hspace{.3in}\rightarrow_{pr_X}  \sum_{v=3}^5 \sum_{w=3}^5 \sum_{i=1}^4 \alpha_i\beta_v \beta_w  C(v,w)_i {\rm vec}\left( M_1^{-2} \right)^T D_{X}(v,w)_i{\rm vec}\left( M_1^{-2} \right). \label{eq_diff_single_cov1}
\end{align}
}
It remains to show that this is a nonnegative constant. To do so, we show that the matrix in the quadratic form in \eqref{eq_diff_single_cov1} is the limit of a variance matrix, and thus positive semi-definite. We will show that the scaled variance
{ 
\begin{align}
&\frac{1}{n^5} \var \left\{ \sum_{v=1}^5 \sum_{jk, lm \in \Theta_v} \xi_{jk} \xi_{lm} {\rm vec} \left( \x_{jk} \x_{lm}^T \right) \right\}   \nonumber \\
&\hspace{.3in}= \frac{1}{n^5} \sum_{v=1}^5 \sum_{w=1}^5 \sum_{jk, lm \in \Theta_v} \sum_{rs, tu \in \Theta_w} {\rm cov}\left\{ \xi_{jk}\xi_{lm} {\rm vec} \left( \x_{jk} \x_{lm}^T \right),\  \xi_{rs}\xi_{tu} {\rm vec} \left( \x_{rs} \x_{tu}^T \right) \right\}, \label{eq_vlimit_mse}
\end{align}
}
converges to the desired matrix. First, the sum in \eqref{eq_vlimit_mse} is $\bTheta(n^5)$ as the relations $jk$ and $lm$ must share at least one actor with the relations $rs$ and $tu$ for the covariance to be nonzero. Then, by the arguments in Lemma~\ref{lem:covariance_rates}, 
only pairs of relations that share a single actor survive in the limit. Finally, by assumption (B3), $X$ is independent ${\xi}$ and the variance in \eqref{eq_vlimit_mse} is asymptotically equivalent to
\begin{align}
&\frac{1}{n^5}\sum_{v=3}^5 \sum_{w=3}^5 \sum_{jk, lm \in \Theta_v} \sum_{rs, tu \in \Theta_w} \E\left\{ \left(\xi_{jk} \xi_{lm} - \phi_v\right) \left(\xi_{rs} \xi_{tu} - \phi_w\right) \right\}\times \ldots \nonumber \\ 
&\ldots \times \E\left[ \left\{ {\rm vec} \left( \x_{jk} \x_{lm}^T \right) - {\rm vec} \left( M_v \right) \right\} \left\{ {\rm vec} \left( \x_{rs} \x_{tu}^T \right) - {\rm vec} \left( M_w \right) \right\}^T \right],  \nonumber
\end{align}
where only terms with both $v$ and $w$ in $\{3,4,5 \}$ survive in the limit. Then, by assumptions (A1) and (B1) and applying Lemma~\ref{lem:covariance_rates}, the variance converges to 
{
\begin{align}
&\frac{1}{n^5} \var \left\{ \sum_{v=1}^5 \sum_{jk, lm \in \Theta_v} \xi_{jk} \xi_{lm} {\rm vec} \left( \x_{jk} \x_{lm}^T \right) \right\}  \rightarrow  \sum_{v=3}^5 \sum_{w=3}^5 \sum_{i=1}^4 \alpha_i \beta_v \beta_w C(v,w)_i D_X(v,w)_i, \label{eq_central_matrix_limit}
\end{align}
}
where we substitute the definition of $D_X(v,w)_i$ following \eqref{eq_cov_conv_X}.  Thus, the matrix in \eqref{eq_central_matrix_limit} is positive semi-definite. Now, \eqref{eq_diff_single_cov1} becomes
{
\begin{align}
&n^{3} \left\{ MSE_\xi \left(  \hat{V}_{DC} \right) - MSE_\xi \left(  \hat{V}_{E} \right)\right\}  \nonumber \\
& \hspace{.25in} \rightarrow_{pr_X} {\rm vec}\left( M_1^{-2} \right)^T \left\{ \sum_{v=3}^5 \sum_{w=3}^5 \sum_{i=1}^4  \alpha_i \beta_v \beta_w C(v,w)_i D_X(v,w)_i \right\} {\rm vec}\left( M_1^{-2} \right) \ge 0. \nonumber %\label{eq_diff_conv_pos}
\end{align}
}
\qed

\section{Simulation study}
\label{app_sim}

\subsection{Details of simulation study}
We simulated data from a linear regression model with an exchangeable error model, independent and identically distributed errors, and a non-exchangeable error model. 
For each error setting, we employed the following three-covariate regression model:
\begin{equation}
y_{ij}=\beta_1+\beta_2 {1}_{(x_{2i} \in C)} {1}_{(x_{2j} \in C)}+\beta_3|x_{3i}-x_{3j}|+\beta_4x_{4ij}+\xi_{ij}, 
% \nonumber
\label{simreg}
\end{equation}  
where $1_{(.)}$ denotes the indicator function. 
In this model, $\beta_1$ is an intercept; $\beta_2$ is a coefficient on a binary indicator of whether individuals $i$ and $j$ both belong to a pre-specified class $C \subset (1,\ldots,n )$; $\beta_3$ is a coefficient on the absolute difference of a continuous, actor-specific covariate $x_{3i}$; and $\beta_4$ is that for a pair-specific continuous covariate $x_{4ij}$. For all simulations, we fixed true coefficients $\beta = (1,1,1,1)^T$. We drew each $x_{2i}$ from a  Bernoulli$(1/2)$ distribution independently. In the rare event that  $x_{2i} = x_{2j}$ for all $(i,j)$ pairs, one realization $x_{2k}$ was randomly flipped to a $1$ or $0$. All $x_{3i}$ and $x_{4ij}$ were drawn independently from a standard normal distribution.

Each error setting was specified to have the same total variance: 
\[\sum_{i j} \var (\xi_{ij}) = 3n(n-1).\] This variance was chosen so that the variance of the error would be similar to that of the regression mean model $\beta^T x_{ij}$. In the independent and identifically distributed errors setting,  $\xi_{ij} \sim_{iid}  {\rm N}(0, 3)$ for all $(i,j)$. To generate the non-exchangeable errors, a mean-zero random effect was added to the upper left quadrant of $\var( \xi )$ under independent and identically distributed errors. The errors for the non-exchangeable error setting may be written
\[\xi_{ij} = \tau {1}_{( i \le \left \lfloor{n/2} \right \rfloor )} {1}_{( j \le \left \lfloor{n/2}\right \rfloor ) } + \epsilon_{ij}, \ \ \ \ \tau \sim {\rm N}\left(0, \frac{9n}{4\left \lfloor{n/2}\right \rfloor } \right), \ \ \ \ \epsilon_{ij} \sim_{iid} {\rm N}(0, 3/4),\]
where `${1}_{( j \le \left \lfloor{n/2}\right \rfloor)}$' is an indicator of index $i$ less than or equal to the floor of $n/2$, for example. 
Since noise was added to actor relations in the same position in $\Xi$ in each simulation run, the distribution of the relations was not exchangeable:  the distribution of the errors would be different for a reordering of the rows and columns of the error array.

Finally, the distribution of the exchangeable (bilinear mixed effects model) error setting is defined in~\cite{hoff2005bilinear} and detailed in full below. The bilinear mixed effects model  is a generalization of the ``additive common shocks'' error structure used in simulation studies to justify the dyadic clustering estimator (\cite{cameron2014robust, aronow2015cluster, tabord2017inference}). 
\begin{align}
y_{ij} &= \beta^T x_{ij} +  \xi_{ij}, \hspace{.6in} \xi_{ij} = a_i + b_j + z_i^Tz_j + \gamma_{(ij)} + \epsilon_{ij},  \label{eq:ble_full}  \\
\noalign{\medskip}  \nonumber
(a_i,b_i) &\sim \text{N}(0_2,\Sigma_{ab}), \hspace{.6in} \Sigma_{ab} =  \left( \begin{array}{cc}
\sigma^2_{a} & \rho_{ab}\sigma_a \sigma_b  \nonumber  \\
\rho_{ab} \sigma_a \sigma_b & \sigma^2_b  
 \end{array} \right),  \nonumber  \\
\noalign{\medskip}  \nonumber
 z_i, z_j  \sim  \text{N}_d(0,&\sigma_z^2 I_d),    \hspace{.4in} \gamma_{(ij)} = \gamma_{(ji)} \sim \text{N}(0, \sigma^2_{\gamma}),    \hspace{.4in} \epsilon_{ij} \sim \text{N}(0, \sigma^2_{\epsilon}),  \nonumber  
\end{align}
where $a_i$, $b_j$, $z_i$, $z_j$, and $\epsilon_{ij}$ are independent.  
We selected the dimension of the latent space to be $d=2$, the correlation between sender and receiver effects as $\rho_{ab} = 1/2$, and the sender variance to be twice that of receiver variance: $\sigma_a^2 = 2 \sigma_b^2$. We further specified $\sigma_z = \sigma_\gamma = \sigma_b$. Finally, we selected $\sigma^2_\epsilon = 3/4$. With the aforementioned choices, the restriction $\sum_{i j}\var ( \xi_{ij} ) = 3n(n-1)$ generated a quadratic equation in $\sigma_b$. The standard deviations that resulted from solving this quadratic equation are shown in Table \ref{tab:sim_exchsds}.

\begin{table}
    \centering
    \caption{\label{tab:sim_exchsds}Approximate standard deviations for exchangeable error setting.}
    \begin{tabular}{ccccc}
    $\sigma_\epsilon$ & $\sigma_a$ & $\sigma_b$ & $\sigma_\gamma$ & $\sigma_z$ \\
0.866 & 0.957  & 0.677 & 0.677 & 0.677 \\
    \end{tabular}
\end{table}

We generated 500 random realizations of covariates for relational data sets with numbers of actors $n=20,40,80,160,320$. For each covariate realization, 1,000 random error realizations were generated for each of the three error settings: independent and identically distributed, exchangeable, and non-exchangeable. Using the model \eqref{simreg}, a simulated data set was created from each covariate realization and error realization pair.  The regression model was fit to each data set using ordinary least squares, and standard errors were estimated using the exchangeable and dyadic clustering
sandwich variance estimators.  Confidence interval coverage was estimated {for each covariate realization} by counting the fraction of confidence intervals  that contained the true coefficient. Bias and variance of the standard error estimators was estimated {for each covariate realization} relative to the known true standard errors of the ordinary least squares estimator, conditional on the covariate realization.

\subsection{Bias in simulation study}
\label{sec:sim_bias}

In addition to the estimated coverages in Figure~\ref{fig:sim_coverage}, we plot the estimated bias of the standard error estimators in Figure~\ref{fig:sim_bias}. We note that the bias of the dyadic clustering estimator is generally substantially larger in absolute value, especially in the independent and exchangeable error cases, and most standard error estimators are biased downwards.

 \begin{figure}[ht!]
\centering
 \begin{tabular}{rccc}
  \hspace{-.2in}&\hspace{-.2in}
    \hspace{1pt} binary
  \hspace{-.22in}&\hspace{-.22in}
  \hspace{1pt} positive %$|x_{3i}-x_{3j}|$
  \hspace{-.22in}&\hspace{-.22in}
  \hspace{1pt} real-valued \\ %$x_{4ij}$ \\
  \begin{sideways}\hspace{.4in} Estimated bias (\% error) \end{sideways} 
  \hspace{-.2in}&\hspace{-.2in}
\includegraphics[width=.315\textwidth]{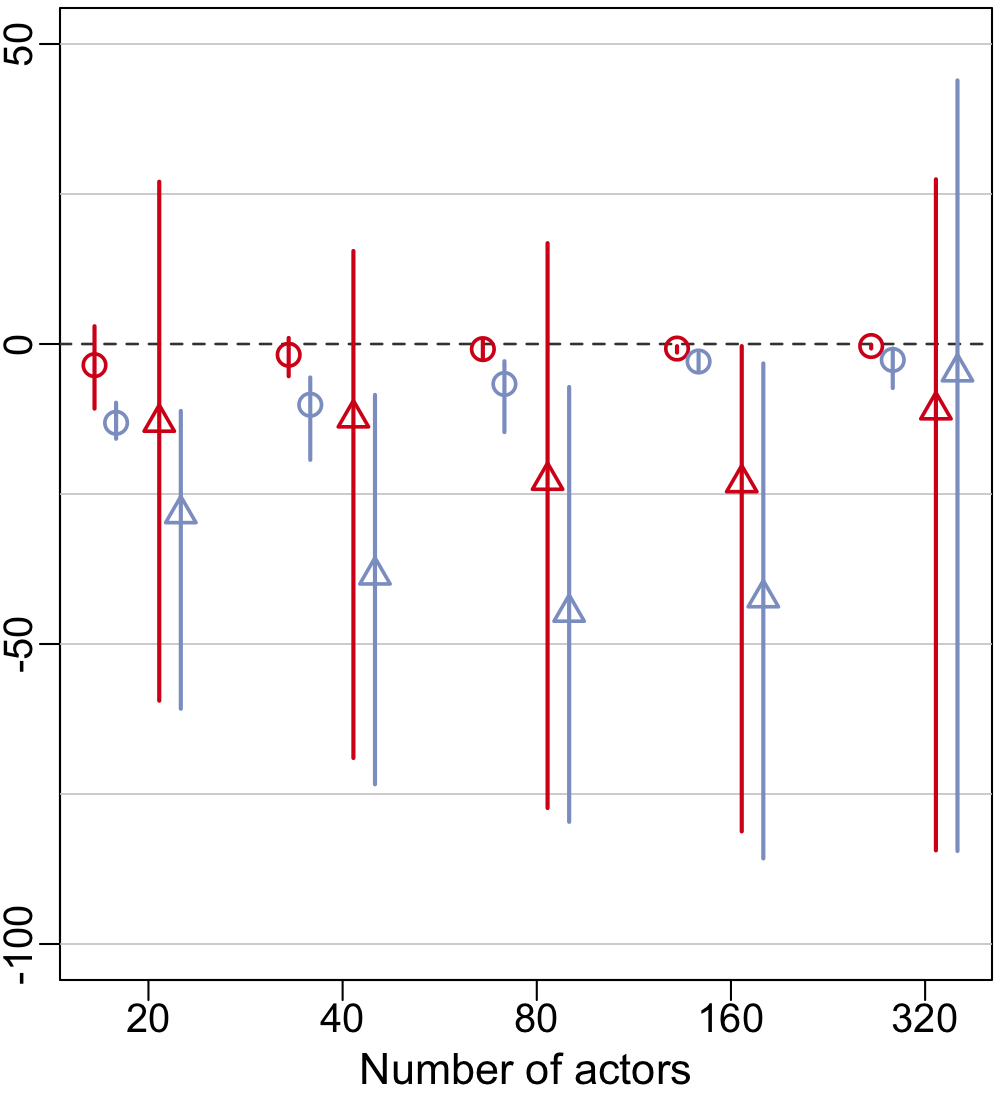} 
  \hspace{-.22in}&\hspace{-.22in}
\includegraphics[width=.315\textwidth]{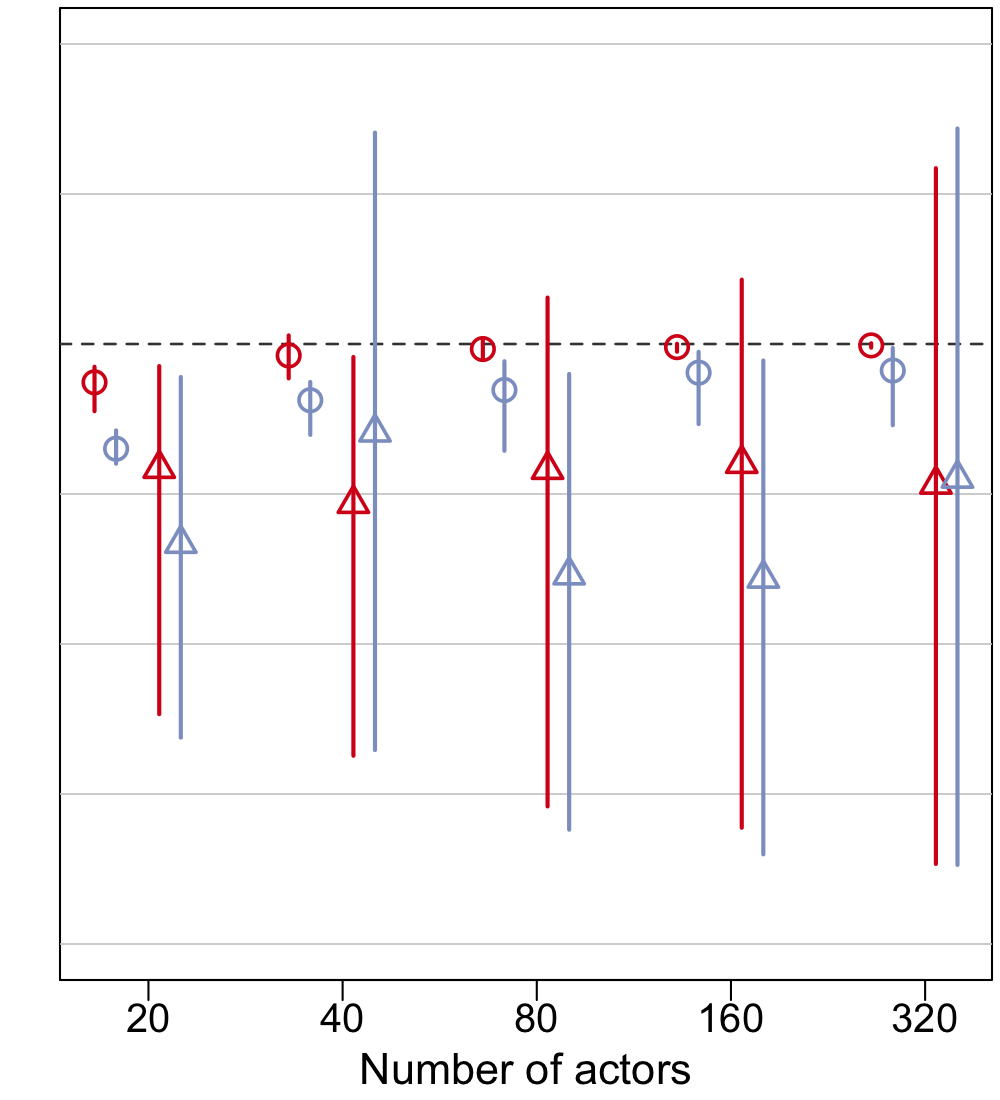} 
  \hspace{-.22in}&\hspace{-.22in}
\includegraphics[width=.315\textwidth]{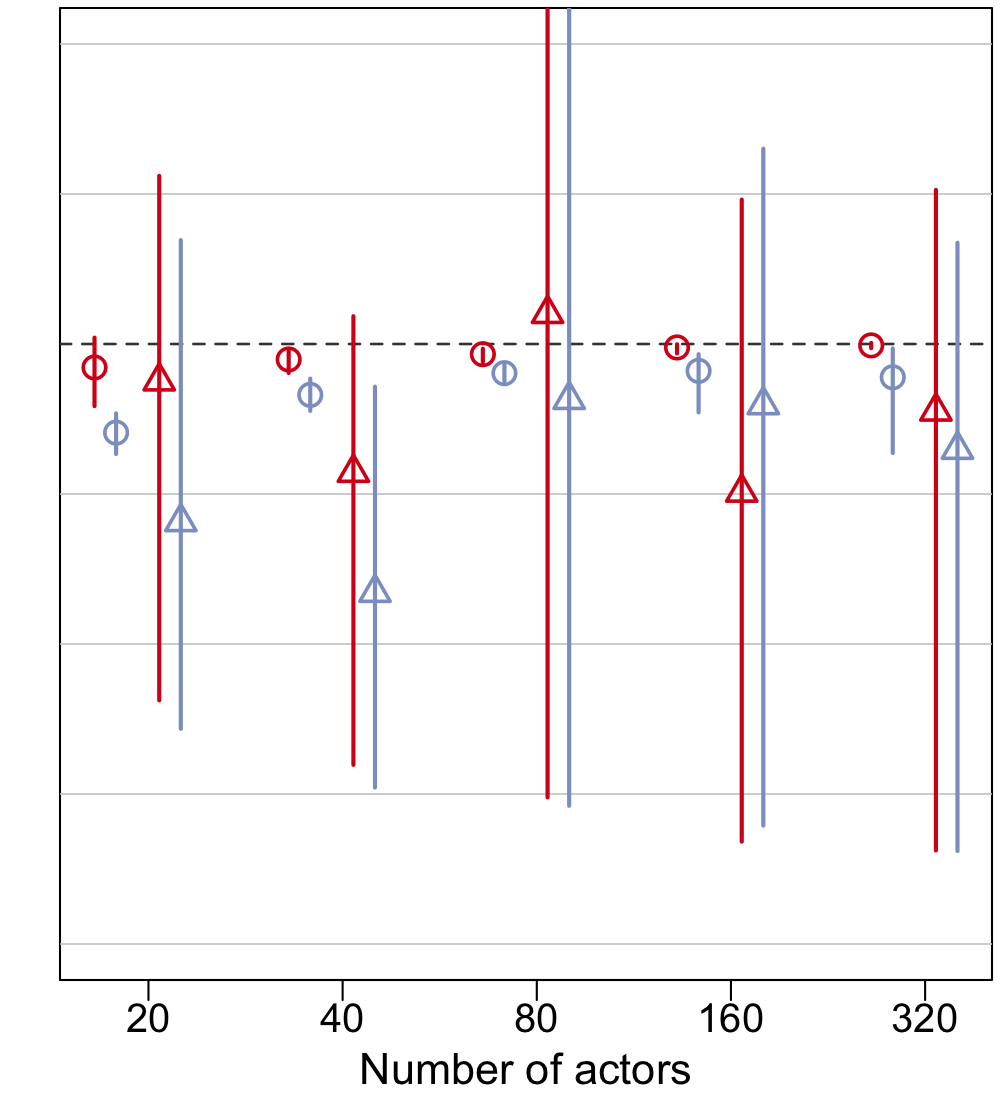} 
  \\
  \end{tabular}
  \caption{Estimated bias of standard error estimators for each of three covariates (binary, positive, and real-valued) when the errors are generated from  exchangeable (circles) and non-exchangeable (triangle) error models. 
Points represent mean estimated percent error (of standard error estimators) and lines represent 95\% quantiles of percent error for exchangeable (red) and dyadic clustering (blue)
estimators. }
      \label{fig:sim_bias}
\end{figure}

To verify Theorem~\ref{thm:bias}, we compared the biases of the dyadic clustering and exchangeable estimators of the variances of the regression coefficients in the simulation studies with exchangeable and independent errors. We computed the mean bias for each variance estimator across all simulations for each error structure, data set size, coefficient, and estimator. Then, we took the ratio of mean absolute biases of the dyadic clustering variance estimator to that of the exchangeable estimator. The ratios, averaged for all $n$, are given in Table~\ref{tab:sim_bias}. 
We note that, in all cases, the mean bias of the dyadic clustering estimator was negative, and larger in absolute value than that of the exchangeable estimator. The fact that every ratio is greater than 2 confirms Theorem~\ref{thm:bias}.

\begin{table}[ht]
\centering
\caption{Mean (across values of $n$) ratio of average (across covariate and error realizations) absolute bias in dyadic clustering variance estimator to average absolute bias of exchangeable variance estimator for independent and exchangeable error models. }
\label{tab:sim_bias}
\begin{tabular}{rrrr}
 & binary & positive & real-valued \\ 
independent & 96.48 & 5.03 & 123.45 \\ 
  exchangeable & 3.04 & 5.43 & 5.02 \\ 
\end{tabular}
\end{table}

\section{Dyadic clustering covariance matrix invertibility}
\label{app_DC_inv}
Ideally, for a covariance matrix estimate $\hat{\Omega}$ to be of utmost utility, it must be invertible. For example, if we wish to employ feasible generalized least squares, 
the estimate of the the covariance matrix must be nonsingular. However, in many cases the dyadic clustering estimator is singular and hence cannot be inverted.  In cases when the dyadic clustering estimator $\hat{\Omega}_{DC}$ is singular, it can still be used to  produce the dyadic clustering estimator $\hat{V}_{DC}$ in \eqref{eq:varc}.

\begin{theorem} The dyadic clustering estimate of the error variance, $\hat{\Omega}_{DC}$, is singular for directed data with $R=1$.
\label{theoremDC}
\end{theorem}
\begin{proof}
The dyadic clustering estimator can be written as the Hadamard product between the outer product of the residuals and a matrix of indicators of whether the dyad indices share a member,
\[ \hat{\Omega}_{DC}=ee^{T}\circ {1}{\{(i,j) \cap (k,l)\neq \varnothing \} }, \]
where ${1}{\{(i,j) \cap (k,l)\neq \varnothing \} }$ is an $n(n-1) \times n(n-1)$ matrix of indicators of indices $(i,j)$ sharing an index with $(k,l)$.
The rank of the outer product of the residuals is one: ${\rm rank}(ee^T) = 1$. The rank of the indicator matrix is at most $n(n-1)/2$, since the indices $(i,j)$ share a member with an arbitrary pair $(k, l)$ if and only if the indices $(j,i)$ do as well. Thus, the column of ${1}{\{(i,j) \cap (k,l)\neq \varnothing \} }$ corresponding to $(i,j)$ is the same as that corresponding to $(j,i)$.

For any two square matrices of equal size $A$ and $B$, ${\rm rank}(A \circ B) \le {\rm rank}(A){\rm rank}(B)$. Thus, 
 \begin{align*}
{\rm rank}(\hat{\Omega}_{DC}) &= {\rm rank}[ ee^{T}\circ {1}{\{(i,j) \cap (k,l)\neq \varnothing \} } ], \\
&\le {\rm rank}(ee^{T}) {\rm rank}[{1}{\{(i,j) \cap (k,l)\neq \varnothing \} }], \\
&\le \frac{n(n-1)}{2}.
\end{align*}
$\hat{\Omega}_{DC}$ is therefore not full rank. 
\end{proof}

\begin{remark}
Theorem~\ref{theoremDC} does not hold for undirected data when $R=1$. If the data are undirected, then the bound does not guarantee singularity of $\hat{\Omega}_{DC}$ since the dimension of $\hat{\Omega}_{DC}$ is exactly $n(n-1)/2$. In practice, we find that $\hat{\Omega}_{DC}$  may be full rank in this case. 
\end{remark}

\begin{remark}
The result of Theorem~\ref{theoremDC} holds for both directed and undirected data when $R>1$. In this case, 
the column in the indicator matrix ${1}{\{(i,j) \cap (k,l)\neq \varnothing \} }$ corresponding to the indices $(i,j,s)$ is the same as that column corresponding to $(i,j,r)$ for all values of $r= 1,\ldots,R$. Thus, again $\hat{\Omega}_{DC}$ is not full rank. 
\end{remark}

\section{Efficient inversion of the exchangeable covariance matrix}
\label{app_Omega_inv}
To perform the generalized least squares estimation procedure as described in Section~\ref{sec:trade}, we must invert the exchangeable covariance matrix ${\Omega}_E$ as defined in Figure~\ref{exchCOVMAT}.  For now, we work in the case where $R=1$.  Since ${\Omega}_E$ is a real symmetric matrix, its inverse is real and symmetric as well. However, we can say more about the patterns in the inverse ${\Omega}_E^{-1}$. Recall that  ${\Omega}_E$ has at most six unique terms; call these parameters $\phi$. We find that the inverse ${\Omega}_E^{-1}$ has at most six unique terms as well. If we define the parameters in ${\Omega}_E^{-1}$ as ${p}$, we can write
\begin{align}
{\Omega}_E({\phi}) {\Omega}_E^{-1}({p}) = I, \quad {\phi}, {p} \in \mathbb{R}^6, \nonumber
\end{align}
where $I$ is the $n(n-1) \times n(n-1)$ identity. Lastly, we make the conjecture that the parameter pattern in ${\Omega}_E^{-1}$ is exactly the same as that in ${\Omega}_E$; we find this conjecture to be true in practice. One caveat is that the locations in which we assume zeros in ${\Omega}_E$ are not zero in ${\Omega}_E^{-1}$ in general. 

We can find the inverse parameters ${p}$ from ${\phi}$ without inverting the entire matrix ${\Omega}_E$ by instead solving the following linear system 
\begin{align}
C({\phi}, n){p} = (1,0,0,0,0,0)^T, \quad C({\phi}, n) \in \mathbb{R}^{6 \times 6},  \nonumber
\end{align}
where $C({\phi}, n)$ is a set of six linear equations based on the parameters ${\phi}$ and the number of actors $n$ and is depicted in Figure~\ref{fig:cmat}. Thus, we replace the need to invert the $n(n-1) \times n(n-1)$ matrix $\Omega_E$ by the inversion of the $6 \times 6$ matrix $C({\phi}, n)$. Using this procedure, the computational cost associated with finding the inverse of $\Omega_E$ is independent of the  number of actors $n$. 
\begin{sidewaysfigure}
\centering
\includegraphics[width=.99\textwidth]{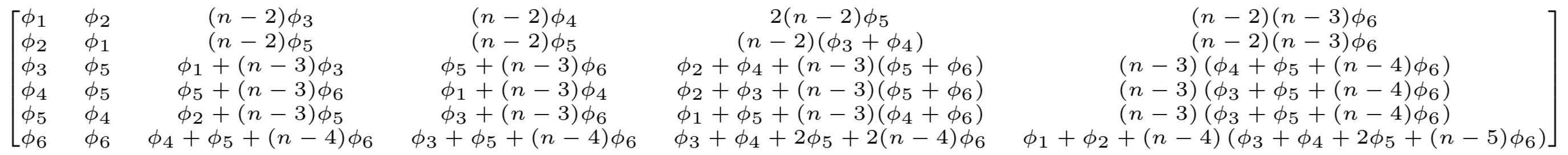}
% \begin{figure}[!ht]
% \centering
% \includegraphics[width=1.34
% \textwidth, angle=-90]{figures/Cmat.png}
% {\tiny
% \begin{align*}
% \hspace{-.5in}
% \begin{bmatrix} 
% \phi_1 & \phi_2 & (n-2)\phi_3 & (n-2)\phi_4 & 2(n-2)\phi_5 & (n-2)(n-3)\phi_6 \\ 
% \phi_2 & \phi_1 & (n-2)\phi_5 & (n-2)\phi_5 & (n-2)(\phi_3 + \phi_4) & (n-2)(n-3)\phi_6 \\ 
% \phi_3 & \phi_5 & \phi_1 + (n-3)\phi_3 & \phi_5 + (n-3)\phi_6 & \phi_2 + \phi_4 + (n-3)(\phi_5 + \phi_6) & (n-3)\left( \phi_4 + \phi_5 + (n-4)\phi_6 \right) \\ 
% \phi_4 & \phi_5 & \phi_5 + (n-3)\phi_6 & \phi_1 + (n-3)\phi_4 & \phi_2 + \phi_3 + (n-3)(\phi_5 + \phi_6) & (n-3)\left( \phi_3 + \phi_5 + (n-4)\phi_6 \right) \\ 
% \phi_5 & \phi_4 & \phi_2 + (n-3)\phi_5 & \phi_3 + (n-3)\phi_6 & \phi_1 + \phi_5 + (n-3)(\phi_4 + \phi_6) & (n-3)\left( \phi_3 + \phi_5 + (n-4)\phi_6 \right) \\ 
% %\phi_5 & \phi_3 & \phi_4 + (n-3)\phi_6 & \phi_2 + (n-3)\phi_5 & \phi_1 + \phi_5 + (n-3)(\phi_3 + \phi_6) & (n-3)\left( \phi_4 + \phi_5 + (n-4)\phi_6 \right) \\ 
% \phi_6 & \phi_6 & \phi_4 + \phi_5 + (n-4)\phi_6 & \phi_3 + \phi_5 + (n-4)\phi_6 & \phi_3 + \phi_4 + 2 \phi_5 + 2(n-4)\phi_6 & \phi_1 + \phi_2 + (n-4)\left( \phi_3 + \phi_4 + 2\phi_5 + (n-5)\phi_6 \right) \\ 
% \end{bmatrix}  
% \end{align*}
% }
\caption{Matrix $C(\bs{\phi}, n)$.}
\label{fig:cmat}
%%%%%%%%%%%%%%%%%%%%%%%%%
%% Note: C has a seventh row that does not further constrain the system; commented out
%%%%%%%%%%%%%%%%%%%%%%%%%%%%%%%%
% \end{figure}
\end{sidewaysfigure}

Now consider the case of array data with $R>1$. Inversion of the exchangeable covariance $\Omega_E$ in Figure~\ref{fig:directeddyad} requires consideration of the  patterns in the block matrices. Recalling that $\Omega_E$ is  parametrized by twelve terms when $R>1$, we denote the first six parameters as ${\phi}^{(1)}$ and the second six ${\phi}^{(2)}$, corresponding to $\Omega_1$ and $\Omega_2$ respectively. Again the inverse $\Omega^{-1}$ has the exact same block matrix pattern as $\Omega$. Thus, the inverse may be parametrized by ${p}^{(1)}$ and ${p}^{(2)}$, each with length six, defined by the following linear equations.
\begin{align}
C({\phi}^{(1)}, n){p}^{(1)} + (R-1)C({\phi}^{(2)}, n){p}^{(2)} & = (1,0,0,0,0,0)^T, \nonumber  \\  % \label{linInv}
C({\phi}^{(2)}, n){p}^{(1)} + C({\phi}^{(1)}, n){p}^{(2)} + (R-2)C({\phi}^{(2)}, n){p}^{(2)} & = 0. \nonumber
\end{align}
 This is twelve linear equations in ${p}^{(1)}$ and ${p}^{(2)}$. In this formulation we reduce a $Rn(n-1) \times Rn(n-1)$ inversion to a $12 \times 12$ inversion for calculation of $\Omega_E^{-1}$. Again, there is no dependence of the complexity of the inversion on the array dimensions $n$ and $R$.

\section{Trade data analysis}
\label{app_trade}

\subsection{Trade model details}
We analyzed international trade among 58 countries over 20 years. The data were analyzed and made available (DOI:\url{10.1214/10-AOAS403}) by~\citet{westveld2011mixed}, which used a modified gravity mean model to represent the logarithm of yearly trade between each pair of countries as linear function of seven covariates in years 1981-2000.
The gravity model, proposed by~\citet{tinbergen1962shaping}, posits that the total trade between countries is proportional to overall economic activity of the countries weighted by the inverse of the distance between them (raised to a power).  Following~\citet{ward2007persistent} we added an indicator for whether the nations' militaries cooperated in the given year and a measure of democracy, i.e. polity, which ranges from $0$ (highly authoritarian) to $20$ (highly democratic).
The complete model has the form:
\begin{align}
\text{log} \mbox{Trade}_{ijt}=& \beta_{0t} + \beta_{1t} \text{log}  \text{GDP}_{it} + \beta_{2t} \text{log}  \text{GDP}_{jt} + \beta_{3t} \text{log}  \text{D}_{ijt} \label{eq:trade} \\
& \hspace{.1in} +\beta_{4t} \, \text{Pol}_{it} +\beta_{5t}\, \text{Pol}_{jt} +\beta_{6t} \, \text{CC}_{ijt} +\beta_{7t}\left(\text{Pol}_{it} \times \text{Pol}_{jt}\right)+\epsilon_{ijt}, \nonumber
\end{align}
where $\text{log} \mbox{Trade}_{ijt}$ is the (log) volume of trade sent from country $i$ to country $j$ in year $t$; $\text{log} \text{GDP}_{it}$ and $\text{log} \text{GDP}_{jt}$ are the (log) Gross Domestic Product of nations $i$ and $j$, respectively; $\text{log} \text{D}_{ijt}$ is the (log) geographic distance between nations; $\text{CC}_{ijt}$ is the measure of cooperation in conflict (coded as +1 if nations were on the same side of a dispute and -1 if they were on opposing sides); and $\text{Pol}_{it}$ and $\text{Pol}_{jt}$ are the polity measures for $i$ and $j$, respectively, where polity ranges from $0$ (highly authoritarian) to $20$ (highly democratic).

\subsection{Estimation}
Let $\beta$ denote the length $8 T$ vector of regression coefficients defined in \eqref{eq:trade}, where $t=1,\ldots, T$ and $T=20$ is the final time period. We estimated $\beta$ using feasible generalized least squares as in \eqref{eq:gls}, where $y$ is a vector consisting of $( \text{log} \mbox{Trade}_{ijt} )$ where $(i=1, \ldots, n; j=1,\ldots, n; i \neq j; t = 1,\ldots, T)$, $X$ is a matrix of corresponding covariates defined by the model in \eqref{eq:trade}, and we assume $\Omega$ is jointly exchangeable. We initialized $\beta$ using ordinary least squares. Then, we estimated $\Omega$ using the residuals and the joint exchangeability assumption, as in \eqref{eq:exch.est} and \eqref{eq:resid_avg}. We then used 
\eqref{eq:gls} with $\Omega = \hat{\Omega}$ to re-estimate $\beta$. We iterated estimates of $\Omega$ and $\beta$ until convergence. We defined convergence as an absolute change in weighted residual inner product being less than a particular tolerance $\epsilon > 0$. Explicitly, convergence is defined to occur when
\begin{align}
    &|Q^{\gamma} - Q^{\gamma-1}| < \epsilon , \nonumber \\
    &Q^\gamma = (y - X \hat{\beta}^{\gamma})^T (\hat{\Omega}^{\gamma})^{-1} (y - X \hat{\beta}^{\gamma}), \ (\gamma = 1,2,\ldots), \nonumber
\end{align}
where $\hat{\beta}^{\gamma}$ and $\hat{\Omega}^\gamma$ are the estimators at iteration $\gamma$ of the regression coefficients and error covariance matrix, respectively. For the trade data analysis, we take $\epsilon = 10^{-6}$.
After convergence, we set $\Omega = \tilde{\Omega}$ in \eqref{eq:gls_var} to obtain standard errors for the estimated regression coefficients, which is the standard approach in feasible generalized least squares.

\subsection{In-sample evaluation}
\label{sec:trade_insample}
We compared the coefficients and confidence intervals resulting from  the proposed feasible generalized least squares approach to the estimated coefficients and credible intervals from the mixed effects model of \cite{westveld2011mixed}, and to coefficients estimated by ordinary least squares and confidence intervals using the dyadic clustering estimator $\hat{V}_{DC}$.  
The estimated coefficients and corresponding 95\% confidence intervals and posterior credible intervals are shown in Figure~\ref{fig:tradecompare}.

\begin{figure}
\centering
\begin{tabular}{c c }
 \includegraphics[width=.47\textwidth]{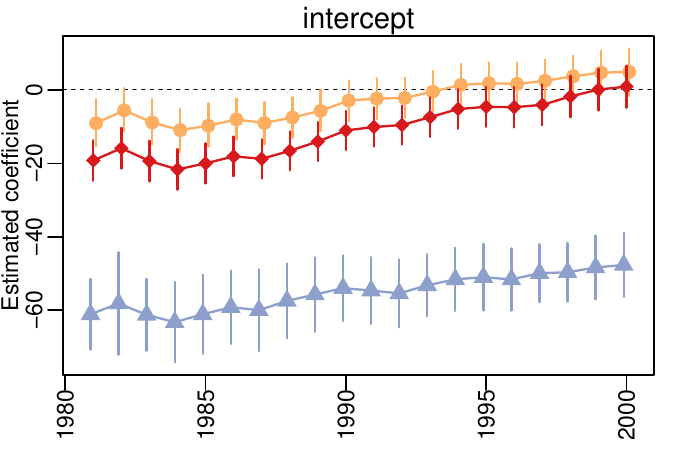}  &  \hspace{-.2in}
 \includegraphics[width=.47\textwidth]{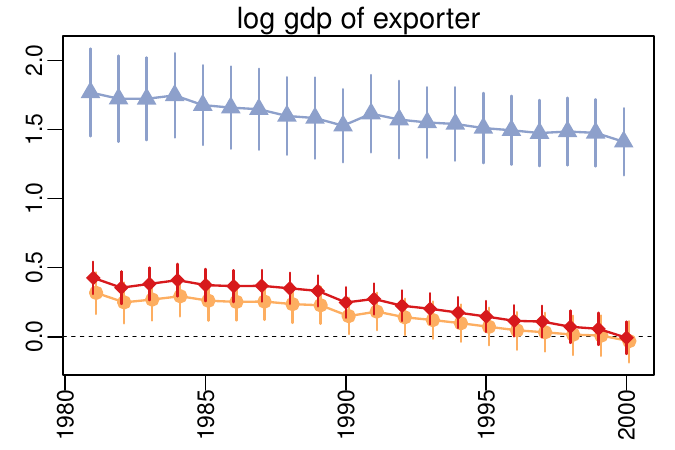}  \\[-2pt]
 \includegraphics[width=.47\textwidth]{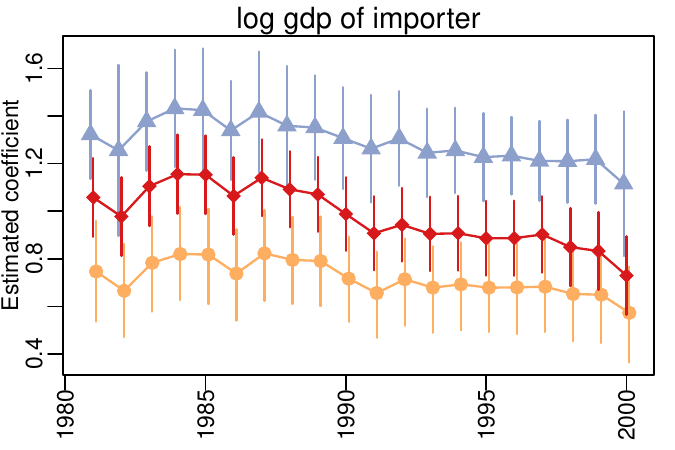}  &  \hspace{-.2in}
 \includegraphics[width=.47\textwidth]{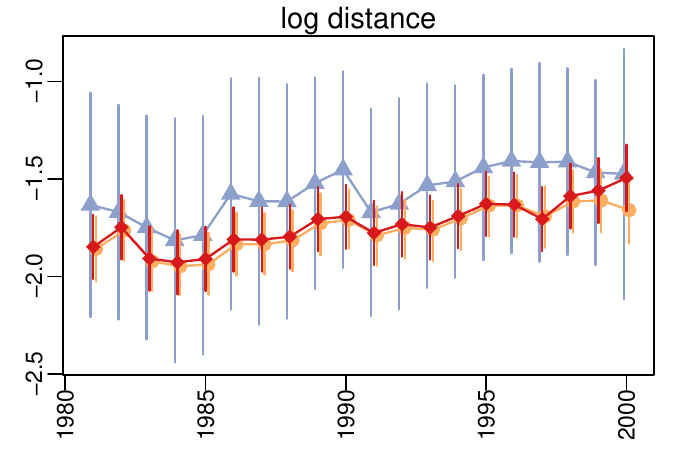}  \\[-2pt]  
 \includegraphics[width=.47\textwidth]{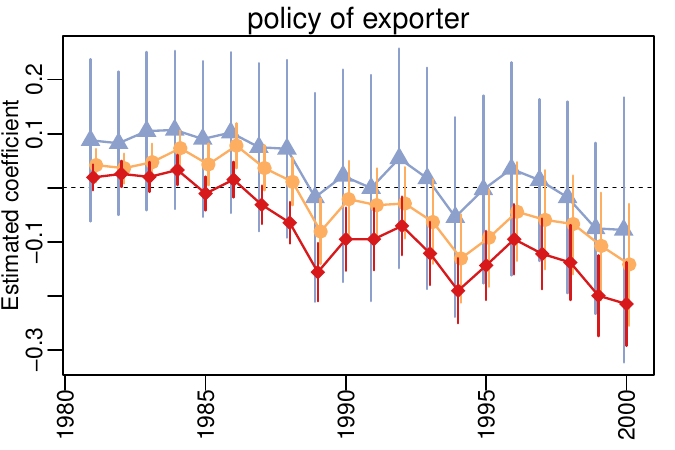}  &  \hspace{-.2in}
 \includegraphics[width=.47\textwidth]{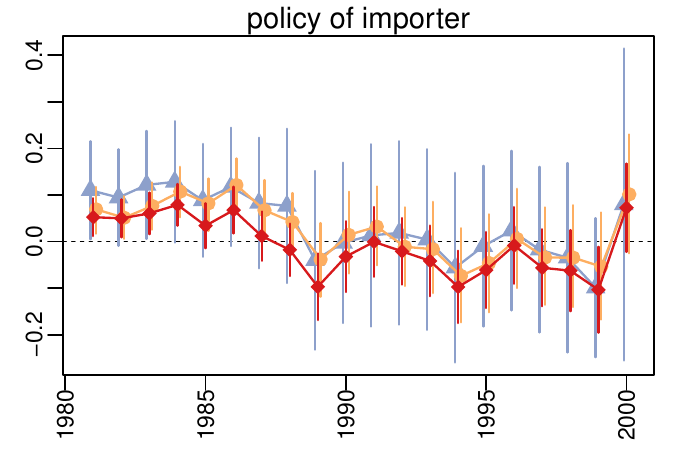}  \\[-2pt]
 \includegraphics[width=.47\textwidth]{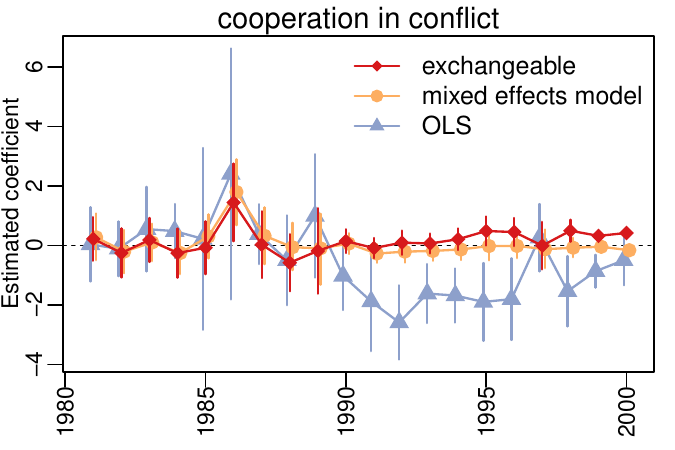}  &  \hspace{-.2in}
 \includegraphics[width=.47\textwidth]{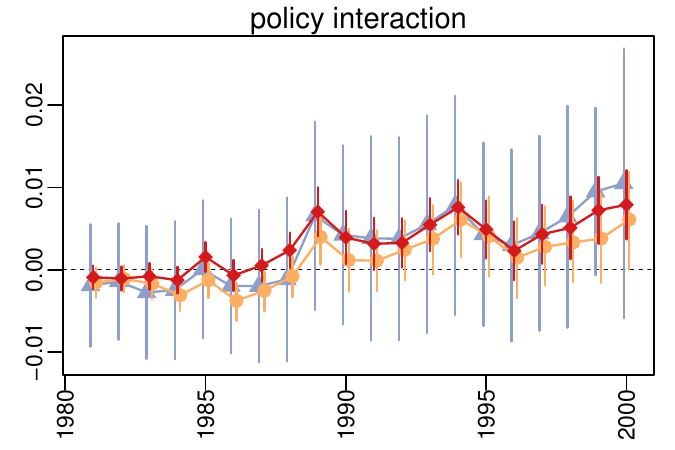}    
    \end{tabular}
        \vspace{-.1in} 
  \caption{Estimated coefficients and 95\% confidence/credible intervals using three different estimation techniques: exchangeable (red/diamonds), mixed effects model (orange/circles), and ordinary least squares with dyadic clustering estimator (``OLS,'' triangles/blue). }
\label{fig:tradecompare}
\end{figure}

\subsection{Prediction study}
To compare the ability of the proposed exchangeable approach and the model from \citet{westveld2011mixed}  -- which we refer to as the hierarchical, longitudinal mixed effects model --  to represent the trade data,
we examined the  out-of-sample predictive performance of the estimators. We detail the computation of predicted values here. 
First, we establish some notation for this section. Recall that the trade data set has covariate measures in $X$ that vary by year. Thus, we rewrite the linear model in \eqref{eq:trade} as
\begin{align*}
    \y_t= \X_t \bbeta_t + \bbxi_t, \hspace{.2in} (t = 1,\ldots, T ),
\end{align*}
where $\y_t$ represents the $n(n-1)$ vector of relations among the $n=58$ countries in year $t$, $\X_t$ is a matrix eight covariates corresponding to year $t$, and $\bbxi_t$ is a vector of errors for year $t$. 

We estimated both models on the first 4 through 19 years of data, and then used these estimates to predict the trade in the following year. To generate predictions from each model, we computed the conditional expectation $E( {\y}_{T} \mid \{\y_t \}_{t=1}^{T-1} )$ based on the assumption that $\y_{T}$ and $( \y_r )$, for $r = 1, \ldots, T-1$, are jointly normal. 
The exchangeable approach and mixed effects model correspond to different models of the variance-covariance matrices of $\bbxi_t$ and the covariances matrices between vectors $\bbxi_t$ and $\bbxi_{t+h}$.  As a baseline, we included ordinary least squares, assuming independence of each year, i.e. $\bbxi_t$ independent of $\bbxi_{t+h}$. 

We first discuss the ordinary least squares estimation and prediction procedure used for this study, as this approach is the simplest. We estimated the coefficients in the prediction time period, $\bbeta_{T}$, with the coefficients from the previous time period and assumed independent and identically distributed entries in all $\bbxi_t$.
Thus, the ordinary least squares estimator of the trade in year $T$ is
\begin{align*}
    E( {\y}_{T}  \mid \{\y_t \}_{t=1}^{T-1} )_{OLS} = \X_{T} \hat{\bbeta}_{T-1}.
\end{align*}

For the exchangeable procedure, we again set $\hat{\bbeta}_{T} = \hat{\bbeta}_{T-1}$. Based on the model underlying the exchangeable procedure (Figure 6(a)), the variance $\var(\y_t) = \bOmega_1$ for all $t =1,\ldots, T$ and the covariance $\text{cov} (\y_t, \y_{t+h}) = \bOmega_2$ for all $h$. Further, it can be shown that the precision corresponding to the concatenated  vector $z_{T-1}^T = (\y_1^T, \y_2^T, \ldots, \y_{T-1}^T)$ is of the same pattern as the variance $\var (z_{T-1})$, which has $\bOmega_1$ along the diagonal blocks and $\bOmega_2$ in the off-diagonal blocks (as in Figure 6(a)). We define the diagonal blocks of $\var (z_{T-1})^{-1}$ as $\bPsi_1$ and the off-diagonal blocks $\bPsi_2$. Then, when the relations $\{\y_t \}_{t=1}^{T}$ are jointly normally distributed, we define the prediction from the exchangeable procedure is
{\small
\begin{align*}
    E( {\y}_{T}  \mid \{\y_t \}_{t=1}^{T-1} )_{E} = \X_{T} \hat{\bbeta}_{T-1} + \bOmega_2 \{ \bPsi_1 + (T-2)\bPsi_2 \}  \sum_{t=1}^{T-1} \left(\y_t - \X_t \hat{\bbeta}_t \right).
\end{align*}
}

Finally, to detail predictions from the mixed effects model, we first review some terms used in \citet{westveld2011mixed}. The proposed model is 
\begin{align*}
    y_{ij, t} &= x_{ij,t}^T \bbeta_t + s_{i,t} + r_{j,t} + g_{ij,t},
\end{align*}
where $s_{i,t}$ and $r_{j,t}$ are sender and receiver random effects, respectively, and $g_{ij,t}$ is a reciprocal random effect. These effects evolve according to autoregressive order one processes, such that 
{\footnotesize
\begin{align*}
    \begin{bmatrix}
    s_{i,t}\\
    r_{i,t}
    \end{bmatrix}
&= 
    \bPhi_{sr}
    \begin{bmatrix}
    s_{i,t-1} \\
    r_{i,t-1}
    \end{bmatrix}
    + \bepsilon_{i,t}, 
    \hspace{.2in} (i = 1,\ldots, T; \ t = 1, \ldots, T-1 ), \\
    \begin{bmatrix}
    g_{ij,t}\\
    g_{ji,t}
    \end{bmatrix} &= 
    \bPhi_{g}
    \begin{bmatrix}
    g_{ij,t-1} \\
    g_{ji,t-1}
    \end{bmatrix}
    + {e}_{ij,t}, 
        \hspace{.2in} (i,j = 1,\ldots, T; \ i \neq j ;\ t = 1, \ldots, T-1 ), 
\end{align*}
}
where $\bPhi_{sr}$ is a $2\times 2$ autoregressive matrix and $\bPhi_{g}$ is a symmetric autoregressive matrix. The 
error terms $\bepsilon_{i,t}$ and ${e}_{ij,t}$ are mean-zero independent bivariate Gaussian random variables. Please see \citet{westveld2011mixed} for more details of the model. With the notation and model defined, and again setting  $\hat{\bbeta}_{T} = \hat{\bbeta}_{T-1}$, algebra reveals that the prediction from the mixed effects model is
{\footnotesize
\begin{align*}
    E(  y_{ij, T}   \mid \{\y_t \}_{t=1}^{T-1} )_{MEM} 
&= x_{ij, T}^T \hat{\bbeta}_{T-1} 
    +
    \begin{bmatrix}
    1 & 0
    \end{bmatrix}^T
            \hat{\bPhi}_{sr}
                \begin{bmatrix}
    \hat{s}_{i,T-1} \\
    \hat{r}_{i,T-1}
    \end{bmatrix} \ldots
    \\
     &\hspace{.25in}+
     \begin{bmatrix}
    0 & 1
    \end{bmatrix}^T
        \hat{\bPhi}_{sr}
    \begin{bmatrix}
    \hat{s}_{j,T-1} \\
    \hat{r}_{j,T-1}
    \end{bmatrix}
    +
        \begin{bmatrix}
    1 & 0
    \end{bmatrix}^T
        \hat{\bPhi}_{g}
    \begin{bmatrix}
    \hat{g}_{ij,T-1} \\
    \hat{g}_{ji,T-1}
    \end{bmatrix}.
\end{align*}
}
To estimate the mixed effects model, \citet{westveld2011mixed} generated a Markov chain of length 55,000, of which the first 10,000 were discarded and every 20th iteration saved, giving 2,250 samples to approximate the posterior distributions of $\bbeta_{T-1}$, $\Phi_{sr}$, $\Phi_{g}$, and the random effects at $t = T-1$ in the above equations.  The mean of these 2,250 samples from the joint posterior distribution, for each pair $i\neq j$, was used to construct $E( \y_{T}  \mid \{\y_t \}_{t=1}^{T-1} )_{MEM} $.

We evaluate performance in the predicted year using the mean square prediction error, defined
\begin{align*}
    MSPE = \frac{1}{n(n-1)} \Big | \Big| E ( {\y}_{T} \mid \{\y_t \}_{t=1}^{T-1} ) - \y_{T} \Big | \Big|^2_2,
\end{align*}
where the expectation is replaced by the expectation of the appropriate prediction estimator (ordinary least squares, exchangeable, or mixed effects).

\section{A test for exchangeability}
\label{app_test}

\subsection{Motivation and details}
In this section we briefly motivate and describe a permutation test for assumption of exchangeability of the errors. For simplicity, we focus on the case where ordinary least squares is used to estimate regression coefficients $\beta$, and where $R=1$.  

If the errors are truly non-exchangeable in a way that impacts inference, then the dyadic clustering estimator should be extreme relative to the distribution of the dyadic clustering estimator under the assumption of exchangeability. The idea of the proposed test is to generate the null (exchangeable) distribution using permutations that mimic the exchangeability assumption.

If the exchangeable assumption is correct, then by definition, a simultaneous permutation of the rows and columns of the error matrix leaves the distribution unchanged. That is, 
\begin{align}
    \P( \xi_{ij} ) = \P( \xi_{\pi(i) \pi(j)} ), \nonumber
\end{align}
where $\P( \xi_{ij}  ) $ denotes the joint distribution of the set $(\xi_{ij})$ over all relations such that $i \neq j$. Although the errors are unobservable, the residuals are observable and can be used to approximate the errors. Thus, the residuals should be approximately exchangeable, such that 
\begin{align}
    \P( e_{ij} ) \approx \P( e_{\pi(i) \pi(j)}  ), \nonumber
\end{align}
where $e_{ij} = y_{ij} - \hat{\beta}^T x_{ij}$ is the residual for relation $(i,j)$. We rewrite the dyadic clustering estimator of the variance $\var(\hat{\beta} \mid X)$ as
\begin{align}
\hat{V}_{DC} = \sum_{(jk, lm) \in \Theta_0} e_{jk} e_{lm} \tilde{x}_{jk} \tilde{x}_{lm}^T,  \label{eq:eq_dc_sum}
\end{align}
where $\Theta_0$ represents the set of pairs of relations $(jk, lm)$ that share an actor and where $\tilde{x}_{jk} = (X^T X)^{-1} x_{jk}$.  Applying the joint exchangeability assumption to the residuals, one may generate an approximate sample of the dyadic clustering estimator under the null assumption of exchangeability
\begin{align}
\tilde{V}_{DC}^{(\pi)} & \approx  \sum_{(jk, lm) \in \Theta_0} e_{\pi(j) \pi(k)} e_{\pi(l) \pi(m)} \tilde{x}_{jk} \tilde{x}_{lm}^T. \label{eq:vdc_test_stat}
\end{align}

We use an $L_2$ norm to evaluate the discrepancy between the observed  dyadic clustering estimator based on the observed residual matrix in \eqref{eq:eq_dc_sum} and the distribution of dyadic clustering estimators under the exchangeable null in \eqref{eq:vdc_test_stat}. The squared distance can be computed using
\begin{align}
    \bar{d}^{(\pi)} = || \hat{V}_{DC} - \bar{V}_{DC}^{(\pi)} ||_2^2. \label{eq:dist_obs}
\end{align}
where $\bar{V}_{DC}^{(\pi)}$ is the element-wise sample mean dyadic clustering estimator of the exchangeable permutations of $\tilde{V}_{DC}^{(\pi)}$ in \eqref{eq:vdc_test_stat}. A distribution of these squared distances under the exchangeable null may be obtained by replacing the dyadic clustering estimator based on the observed residuals with the permuted samples $\tilde{V}_{DC}^{(\pi)}$,  
\begin{align}
    {d}^{(\pi)} = || \tilde{V}_{DC}^{(\pi)} - \bar{V}_{DC}^{(\pi)} ||_2^2. \label{eq:dist_distr}
\end{align}
If $\bar{d}^{(\pi)}$ in \eqref{eq:dist_obs} is extreme relative to the distribution of ${d}^{(\pi)}$ in  \eqref{eq:dist_distr} under the exchangeable null, then this constitutes evidence that the errors are non-exchangeable. We obtain an empirical $p$-value by computing the fraction of the permuted, squared distances ${d}^{(\pi)}$ that are greater than $\bar{d}^{(\pi)}$.

\subsection{Evaluating the test}
We simulated from exchangeable and non-exchangeable relational regression models to assess the ability of our proposed hypothesis testing procedure to detect non-exchangeability. We evaluated the power and size of the test by simulating data from null and alternative models and evaluating the test on each simulated dataset. 

We begin by specifying the null exchangeable model. In addition to an intercept, we include a single relational covariate, the coefficient of which is the target of inference. The model is specified as
\begin{align}
    &y_{ij} = \beta_0 + \beta_1 x_{ij} + \xi_{ij}, \label{eq:sim_null} \\
    &v_i \stackrel{iid}{\sim} N(0, 1), \ i = 1,2,\ldots,n, \nonumber \\
    &x_{ij} = \frac{1}{\sqrt{2}}(v_i - v_j), \nonumber \\
    &w_i \stackrel{iid}{\sim} N(0, 1), \ i = 1,2,\ldots,n, \nonumber \\
    &\epsilon_{ij} = \frac{1}{{2}}(w_i - w_j), \nonumber \\
    & \xi_{ij} = \epsilon_{ij} + \frac{1}{\sqrt{2}}z_{ij}, \nonumber \\
    & z_{ij}\stackrel{iid}{\sim} N(0, 1). \nonumber
\end{align}
Such a model produces true variance parameters $\phi^T = [1, 1/2, 1/4, 1/4, -1/4, 0]$. Further, $\var(x_{ij}) = 1$, such that the variance of the errors and variance of the covariate $x_{ij}$ are equivalent. We take $\beta_0 = \beta_1 = 1$.

The non-exchangeable model we consider can be expressed as follows:
\begin{align}
    &y_{ij} = \beta_0 + \beta_1 x_{ij} + \xi_{ij}, \label{eq:sim_alt} \\
    &v_i \sim N(0, (i/n)^2), \ i = 1,2,\ldots,n, \nonumber \\
    &x_{ij} = \frac{1.7}{\sqrt{2}}(v_i - v_j), \nonumber \\
    &w_i \stackrel{\indep}{\sim} N(0, v_i^2), \ i = 1,2,\ldots,n, \nonumber \\
    &\epsilon_{ij} = 1.7(w_i - w_j), \nonumber \\
    & \xi_{ij} = (\epsilon_{ij} + \frac{1}{\sqrt{2}}z_{ij})/1.6, \nonumber \\
    & z_{ij}\stackrel{iid}{\sim} N(0, 1). \nonumber
\end{align}
Under this model, both $X$ and the errors are  non-exchangeable and the variance of the errors errors depend on values of $X$. The numerical prefactors were chosen such that $\var(x_{ij}) \approx 1$ and $\var(\xi_{ij}) \approx 1$ as in the exchangeable model in \eqref{eq:sim_null}.

To evaluate the power and size of the test, 
we simulated 1,000 datasets from each of the null and alternative models in \eqref{eq:sim_null} and \eqref{eq:sim_alt}, respectively. For each simulated dataset, we evaluated the empirical two-sided $p$-value using 1,000 permutations of the form of \eqref{eq:vdc_test_stat}, and rejected the null hypothesis at the $\alpha = 0.05$ level. 
The test results in Table~\ref{tab:test_size} show approximate size, i.e. the probability of rejecting the null when the null is true, of $0.025$. Under the considered alternative distribution, Table~\ref{tab:test_size} gives the estimate approximate power of about 0.839. Thus, for the given alternative model at moderate relational data size ($n=50$), the proposed permutation test has a good chance of detecting the non-exchangeability in the alternative model, while failing to reject a true null with probability less than $0.05$. 
\begin{table}[h]
    \centering
    \caption{Estimated probability of a Type I error (size) and one less the probability of a Type II error (power) for the proposed permutation test.}
    \label{tab:test_size}
    \begin{tabular}{cc}
    Size & Power \\
    0.025 & 0.839 
    \end{tabular}
\end{table}

%% References bib
%\putbib[main]
%\end{bibunit}

\end{document}